\documentclass{biometrika}

\usepackage{amsmath}
\usepackage{amssymb}
\usepackage{graphicx}
\graphicspath{{./figs/}}
\usepackage{times}
\usepackage{bm}
\usepackage{natbib}

\usepackage[plain,noend]{algorithm2e}
\usepackage{booktabs}

\makeatletter
\renewcommand{\algocf@captiontext}[2]{#1\algocf@typo. \AlCapFnt{}#2} 
\def\@algocf@capt@plain{top}
\renewcommand{\algocf@makecaption}[2]{%
  \addtolength{\hsize}{\algomargin}%
  \sbox\@tempboxa{\algocf@captiontext{#1}{#2}}%
  \ifdim\wd\@tempboxa >\hsize
    \hskip .5\algomargin%
    \parbox[t]{\hsize}{\algocf@captiontext{#1}{#2}}
  \else%
    \global\@minipagefalse%
    \hbox to\hsize{\box\@tempboxa}
  \fi%
  \addtolength{\hsize}{-\algomargin}%
}

\newcommand{\cX}{\mathcal{X}}

\newcommand{\cH}{\mathcal{H}}
\newcommand{\cF}{\mathcal{F}}
\newcommand{\cG}{\mathcal{G}}
\newcommand{\cC}{\mathcal{C}}

\newcommand{\cR}{\mathcal{R}}
\newcommand{\cK}{\mathcal{K}}
\newcommand{\cE}{\mathcal{E}}

\newcommand{\ANOVA}{ANOVA}
\newcommand{\textFDP}{FDP}
\newcommand{\textFDR}{FDR}
\newcommand{\textFWER}{FWER}

\newcommand{\Star}{STAR}
\renewcommand{\star}{STAR}
\newcommand{\adapt}{AdaPT}
\newcommand{\sabha}{SABHA}
\newcommand{\ihw}{IHW}
\newcommand{\bh}{BH}
\newcommand{\DAG}{DAG}
\newcommand{\EM}{EM}
\newcommand{\hFDP}{\widehat{\textnormal{FDP}}}

\newcommand{\htt}{\tau}

\newcommand{\website}{\texttt{https://github.com/lihualei71/STAR}}

\newcommand{\FDP}{\textnormal{FDP}}
\newcommand{\FDR}{\textnormal{FDR}}
\newcommand{\Pow}{\textnormal{Pow}}
\newcommand{\TPR}{\textnormal{TPR}}

\newcommand{\pth}{p_{*}}
\newcommand{\exps}[1]{\exp\lb#1\rb}

\newcommand{\eps}{\epsilon}
\newcommand{\lb}{\left(}
\newcommand{\rb}{\right)}
\newcommand{\td}{\tilde}
\newcommand{\R}{\mathbb{R}}
\newcommand{\E}{\mathbb{E}}
\renewcommand{\P}{\mathbb{P}}

\newcommand{\1}{\mathbf{1}}
\newcommand{\s}{s}
\newcommand{\eqAS}{\overset{a.s.}{=}}

\usepackage{color}

\DeclareMathOperator*{\argmin}{arg\,min}
\DeclareMathOperator*{\argmax}{arg\,max}

\makeatother


\addtolength\topmargin{35pt}

\begin{document}

\jname{Biometrika}


\markboth{L. Lei et~al.}{\textFDR ~control under structural constraints}

\title{STAR: A general interactive framework for \textFDR ~control under structural constraints}

\author{Lihua Lei}
\affil{Departments of Statistics, Stanford University \\ 202 Sequoia Hall, Stanford, CA94305, U.S.A. \email{lihualei@stanford.edu}}

\author{Aaditya Ramdas}
\affil{Department of Statistics and Data Science, Carnegie Mellon University \\ 132H Baker Hall, CMU, Pittsburgh, PA15232, U.S.A. \email{aramdas@cmu.edu}}

\author{William Fithian}
\affil{Departments of Statistics, University of California, Berkeley \\ 301 Evans Hall, UC Berkeley, Berkeley, CA94720, U.S.A. \email{wfithian@berkeley.edu}}

\maketitle

\begin{abstract}
  We propose a general framework based on \emph{selectively traversed accumulation rules} (\star) ~for interactive multiple testing with generic structural constraints on the rejection set. It combines accumulation tests from ordered multiple testing with data-carving ideas from post-selection inference, allowing for highly flexible adaptation to generic structural information. Our procedure defines an interactive protocol for gradually pruning a candidate rejection set, beginning with the set of all hypotheses and shrinking with each step. By restricting the information at each step 
  via a technique we call masking, our protocol enables interaction while controlling the false discovery rate (\textFDR) in finite samples for any data-adaptive update rule that the analyst may choose. We suggest update rules for a variety of applications with complex structural constraints, show that \star ~performs well for problems ranging from convex region detection to \textFDR ~control on directed acyclic graphs, and show how to extend it to regression problems where knockoff statistics are available in lieu of $p$-values.
\end{abstract}

\begin{keywords}
interactive multiple testing, data carving, masking, knockoffs, false discovery rate, accumulation test
\end{keywords}

\section{Introduction}

A classical statistical perspective divides data analysis into two distinct types: exploratory analysis is a flexible and iterative process of searching the data for interesting patterns while only pursuing a loose error guarantee, or none at all, while confirmatory analysis involves performing targeted inferences on questions that were pre-selected for focused study. Selective inference blends exploratory and confirmatory analysis by allowing for inference on questions that may be selected in a data-adaptive way, but most selective inference methods still require the analyst to pre-commit to selection rules before observing the data, falling short of the freewheeling nature of true exploratory analysis. By contrast, interactive methods are a subset of selective inference methods allowing the analyst to react to data, consult her own internal judgment, and revise her models and research plans to adapt to patterns she may not have expected to find, while still achieving valid inferences.

We consider the problem of multiple hypothesis testing with $p$-values $p_1,\ldots,p_n$, with each $p_i$ corresponding to a different null hypothesis $H_i$. A multiple testing method examines the $p$-values, possibly along with additional data, and decides which null hypotheses to reject. Let $\cH_0 = \{i:\; H_i \text{ is true}\}$ denote the set of truly null hypotheses and let $\cR = \{i:\; H_i \text{ is rejected}\}$ denote the rejection set. Then $R = \left|\cR\right|$ is the number of rejections and $V = \left|\cR \cap \cH_0 \right|$ is the number of erroneous rejections. \citet{bh95} defined the false discovery proportion (\textFDP) as $V/\max\{1,R\}$ and famously proposed controlling its expectation, the false discovery rate (\textFDR), at some pre-specified level $\alpha$.

This work proposes a new framework for interactive multiple testing in structured settings with \textFDR ~control, called selectively traversed accumulation rules (\star). Our method is especially well-suited to settings where we wish to impose structural constraints on the set of rejected hypotheses --- for example, to enforce a hierarchy principle for interactions in a regression or analysis of variance (\ANOVA) problem, or to detect a convex spatial region where the signal exceeds a certain level in a signal processing application. In certain cases, enforcing a structural constraint is important for logical coherence or interpretability. For instance, in the case of hierarchical testing  \citep[e.g.][]{yekutieli08} where hypotheses are represented as nodes on a tree, the parent of a rejected hypothesis must be rejected as well. In other cases, the structural constraint serves as a type of side information which reflects the scientific domain knowledge. By using this information judiciously, the statistical power may be boosted if the true signals meet the constraint, exactly or at least approximately.

More formally, our procedure controls the \textFDR ~in finite samples while guaranteeing that the rejection set $\cR\in \cK$ where $\cK$ is a collection of subsets satisfying the constraint. For instance, in the case of hierarchical testing, $\cK$ includes all subsets that correspond to rooted subtrees. The notion of structure is meant in a rather general way: $\cK$ might also depend on auxiliary covariate information $x_i$ about the hypothesis $H_i$. For instance, in the spatial testing setting, $x_i$ gives the geographic location in $\R^{2}$ and $\cK$ may include all subsets that can be written as the intersection of a convex set on $\R^{2}$ with $(x_i)_{i=1}^{n}$. 

\Star ~generalizes the notion of masking used by the adaptive p-value thresholding (\adapt) method of \citet{lei2018adapt}, which achieves its error control guarantee by judiciously limiting the analyst's knowledge about the data. As such, it is natural to view \star ~as an iterative interaction between two agents: the analyst, who drives the search for discoveries based on partial observation of the data, and a hypothetical oracle, who observes the full data set and gradually reveals information to the analyst based on her actions. Typically the oracle is a computer program and the analyst is either a human or an automated adaptive search algorithm based on pre-defined modeling assumptions. In Section~\ref{sec:implementation} we discuss generic strategies for defining good automated rules. 

A key difference between \star ~and most earlier works is that they give the analyst power to enforce structural constraints on the final rejection set. 
Previous works such as the Benjamini-Hochberg (\bh) procedure \citep{bh95}, independent hypothesis weighting (\ihw) \citep{ignatiadis2016data}, structure-adaptive Benjamini-Hochberg algorithm (\sabha) \citep{li2019multiple} and \adapt ~\citep{lei2018adapt} all produce potentially heterogeneous thresholds for $p$-values and then reject all hypotheses whose $p$-values are below their corresponding thresholds. Because any given $p$-value could be above the chosen threshold, none of these methods can enforce structural constraints. 
On the other hand, there are various other algorithms that are tailored to particular structural constraints, e.g. \cite{yekutieli08, lynch16} for hierarchical testing on trees and \cite{lynch16, ramdas2017dagger} for testing on directed acyclic graphs. However, these methods are non-adaptive and non-interactive in the sense that they can neither learn the structural information from data nor incorporate extra side information, and they apply to very specific types of constraints. To our knowledge, our method is the first data adaptive and interactive multiple testing framework that can impose generic structural constraints on the rejection set.

\section{Selectively Traversed Accumulation Rules}\label{sec:STAR}

\subsection{The framework}
We assume that for each hypothesis $H_i$, $i=1,\ldots,n$, the oracle observes a generic covariate $x_i\in \cX$ and a $p$-value $p_i\in [0,1]$; Section~\ref{sec:disc} discusses a generalization to the setting where we have knockoff statistics instead of $p$-values. In addition, the analyst may impose a generic structural constraint $\cK \subseteq 2^{[n]}$ denoting the allowable rejection sets, where $\subseteq$ denotes a subset. We require that $\emptyset \in \cK$.

\star ~proceeds by adaptively generating a sequence of candidate rejection sets $[n] = \cR_0 \supsetneq \cR_1 \supsetneq \cdots$, where $\supsetneq$ denotes a strict superset. At step $t$, the oracle estimates the \textFDP ~of the current rejection set as 
\begin{equation}\label{eq:accumFDP}
\hFDP_t = \frac{h(1) + \sum_{i\in\cR_t}h(p_i)}{1 + |\cR_t|},
\end{equation}
where $h:\; [0,1] \to [0,\infty)$ is non-decreasing and bounded, with $\int_0^1 h(p)\,dp = 1$, for example, $h(p) = 2\cdot \1\{p \geq 0.5\}$. The function $h$ is called an accumulation function, and the estimator~\eqref{eq:accumFDP} is based on the accumulation test of \citet{li2016accumulation}, itself a generalization of procedures proposed by \citet{barber15} and \citet{gsell2016sequential}. Informally, $\sum_{i\in\cR_t}h(p_i)$ plays the role of estimating $V_t = \left|\cR_t \cap \cH_0\right|$.

To allow for the analyst to make data-dependent choices without inflating Type I error, we generalize a technique by \citet{lei2018adapt} called masking. Specifically, for any choice of accumulation function $h$ as described above, we show how to derive a masking function $g$, constructed so that $h(p)$ is mean-independent of $g(p)$ when $p \sim U[0,1]$. \begin{equation}~\label{eq:masking-condition}
\E_{u \sim U[0,1]}[h(u) \mid g(u)] \eqAS \E_{u \sim U[0,1]}[h(u)] = 1.
\end{equation}
For example, if we choose $h(p) = 2\cdot \1\{p \geq 0.5\}$, then we may choose $g(p) = \min\{p,1-p\}$, by observing that for a null uniform $p$-value $p$, we have $\E[h(p)|g(p)]=\E[h(p)]=1$. Section~\ref{subsec:masking} describes a general recipe for constructing such a function $g$. Even though the masking function $g$ is constructed using condition \eqref{eq:masking-condition}, we next show that when null $p$-values are not exactly uniform, the same $g,h$ pairs satisfy a more general property that will be crucial in the proof of \textFDR ~control. The proof is presented in Appendix \ref{subapp:proof_proposition_density}.
\begin{proposition}\label{prop:density}
If the density of a null $p$-value $p$ is non-decreasing and functions $h,g$ are chosen such that condition~\eqref{eq:masking-condition} holds, then we have
\[\E[h(p) \mid g(p)] \geq 1 \,\,\mbox{ almost surely}.\]
\end{proposition}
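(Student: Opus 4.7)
The plan is to reduce the conditional claim under the actual (non-uniform) law of $p$ to a monotone correlation inequality under the uniform law, exploiting the masking condition~\eqref{eq:masking-condition}.

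Let $f$ denote the density of $p$ on $[0,1]$, which is non-decreasing by hypothesis, and write $\E_U$ for expectation under $p \sim U[0,1]$. My first step is a change-of-measure computation: for any bounded measurable test function $\phi$, computing $\E[h(p)\phi(g(p))]$ in two ways---directly under the law of $p$, and as $\E_U[h(p) f(p) \phi(g(p))]$---yields the identity
\begin{equation}\label{eq:plan-ratio}
\E[h(p) \mid g(p) = y] \;=\; \frac{\E_U[h(p)\, f(p) \mid g(p) = y]}{\E_U[f(p) \mid g(p) = y]},
\end{equation}
valid for $y$ in the support of $g(p)$; the denominator is the Radon--Nikodym derivative of the pushforward of the law of $p$ with respect to the pushforward of $U[0,1]$ under $g$, and is therefore a.s.\ positive on that support.

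My second step is a one-dimensional correlation inequality. Conditional on $g(p)=y$ and under the uniform law, $p$ takes values in the fibre $g^{-1}(y) \subseteq [0,1]$, and both $h$ and $f$ are non-decreasing functions of the real-valued argument $p$. Chebyshev's sum inequality---equivalently, the one-dimensional FKG inequality, proved by the observation that if $X,X'$ are i.i.d.\ from any probability measure $\mu$ on $\R$ and $\phi,\psi$ are both non-decreasing, then $(\phi(X)-\phi(X'))(\psi(X)-\psi(X')) \geq 0$, so $\int \phi\psi\, d\mu \geq \int\phi\, d\mu \int\psi\, d\mu$---applied to the conditional law then yields
\[
\E_U[h(p)\, f(p) \mid g(p)=y] \;\geq\; \E_U[h(p) \mid g(p)=y] \cdot \E_U[f(p) \mid g(p)=y] \;=\; \E_U[f(p)\mid g(p)=y],
\]
where the final equality uses~\eqref{eq:masking-condition}. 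Substituting into~\eqref{eq:plan-ratio} gives $\E[h(p)\mid g(p)=y] \geq 1$ almost surely, as claimed.

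The main obstacle is really bookkeeping: justifying~\eqref{eq:plan-ratio} for an arbitrary measurable $g$---where the fibre $g^{-1}(y)$ need not be finite, countable, or in any sense ``nice''---requires a clean disintegration of the uniform law along the fibres of $g$ and a check that $\E_U[f(p)\mid g(p)]$ is a.s.\ positive under the relevant pushforward. There is no conceptual difficulty beyond this; monotonicity of both $h$ and $f$ in a single one-dimensional argument is enough to drive the inequality, and the masking condition supplies exactly the factor needed to collapse the correlation bound to the stated lower bound of $1$.
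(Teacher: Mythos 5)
Your proof is correct, and it takes a genuinely different route from the paper's. The paper fixes an arbitrary positive-measure set $A$ and shows, via the monotone-density bound on the ratio of $\int_x^1 \1_{g^{-1}(A)}f$ to $\int_0^x \1_{g^{-1}(A)}f$, that the true law of $p$ conditioned on $\{g(p)\in A\}$ stochastically dominates the uniform law conditioned on the same event; monotonicity of $h$ then yields $\E_0[h(p)\mid g(p)\in A]\ge \E_u[h(p)\mid g(p)\in A]$, and a final approximation step (showing the exceptional sets $B_\eps$ are null) upgrades this from events $\{g(p)\in A\}$ to the conditional expectation given $g(p)$ itself. You instead change measure to write $\E[h(p)\mid g(p)=y]$ as the ratio $\E_U[h(p)f(p)\mid g(p)=y]/\E_U[f(p)\mid g(p)=y]$ and apply the one-dimensional Chebyshev/FKG correlation inequality to the two non-decreasing functions $h$ and $f$ on each fibre, with the masking condition collapsing the factor $\E_U[h(p)\mid g(p)=y]$ to $1$. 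The two arguments exploit exactly the same two monotonicity hypotheses and are close cousins --- the paper's density-ratio bound is essentially your correlation inequality specialized to the indicator $\1_{[x,1]}$ --- but yours works directly at the level of the disintegration, which buys a cleaner one-line inequality and avoids the paper's limiting argument over $A$ and $\eps$, at the price of invoking a regular conditional distribution of $U[0,1]$ along the fibres of $g$ (the bookkeeping you flag yourself; it is harmless here since everything lives on the Borel space $[0,1]$, and the $\mu$-null sets on which the fibre inequality may fail are also null under the pushforward of the true law, since that law is absolutely continuous with respect to the uniform one). Both proofs are valid.
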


The \star ~protocol works by first revealing all masked $p$-values $g(p_i)$s to the analyst. Then, as the analyst shrinks the rejection set, 
$p$-values that can no longer be rejected get unmasked, meaning that the analyst observes all of the $p$-values $p_i$ for $i \notin \cR_t$. Revealing $g(p)$ to the analyst is an example of data carving \citep{fithian2014optimal}, where a part of a random variable, $g(p_i)$, is used for selection, while the remainder $h(p_i)$ is used for inference. Because these two views of the data are designed to be orthogonal to each other under the null, the masked $p$-values and covariates together, along with prior information, provide guidance to the analyst on how to adaptively shrink the rejection set. Unlike other approaches \citep[e.g.][]{dwork2015reusable}, masking does not introduce extra randomness. This is desirable in scientific research to prevent cheating by specifying a favorable random seed.

At each step $t$, the oracle reports $\hFDP_t$ to the analyst. If $\hFDP_t \leq \alpha$, then the entire procedure halts and $\cR_t$ is rejected. Otherwise, the analyst is responsible to select a smaller rejection set $\cR_{t+1} \subsetneq \cR_t$ using covariates, intuition, and any desired statistical model or procedure, along with the oracle's revealed information, subject to the constraint that $\cR_{t+1} \in \cK$. After the analyst chooses $\cR_{t+1}$, the oracle then re-estimates the \textFDP ~and reveals $p_i$ for $i\in \cR_{t}\setminus \cR_{t+1}$. The analyst may then update their model, prior, constraints or intuition, and the process repeats until $\hFDP_t\le \alpha$.

The update rule from $\cR_{t}$ to $\cR_{t+1}$ is a user-specified sub-routine and should only exploit the information contained in the $\sigma$-field representing all information the analyst is allowed to observe by time $t$:
\begin{equation}\label{eq:Ft}
\cF_{t} = \sigma\lb \{ x_{i}, g(p_{i})\}_{i=1}^{n},\;  (p_{i})_{i\notin \cR_{t}},\; \sum_{i\in \cR_{t}}h(p_{i})\rb.
\end{equation}
For instance, at step $t=0$, the analyst is free to use $\{x_{i}, g(p_{i})\}_{i=1}^{n}$ to train an arbitrary model to estimate which hypotheses are most likely to be true, and choose $\cR_1$ by eliminating the most likely hypothesis subject to the constraint $\cR_1 \in \cK$. We will discuss specific rules tailored to different problems in Sections \ref{sec:convex} and \ref{sec:DAG}. For now, the update rule is any sub-routine that produces $\cR_{t+1}\subsetneq \cR_{t}$, with $\cR_{t+1}\in \cK$, and with its outcome $\cF_{t}$-measurable. Because $\cR_t$ shrinks with each step, revealing more information to the analyst, the $\sigma$-fields form a filtration with $\cF_0 \subseteq \cF_1 \subseteq \cdots$. This can be easily proved using induction; see Lemma \ref{lem:Ft_filtration} in Appendix \ref{subapp:proof_FDR} for details. As a consequence, the information available to the analyst accrues over the time. Algorithm \ref{algo:STAR} summarizes the procedure. 

\begin{algo}\label{algo:STAR}
  \Star
  \begin{tabbing}
  \quad \textbf{Input: }Predictors and $p$-values $(x_{i}, p_{i})_{i=1}^{n}$, constraint set $\cK$, target \textFDR ~level $\alpha$.\\
  \quad $\cR_{0} = [n]$\\
  \quad While $\cR_{t}\not= \emptyset$\\
      \quad \qquad\enspace $\hFDP_{t}\gets \frac{1}{1 + |\cR_{t}|}\left\{ h(1) + \sum_{i\in \cR_{t}}h(p_{i})\right\}$\\
      \quad \qquad \enspace If $\hFDP_{t} \le \alpha$ or $\cR_t = \emptyset$\\
      \quad \qquad \qquad\enspace Stop and return $\cR_t$, and reject $\{H_{i}: i\in \cR_{t}\}$\\
      \quad \qquad\enspace Select $\cF_t$-measurable $\cR_{t+1}\subsetneq\cR_{t}$ with $\cR_{t+1}\in \cK$\\
  \quad Output $\cR_{t}$ as the rejection set
  \end{tabbing}
\end{algo}

\begin{remark}
  In some applications, the analyst may want to choose the structural constraint $\cK$ after looking at data. Indeed, \star ~allows $\cK$ to be chosen based on $\cF_{0}$ and even to vary with $t$. Likewise, there is no requirement that $\cR_t \in \cK$ for every $t$, provided we require $\cR_t \in \cK$ as an additional requirement for stopping the algorithm. We discuss these details in Appendix \ref{subapp:Kt} to avoid extra complication.
\end{remark}

The mechanisms used by \star ~to enforce structural constraints and to learn structural information are different. The former is enabled by the fact that $\cR_{t+1}$ can be updated without relying on the size of $p$-values, in contrast to \bh -type algorithms, while the latter is enabled by masking functions which provide a partial view of data without sacrificing validity.

\Star ~can be viewed as a generalization of the ordered multiple testing setting of \citet{li2016accumulation}, in which a full pre-ordering of hypotheses based on outside data or prior knowledge must be supplied as an input to the analysis, and with a low-quality pre-ordering the method may be powerless, as shown in \citet{li2016accumulation} and \citet{lei2016power}. 
By contrast, for our method, a pre-ordering is just one potential source of side information that may or may not be available in any given case; other possibilities include spatial structure, covariate information, or a partial ordering from a directed acyclic graph (\DAG). Our method then determines a data-adaptive ordering using interactive guidance from the scientist, or an algorithm acting on their behalf. This interactive ordering respects any constraints the analyst has imposed, and using both the side information and the masked $p$-values $g(p_i)$. This interactive ordering, which is the counterpart to accumulation tests' pre-ordering, is simply the order in which the scientist/algorithm decides to peel off unpromising hypotheses from the candidate rejection set, based on masked $p$-values and prior information. If the ordering is pre-specified before seeing the data and masked $p$-values are ignored entirely, and there is no interaction, then our method reduces to an accumulation test, with $H_n$ peeled off first, then $H_{n-1}$, and so on.

The flexibility of our method is enabled by carving the $p$-value into two parts $g(p)$ and $h(p)$, where the first is used to adaptively determine the ordering, and the second part is used for controlling the \textFDR. Our use of the masked $p$-values is a selective-inference free lunch: compared to accumulation tests, we are using the same $h(p_i)$ values in the same way to estimate the \textFDP; we have brought more information  $g(p_i)$ to bear on guiding our methodology without inflating the \textFDR ~or requiring any further correction. In other words, accumulation tests can be thought of as also calculating $g(p_i)$ values and then simply discarding them. On the other hand, the masked $p$-values can be highly informative about which hypotheses are non-null. For example, to observe that $g(p_{i}) = \min\{p_{i}, 1 - p_{i}\} = 10^{-8}$ is extremely suggestive (a) that $H_{i}$ is most likely false, (b) that $h(p_{i}) = 2I(p_{i} > 0.5)$ is most likely 0, and possibly (c) that other hypotheses $H_{j}$ with nearby spatial locations or covariate values to $H_{i}$ are more likely to be false as well and have $h(p_{j}) \approx 0$. More generally, by examining in aggregate all of the masked $p$-values, and most of the unmasked ones too in later stages of the procedure, we can learn areas of the covariate space with many non-null hypotheses and focus the power on those by placing them closer to the front of the list, that is, by peeling them off last. When the true signals do not meet the constraint, the enforced constraint narrows down potential rejection sets and may affect power negatively compared to algorithms which do not enforce the constraint; however, the data-adaptive exploration made possible by masking allows users to learn the structure that could compensate for the power loss at no cost of false discoveries. In fact, the theory and algorithm would be exactly identical if the prespecified $\cK$ is replaced by a data-dependent constraint $\cK_t$, as long as $\cK_t$ is predictable, that is $\cK_t$ is $\cF_{t-1}$-measurable.

In Appendix \ref{app:asymptotics}, we conduct an asymptotic analysis under a slightly more general framework of \cite{li2016accumulation} to quantify the benefit of using masking functions. In a nutshell, in the absence of an informative pre-ordering, the accumulation test is powerless while the masking functions have some power in most practical cases. Moreover, even when an informative pre-ordering is available, we can still improve the power further by combining it with the masked $p$-values to obtain an even better ordering.

\subsection{False discovery rate control}

Intuitively, the quantity $\sum_{i\in\cR_t} h(p_i)$ is a conservative estimator of $V_t = |\cH_0 \cap \cR_t|$, the number of false rejections we would incur if we rejected the set $\cR_t$: each null hypothesis in $\cR_t$ contributes at least 1 in expectation, and the non-null hypotheses contribute a non-negative amount. Hence $|\cR_t|^{-1}\sum_{i\in\cR_t}h(p_i)$ can be interpreted as an upwardly biased estimator of the \textFDP ~of rejection set $\cR_t$. With the correction term in equation \eqref{eq:accumFDP}, we can prove that Algorithm~\ref{algo:STAR} controls \textFDR ~in finite samples however the analyst update the rejection set $\cR_{t}$, under certain regularity conditions on the joint distribution of $p$-values.

\begin{theorem}\label{thm:main}
Assume that 
\begin{enumerate}[\textbf{A}1]
\item the null $p$-values $(p_i)_{i\in \cH_0}$ are mutually independent, and independent of the non-nulls $(p_i)_{i\notin \cH_0}$, conditional on the covariates $(x_i)_{i=1}^n$;
\item each null $p$-value has a non-decreasing density, which may differ across $p$-values.
\end{enumerate}
If the accumulation function $h$ is non-decreasing, and the masking function $g$ satisfies condition~\eqref{eq:masking-condition}, then, conditional on $\{x_i, g(p_i)\}_{i=1}^n$, \star ~controls the \textFDR ~at level $\alpha$. Hence, unconditionally, the set $\cR_\tau$ chosen using algorithm \ref{algo:STAR} satisfies $\E\{\FDP(\cR_\tau)\} \leq \alpha$.
\end{theorem}

Assumption 1 is common in the multiple testing literature. Assumption 2 strengthens the usual assumption that a null $p$-value is stochastically larger than uniform, but is significantly weaker than assuming exact uniformity. It also strengthens the mirror-conservatism proposed in \citep{lei2018adapt}. This theorem is proved via a martingale argument, which is elaborated in Appendix \ref{subapp:proof_FDR}.

\begin{remark}\label{rem:bounded-h}
  Following \citet{li2016accumulation}, we could allow for $h$ to be unbounded and replace $h(1)$ in \eqref{eq:accumFDP} with a constant $C>0$, and halt when $\hFDP_t$ is below a corrected level:
\[
\frac{C + \sum_{i\in\cR_t}h(p_i)}{1 + |\cR_t|} ~\leq~ \alpha\int_0^1 \{h(p) \wedge C\}\,dp.
\]
However, one can show that replacing such an unbounded accumulation function by its bounded counterpart $h^C(p) = \{h(p) \wedge C\}/\int_0^1 \{h(p) \wedge C\}\,dp$ results in a strictly more powerful procedure. Thus, we assume without loss of generality that $h$ is bounded.
\end{remark}

\subsection{Masking functions}\label{subsec:masking}

We now give a recipe to construct a masking function $g$ for a generic accumulation rule $h$:

\begin{theorem}\label{thm:gp}
  Let $h:\;[0,1]\to [0,\infty)$ be any non-negative non-decreasing function with $\int_{0}^{1} h(p) = 1$, and let
\[H(p) = \int_{0}^{p}\{h(x) - 1\}\, dx.\]
Then, we have the following two conclusions: 
\begin{enumerate}[(i)]
\item There exists a continuous and strictly decreasing function $s(p)$ which is also differentiable except on a set of zero measure, namely a Lebesgue null set, such that 
\begin{equation}\label{eq:integral_equation}
H(s(p)) = H(p), \quad s(0) = 1, s(1) = 0.
\end{equation}
\item The masking function
$g(p) = \min\{p, \s(p)\}$
satisfies condition~\eqref{eq:masking-condition}.
\end{enumerate}
\end{theorem}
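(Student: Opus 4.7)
The plan is to exploit the convexity of $H$: since $h$ is non-decreasing, $H'(p)=h(p)-1$ is non-decreasing, so $H$ is convex on $[0,1]$ with $H(0)=H(1)=0$. Hence $H$ attains its minimum on a possibly degenerate plateau $[a,b]$ (where $h\equiv 1$, or $a=b$ if $h$ jumps across the value $1$), is strictly decreasing on $[0,a]$, and strictly increasing on $[b,1]$. For every level $c\in(H_{\min},0]$ the set $\{H=c\}$ therefore contains exactly one point on each side of $[a,b]$, and $s(p)$ should be the unique partner on the opposite side of $p$ with the same $H$-value.

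For part (i), I would define $s$ piecewise: on $[0,a]$ let $s(p)$ be the unique point in $[b,1]$ with $H(s(p))=H(p)$; on $[b,1]$, symmetrically; and on the plateau $[a,b]$, take the affine reflection $s(p)=a+b-p$. Continuity at $p=a,b$ holds because $H$ is strictly monotonic just outside $[a,b]$ and constant on it, so the three pieces glue together. Strict monotonicity of $H$ off the plateau forces strict decrease of $s$ there, and the reflection handles the plateau. Differentiability off a null set follows from implicit differentiation of $H(s(p))=H(p)$, yielding $s'(p)=(h(p)-1)/(h(s(p))-1)$, which is well-defined wherever $h$ is continuous at both $p$ and $s(p)$; since a monotone function has at most countably many discontinuities, the exceptional set is Lebesgue null.

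For part (ii), I would verify condition~\eqref{eq:masking-condition} by proving $\E[h(p)\mathbf{1}\{g(p)\in A\}]=\P(g(p)\in A)$ for any measurable $A\subseteq[0,p^{*}]$, where $p^{*}=(a+b)/2$ is the unique fixed point of $s$. Splitting the integral over $[0,p^{*}]$ (where $g(p)=p$) and $(p^{*},1]$ (where $g(p)=s(p)$), and applying the change of variables $q=s(p)$ to the second piece using $dp=dq/(-s'(s(q)))$ together with the involution $s(s(q))=q$, both sides reduce to the same integral $\int_A (h(s(q))-h(q))/(h(s(q))-1)\,dq$ after substituting the derivative formula from part (i). This gives $\E[h(p)\mid g(p)]=1$ almost surely, which is exactly condition~\eqref{eq:masking-condition}.

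The main obstacle is constructing $s$ cleanly when $h$ has a genuine plateau at value $1$: there the equation $H(s(p))=H(p)$ alone pins $s$ down only up to arbitrary permutations of $[a,b]$, and the affine reflection is the unique continuous, strictly decreasing choice that patches across the plateau. A secondary subtlety is that $s'$ fails to exist at the countably many discontinuities of $h$ and at the plateau endpoints, but these form a Lebesgue null set, and on the plateau itself both sides of the algebraic identity in (ii) trivially balance because $h(p)=h(s(p))=1$.
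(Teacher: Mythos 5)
Your proposal is correct and follows essentially the same route as the paper: define $s$ by matching $H$-values across the minimum of the convex function $H$ (with an affine reflection on any plateau where $h\equiv 1$), derive $s'(p)=(h(p)-1)/(h(s(p))-1)$ away from the countable discontinuity set of $h$, and use that formula to check that the two preimages of each $q=g(p)$ contribute weights making $\E[h(p)\mid g(p)]=1$. Your integrated change-of-variables verification of (ii) is just the paper's pointwise two-atom conditional-expectation computation in disguise, with the same Jacobian weights.
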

This theorem is proved in Appendix~\ref{subapp:proof_theorem_gp}. The aforementioned function $\s(p)$ can be obtained numerically by a quick binary search whenever $H(\cdot)$ can be computed efficiently. All accumulation functions used in this paper satisfy the conditions of the above theorem and thus their associated $s(p)$ is almost surely differentiable. Letting $\pth$ denote the unique solution to $s(\pth)=\pth$, for any $q<\pth$ the set $\{p: g(p) = q\}$ contains exactly two points and hence such a $g(p)$ is very informative because $g(p)$ only masks $1$ bit of information. For instance, if $h(p) = 2I(p\ge 0.5)$, then it is easy to show from Theorem \ref{thm:gp} that $g(p) = \min\{p, 1 - p\}$. Then $g(p) = 0.01$ implies that $p = 0.01$ or $p = 0.99$ and the analyst just needs to make a guess from two candidate values. 

For brevity, we will focus throughout the main text on the simple accumulation function $h(p) = 2I(p\ge 0.5)$ with masking function $g(p) = \min\{p, 1-p\}$; we have found that this simple choice works reasonably well in all settings we have tried. We provide several other examples in Appendix \ref{subapp:masking_functions} and examine their performance in Appendix \ref{sec:more}.

\section{Implementation}\label{sec:implementation}

\subsection{Guidance to update rejection sets}\label{subsec:guidance}

Theorem~\ref{thm:main} showed that the \textFDR ~is controlled no matter how the analyst chooses $\cR_{t+1}$ based on $\cF_t$; however, having a good update rule will be vital to operationalizing \star ~in any given context. Still, we recommend that any human-in-the-loop interactions be grounded in principled data analysis. For example, our method is to some degree susceptible to the same free-rider dynamic as other data-adaptive multiple testing procedures like AdaPT and knockoffs: namely, if we find 100 strong signals in one part of the data set, we might be able to throw in two or three more favorite hypotheses from elsewhere without threatening false discovery rate control. Well-chosen and principled constraints on the rejection set can serve as a safeguard against this behavior, which is epistemically problematic even if not formally disallowed. In addition, too-frequent looks at the data may tempt users to look back and modify their previous rejection sets; such actions are prohibited by Theorem~\ref{thm:main} and would break the theoretical guarantee.

In general, we can describe an update rule in three steps: (1) find all candidate sets of hypotheses that we can peel off from $\cR_t$ without leaving the constraint set $\cK$; (2) compute a score using all the information in $\cF_{t}$ that measures the likelihood that each candidate set has non-nulls; and (3) delete the candidate set with the worst score. We provide a flowchart in Section \ref{subapp:flowchart} summarizing the pipeline schematically. The inclusion of candidate set marks the fundamental difference between \star ~and \adapt ~because the essential candidate sets of the latter are simply all remaining hypotheses while the former operates in a more greedy way.

As a concrete example, suppose that the hypotheses correspond to vertices of a tree  and one aims at detecting a rooted subtree of signals. Then the deletion candidates are all hypotheses on leaf nodes of the subtree given by $\cR_{t}$ because deletion of any leaf node does not change the rooted subtree structure of the rejection set. 

The next step is to compute a score for each candidate. Heuristically, the score should be highly correlated to the $p$-values. As discussed in Section \ref{sec:STAR}, the most straightforward score for the $i$-th hypothesis is $g(p_{i})$. We refer to it as the canonical score. For candidate sets that contains multiple hypotheses, we define the canonical score as the average of $g(p_i)$'s. A larger canonical score gives stronger evidence that the candidate set is mostly null. Although the canonical score is straightforward to use, the user is allowed to fit any model using the covariates and the partially-masked $p$-values. Thus, one can estimate the signal strength, or a posterior probability of being null, as the score: we call this a model-assisted score. 

Finally, given the score, it is natural to remove the least favorable candidate. For instance, when using the canonical score, the candidate with largest score will be removed. On the other hand, if the score measures the signal strength or the likelihood of being non-null, the candidate with smallest score will be removed.

Our principle here can be summarized as follows: given a working model or belief about the data generating process, we have a generic EM-algorithm based pipeline that incorporates it into our method to produce scores yielding the adaptive ordering. The researcher can always incorporate their favorite data-generating model into our method, improving power if their model is correct/good, but never violating the \textFDR ~control if their model is inaccurate.

\subsection{Conditional one-group model as the working model}\label{subsec:one-group}

Although the canonical scores are effective in many cases as will be shown in Section \ref{sec:DAG}, they do not fully exploit covariate information, apart from enforcing the constraint. For instance, when the hypotheses are arranged spatially, we may expect the non-nulls will concentrate on a few clusters, and/or that the signal strength will be smooth on the underlying space. This prior knowledge may neither be reflected directly from the $p$-values nor be explicitly used to strengthen the $p$-values; instead, we can use a working model to assist calculating the scores. We emphasize that no matter how misspecified our working model is, the \textFDR~is still controlled.

\cite{lei2018adapt} proposed a conditional two-group model and used the estimated local \textFDR ~as model-assisted scores. They proved in their Theorem 2 that the local \textFDR ~gives the optimal score to order hypotheses. The model can be fitted by an expectation-maximization (\EM) algorithm \citep[Appendix A of ][]{lei2018adapt}. 

Despite the approximate optimality of the above approach, the \EM ~algorithm for conditional two-group models is computationally intensive because it fits two separate models for the proportion and the signal strength of non-nulls at each iteration. Additionally, \cite{lei2018adapt} pointed out in their Appendix A.3 some instability issues of the algorithm. For these reasons, we proposed a conditional \textit{one}-group model as an alternative:
\begin{align}
  p_i \sim f(p; \mu_{i})\quad\text{with } \eta(\mu_i) = \beta'\phi(x_i),\label{eq:onegroup_GLM}
\end{align}
where $f$ is the density function, $\eta$ is the link function and $\phi$ is an arbitrary featurization. As will be detailed in Appendix \ref{subapp:EM}, $\mu_{i}$ can be estimated via an \EM ~algorithm. The model-assisted scores are then given by the estimates $\hat{\mu}_{i}$. Although $\hat{\mu}_{i}$'s lose the optimality guarantee, they provide good proxy of how promising each hypothesis is. On the other hand, \eqref{eq:onegroup_GLM} is easier to fit as it only involves one set of parameters. At step $t$, the algorithm only needs to impute the masked $p$-values. Thus it is computationally more efficient and partly solves the issue raised in \cite{lei2018adapt}. 

Finally, as discussed in \cite{lei2018adapt}, $p$-values may not be the objects that are most amenable to modeling, in which case one can either model transformed $p$-values or directly model the data used to produce them. For instance, in many applications, z-values are available and one-sided $p$-values are obtained by the transformation $p_{i} = 1 - \Phi(z_{i})$, where $\Phi$ is the distribution function of a standard Gaussian. In this case, we can directly model $z_{i}$ as $z_{i}\sim N(\mu_{i}, 1)$.

\section{Example 1: convex region detection}\label{sec:convex}
\subsection{Problem Setup}
In some applications, the $p$-values may be associated with features $x_{i}\in \mathcal{X} \subseteq \R^d$, which encode some contextual information such as predictor variables or a spatial location, and which may be associated with the underlying signal. We may wish to use this feature information to discover regions of the feature space where the signal is relatively strong; for example, \citet{drevelegas10} seek a convex region to locate the boundary of tumors.

As a concrete mathematical example, suppose that for each point $x_i$ on a regular spatial grid, we observe an independent observation $z_i \sim N(f(x_i), 1)$ for some non-negative function $f$, and we hope to discover the region $\cC = \{x:\; f(x) > 0\}$, where we have some prior belief that $\cC$ is convex; for example, if $f$ is known to be a concave function, then $\cC$ is a superlevel set and is hence convex. Since we cannot expect to perfectly find $\cC$, we may hope to discover a smaller region $\widehat{\cC}$ which is mostly contained within $\cC$, that is, 
\begin{equation}\label{eq:convFDR}
\frac{\text{Vol}(\widehat{\cC} \cap \cC^c)}{\text{Vol}(\widehat{\cC})} \leq \alpha.
\end{equation}

We can frame the above as a multiple testing problem by computing a one-sided $p$-value $p_i = 1-\Phi(z_i)$ for each $H_i$ and constraining the rejection set to be of the form $\cR = \{i:\; x_i \in \widehat{\cC}\}$, for some convex set $\widehat{\cC}$, leading to a constraint $\cK$ on the allowable rejection sets. If the grid is relatively fine, then \textFDR ~is a natural error criterion to control since the \textFDP ~of $\cR$ approximates criterion \eqref{eq:convFDR}.

As another application in supervised learning, we may observe a pair of features $x_i$ and response $y_i$ for $i=1,\ldots,n$, and hope to find a subregion of the feature space $\cX$ where the $y$ values tend to be relatively large. In bump-hunting \citep{friedman99}, we seek to discover a rectangle in predictor space; Appendix~\ref{sec:bump} discusses this application.

More generally, we may want to discover a set of hypotheses $\cR = \{i:\; i\in \widehat{\cC}\}$ where $\widehat{\cC}$ is convex, or is a rectangle, or satisfies some other geometric property. To the best of our knowledge, no previously existing procedure can solve the above problems while guaranteeing \textFDR ~control. To implement these goals using the \star ~framework, we need only define the procedure-specific functions elaborated in Section \ref{sec:implementation}. To illustrate, we first consider an example where $x_{i}\in \R^{2}$. 

\subsection{Procedure}\label{subsubsec:convex_proc}
To preserve convexity, we consider an automated procedure that gradually peels off the boundaries of the point cloud $\{x_{i}\}$. At each iteration, we choose a direction $\theta\in [0, 2\pi)$ and a small constant $\delta$, and peel off a proportion $\delta$ of points that are farthest along this direction.

Specifically, for each angle $\theta$, we define a candidate set $C(\theta; \delta)$ to be observed as the set of indices corresponding to the $\delta$-proportion of points that are farthest along the direction $\theta$. 

If the goal is to detect an axis-parallel box, $\theta$ can be restricted to only take values in $\{0, \frac{\pi}{2}, \pi, \frac{3\pi}{2}\}$; otherwise we set $\theta$ to an equi-spaced grid on $[0, 2\pi]$ of length 100. Given a score $S_{i}$ for each hypothesis, which we soon define, we may evaluate the signal strength of each candidate set by the average of $S_{i}$. Then we may update $\cR_{t}$ as  
\[\cR_{t} := \cR_{t-1}\setminus C(\hat{\theta}; \delta), \quad\mbox{with}\quad \hat{\theta} := \argmin_{\theta}\mathrm{Avg}\{S_{i}: i\in C(\theta; \delta)\}.
\]
Finally, to define the score $S_{i}$'s, we can directly use the canonical score $g(p_{i})$. However, in most problems of this type where $x_{i}$ represents the location in some continuous space, it is reasonable to assume that the distributions of $p$-values are smoothly varying. In particular, we use the conditional one-group model on $p$-values with Beta distribution, i.e. $p_i \sim \mathrm{Beta}(1 / \mu_i, 1)$, as the working model and fit a generalized additive model \citep{hastie90} using the the smooth spline basis of $x_{i}$ as the featurization $\phi$ defined in \eqref{eq:onegroup_GLM}. This working model is motivated by the conditional two-group Gamma generalized linear model in \cite{lei2018adapt}.

\subsection{Simulation results}\label{subsubsec:convex_simulation}
We consider an artificial dataset where the predictors form an equi-spaced $50\times 50$ grid in the area $[-100, 100]\times [-100, 100]$. 
Let $\cC_{0}$ be a convex set on $\R^{2}$ and set $\cH_{0}^{c} = \{x_{i}: x_{i}\in \cC_{0}\}$. We generate $p$-values i.i.d. from a one-sided normal test, i.e. 
\begin{equation}\label{eq:simul_pvals}
p_{i} = 1 - \Phi(z_{i}), \quad \mbox{and} \quad z_{i}\sim N(\mu, 1),
\end{equation}
where $\Phi$ is the cumulative distribution function of $N(0, 1)$. For $i\in \cH_{0}$ we set $\mu = 0$ and for $i\not\in \cH_{0}$ we set $\mu = 2$. Figure~\ref{fig:convex_expr_truth} shows three types of $\cC_{0}$ that we conduct tests on.

\begin{figure}
  \centering
  \includegraphics[width = 0.75\textwidth]{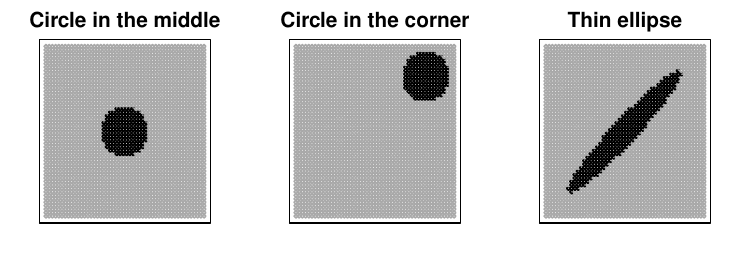}
  \caption{True underlying signal for our convex region detection simulation. Each point represents a hypothesis, 2500 in total, with black representing non-nulls with $\mu = 2$.}\label{fig:convex_expr_truth}
\end{figure}

Although our procedure is the only one that is able to enforce the convexity, it is still illuminating to compare it with other procedures to examine the power. In particular, we consider the \bh ~procedure \citep{bh95} and \adapt ~\citep{lei2018adapt}. We implement \star ~with the model-assisted score based on the generalized additive model. 

For each procedure and level $\alpha$, we calculate the \textFDP ~and the power as
\begin{align}\label{eq:defs}
\mathrm{FDP}(\alpha) &= \frac{|\cR(\alpha)\cap \cH_{0}|}{|\cR(\alpha)|}, \quad \mathrm{power}(\alpha) ~= \frac{|\cR(\alpha)\cap \cH_{0}^{c}|}{|\cH_{0}^{c}|},
\end{align}
where $\cR(\alpha)$ is the rejection set at level $\alpha$. We then estimate the \textFDR ~and the power by the average of $\mathrm{FDP}(\alpha)$ and $\mathrm{power}(\alpha)$ over 100 sets of independently generated $p$-values. The results are plotted in Figure \ref{fig:rej_convex_methods} for a list of $\alpha$'s from 0.01 to 0.3. We see that our method is comparable to \adapt, and more powerful than the \bh ~procedure, neither of which enforce the convexity constraint. Furthermore, it is worth mentioning that the model-assisted \star ~achieves high power despite using a generic and misspecified generalized additive beta model for the $p$-values.

\begin{figure}[h]
  \centering
  \includegraphics[width = 0.7\textwidth]{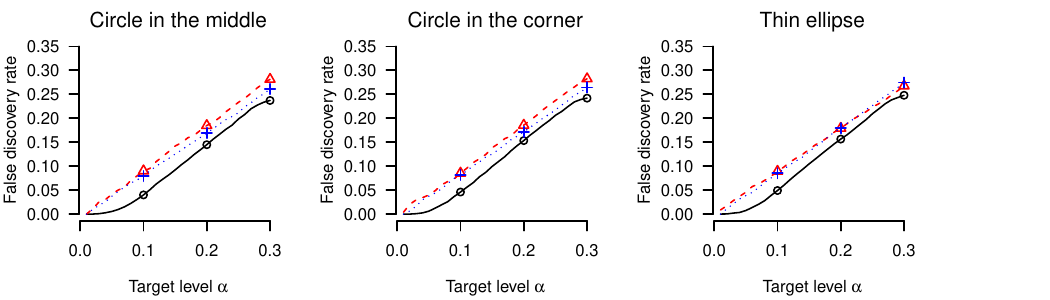}
  \includegraphics[width = 0.7\textwidth]{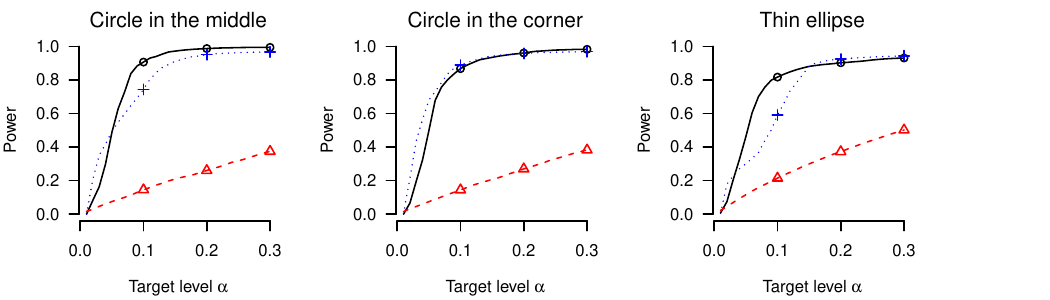}
  \caption{Comparing our method (black solid), \bh ~procedure (red dashed) and \adapt ~(blue dotted) for convex region detection.}\label{fig:rej_convex_methods}
\end{figure}

The power gain over the \bh ~procedure is in part from the fact that the convexity constraint reflects the truth and our method implicitly builds it into the selection. It is also driven by the effective learning of underlying spatial structure. To see that, 
we plot $\hat{\mu}(x)$ in Figure \ref{fig:convex_expr_score}. The top panel shows the initial score that only uses partially masked $p$-values $g(p_i)$ and $x_i$. The bottom panel shows the oracle result when fitting the model to fully observed $p$-values. It is surprising that even the initial estimate is good enough to clearly show the contour of the non-nulls, and is nearly as good as the final estimate using fully observed $p$-values; in fact, the correlation between the two estimates is above 0.98 in all three cases. This explains why our method can accurately pinpoint the non-nulls and hence enhance the power.

\begin{figure}[t]
  \centering
  \includegraphics[width = 0.75\textwidth]{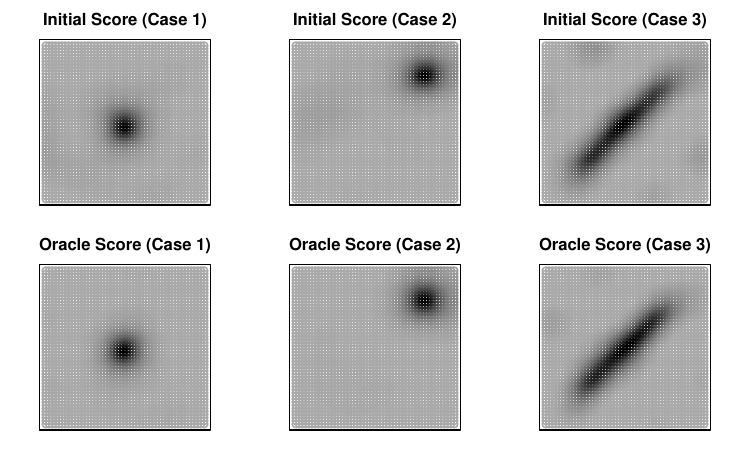}
  \caption{Model-assisted score of \star. Darker pixels represents higher score.}\label{fig:convex_expr_score}
\end{figure}

\section{Example 2: testing on directed acyclic graphs}\label{sec:DAG}
\subsection{Setup and Procedure}\label{subsubsec:DAG_proc}
Another case involves hypotheses arranged on a \DAG ~and the non-nulls are known apriori to satisfy the heredity principle. The strong or weak heredity principle, also referred to as effect hierarchical principle~\citep{wu00}, states that an effect is significant only if all or one of its parent effects are significant. An application on variable selection in factorial experiments under heredity principle is discussed in Section \ref{sec:factorial}.

Multiple testing on \DAG s has extensive applications in genomics \citep[e.g.,][]{goeman08, saunders14, meijer15} and clinical trials \citep[e.g.,][]{dmitrienko13}. However, most prior work deals with the family-wise error rate (\textFWER) control, but the setting of \textFDR ~control is relatively under-studied. To the best of our knowledge, the only existing \textFDR ~control procedures for \DAG s were proposed in Gavin Lynch's thesis \citep{lynch14} and in the sequential setting~\citep{ramdas2017dagger} and the multi-layer setting~\citep{ramdas2017unified}. All aforementioned works were designed for the strong heredity principle. 

It is straightforward to apply our method to guarantee \textFDR ~control under both the strong and the weak heredity principle. For the strong heredity principle, we select the candidates as all the leaf nodes. For the weak heredity principle, we select the candidates as all nodes by removing which the remaining graph satisfies the principle. We also present an application to a factorial experiment in Appendix \ref{sec:factorial}.

\subsection{Simulation results}

\begin{figure}[t]
  \centering
  \includegraphics[width = 0.85\textwidth]{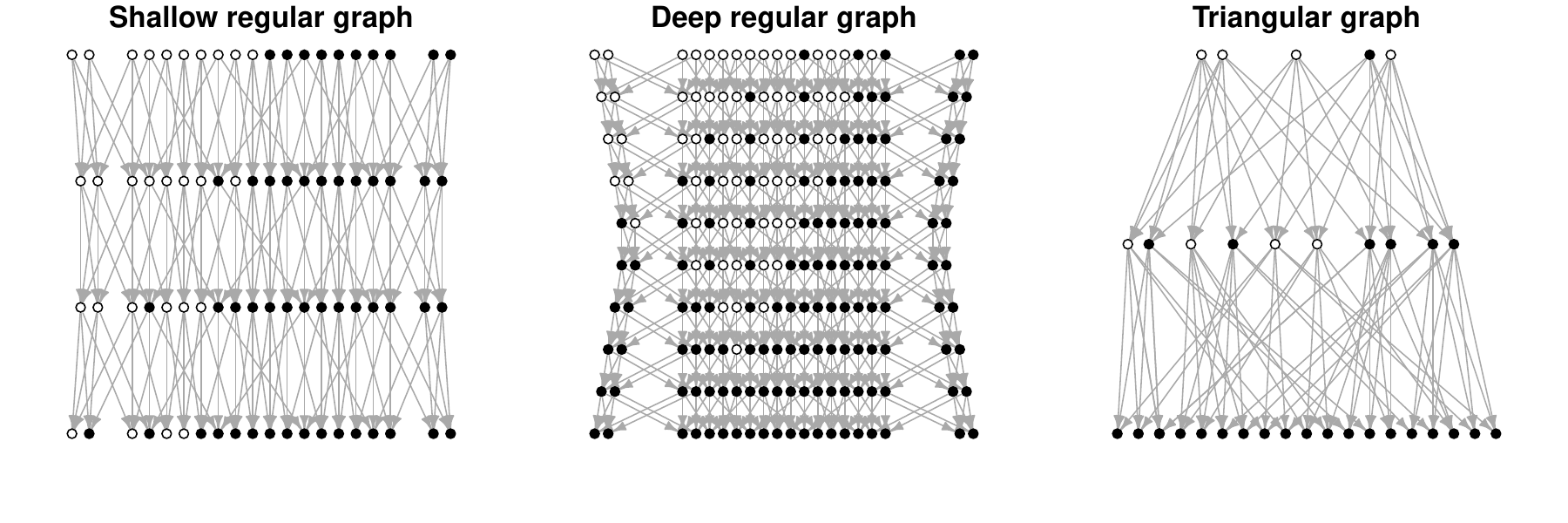}
  \caption{Three types of graphs in simulation studies, with fewer nodes for illustration. Each node represents a hypothesis, 1000 in total, with black ones being the nulls.}\label{fig:DAG_expr_truth}
\end{figure}

\begin{figure}[t]
  \centering
  \includegraphics[width = 0.7\textwidth]{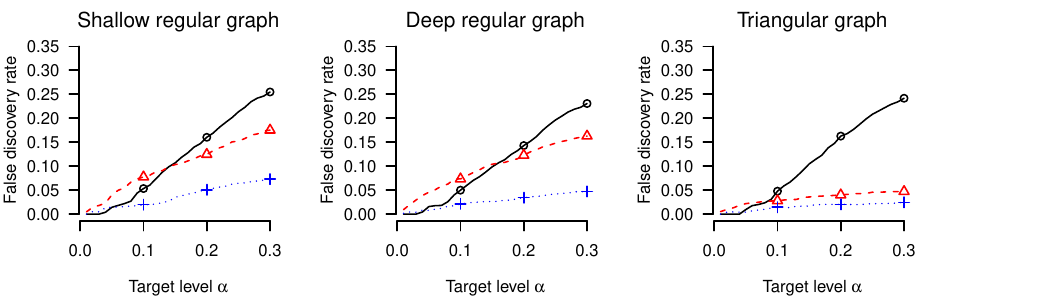}
  \includegraphics[width = 0.7\textwidth]{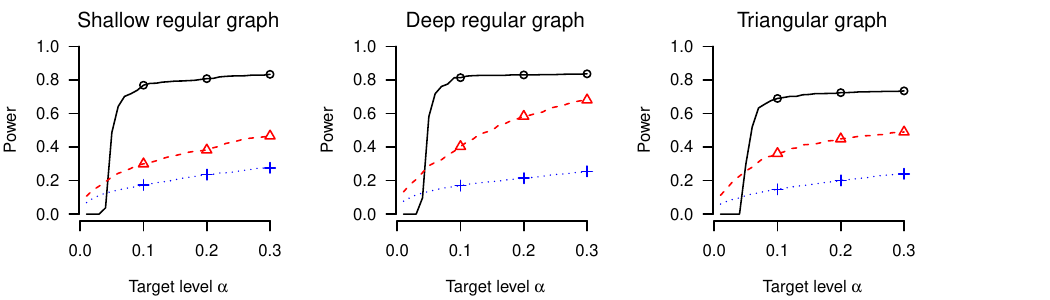}
  \caption{Comparing our method (black solid), \cite{lynch16}'s method (red dashed) and \cite{ramdas2017dagger}'s method (blue dotted) for testing on \DAG s.}\label{fig:rej_DAG_methods}
\end{figure}

We focus on the strong heredity principle in order to compare \star ~with existing methods. To account for the structure, we consider three types of \DAG s: shallow regular graph, with 4 layers and 250 nodes in each layer, deep regular graph, with 10 layers and 100 nodes in each layer, and triangular graph, with 5 layers with 50, 100, 200, 300, 350 nodes in each layer respectively. For each graph, we set 50 nodes which satisfy SHP to be non-null and generate $p$-values using equation~\eqref{eq:simul_pvals} with $\mu = 2$ for non-nulls. The settings are illustrated in Figure~\ref{fig:DAG_expr_truth}, with fewer nodes for readability.

We compare our method with canonical scores described in Section \ref{subsec:guidance}, to the method proposed in \cite{lynch14}, referred to as Self-Consistent Rejection procedure, as well as a recently developed sequential method, referred to as Greedily Evolving Rejections, a generalization of Lynch's hierarchical test by \cite{ramdas2017dagger}. The results are plotted in Figure \ref{fig:rej_DAG_methods}. It is clear that in all cases our method is more powerful than other methods when $\alpha$ is not too small. When $\alpha$ is small, our method is powerless in this setting because of the finite sample correction constant $h(1)$ in \eqref{eq:accumFDP}, which requires at least $h(1) / \alpha - 1$ rejections to get non-empty rejection set at level $\alpha$. In our case, $h(1) = 2$ but we can reduce $h(1)$ by choosing other accumulation functions. We provide a few examples that resolve this issue in Appendix \ref{sec:more}.


\section{Discussion}
\label{sec:disc}

Using knockoff statistics instead of $p$-values.
We have focused on the case where the test statistics are independent. For more general settings, our framework dovetails naturally with the knockoff framework proposed by \citet{barber15} and extended by \citet{candes2016panning}, which convert complex regression problems into independent one-bit $p$-values for each variable. When running sparse regression algorithms, the underlying variables often have some structure---for example, a tree structure with wavelet coefficients in compressed sensing---and we may want to use the knockoff procedure to select a structured subset of variables. 

Keeping this motivation in mind, let $W_i$ denote the knockoff statistic for hypothesis $i$ and define $p_i = (1+\mathbf{1}\{W_i>0\})/2$. Then, knockoff constructions guarantee that $(p_i)_{i\in \cH_0}$ are independent $p$-values, conditional on $(|W_i|)_{i=1}^n$ and $(p_i)_{i\notin \cH_0}$. The absolute value $|W_i|$ may be viewed as the free side-information $g(p_i)$ in Algorithm~\ref{algo:STAR}, while the location of the variable on the tree would be the structural information $x_i$. Although the constructed $p_i$ does not technically have a decreasing density, the accumulation function $h(p_i) = 2 I\{p_i > 1/2\} = \text{sign}(W_i) + 1$ nevertheless has expectation 1 under the null. The analyst may then use \star ~to interactively pick a structured subset of variables.

Like \adapt, knockoffs as defined in \citet{barber15} do not enforce constraints on the rejection set, nor do they allow for interactive adaptation as more $p$-values are unmasked. Combining our method with knockoff statistics allows for controlling the \textFDR ~in many interesting regression problems with constraints on the rejection set, such as hierarchy constraints for interactions in a regression. 

\vspace{0.05in}

Handling dependence.
When the underlying problem is a sparse regression problem, the aforementioned knockoff procedure can be used to construct independent $p$-values even when the covariates are not orthogonal.
However, it is not yet known how to construct knockoffs in most settings, but we can often still construct dependent $p$-values, and it is an important open problem to provide guarantees for such settings. 
As we show in Appendix \ref{sec:sensitivity}, the experiments under dependence are encouraging, especially under negative dependence, but we do not currently know how to prove any results about robustness to deviations from independence. Such results would immediately be applicable in several other settings, such as ordered testing \citep{li2016accumulation}, and knockoffs \citep{barber15}.

\bibliographystyle{biometrika}
\bibliography{biblio}

\newpage
\allowdisplaybreaks
\appendix

\begin{center}
  \begin{Large}
    \textbf{Supplementary Materials}
  \end{Large}
\end{center}

The supplementary materials include all technical proofs (Section \ref{sec:proofs}), a more detailed description of \star ~(Section \ref{sec:star}), four more applications on bump hunting, hierarchical testing, wavelet thresholding and interaction selection for factorial experiments (Section \ref{sec:bump} - \ref{sec:factorial}), comparison of \star ~with different masking functions (Section \ref{sec:more}), a sensitivity analysis for \star ~under dependent p-values (Section \ref{sec:sensitivity}) and an asymptotic power analysis for a subclass of \star ~(Section \ref{app:asymptotics}). The \texttt{R} code to replicate all results in the paper are available at \website. The sub-folder \texttt{movies} includes GIFs illustrating our method on several applications.

\section{Technical Proofs}\label{sec:proofs}

\subsection{Proof of Proposition \ref{prop:density}}\label{subapp:proof_proposition_density}
\begin{proof}
Fix any $x\in [0, 1]$ and a set $A$ with non-zero Lebesgue measure. Let $P_0$ denote the true distribution of $p$, and $P_u$ denote the uniform distribution on $[0, 1]$. Then, we have
\begin{align*}
P_0(p\in [0, x] \mid g(p)\in A) &= \frac{P_0([0, x] \cap g^{-1}(A))}{P_0(g^{-1}(A))
}\\
 & = \frac{P_0([0, x] \cap g^{-1}(A))}{P_0([0, x]\cap g^{-1}(A)) + P_0([x, 1]\cap g^{-1}(A))}\\
& = \frac{1}{1 + \frac{P_0([x, 1]\cap g^{-1}(A))}{P_0([0, x]\cap g^{-1}(A))}}.
\end{align*}
Let $f$ be the true density of $p$. Then, we may write 
\begin{align*}
  \frac{P_0([x, 1]\cap g^{-1}(A))}{P_0([0, x]\cap g^{-1}(A))}  = \frac{\int_{x}^{1} I(y\in g^{-1}(A)) f(y)dy}{\int_{0}^{x} I(y\in g^{-1}(A)) f(y)dy} &\ge \frac{f(x) \cdot \int_{x}^{1} I(y\in g^{-1}(A)) dy}{f(x) \cdot \int_{0}^{x}I(y \in g^{-1}(A))dy}\\
& =  \frac{P_u([x, 1]\cap g^{-1}(A))}{P_u([0, x]\cap g^{-1}(A))}.
\end{align*}
As a consequence, we conclude that for all $x$, we have
\[P_0(p\in [0, x] \mid g(p)\in A)\le P_u(p\in [0, x] \mid g(p)\in A),\]
and consequently, since $P_0([0,1]) = P_u([0,1]) = 1$, we also have
\[P_0(p\in [x,1] \mid g(p)\in A) \ge P_u(p\in [x,1] \mid g(p)\in A),\]
This entails that $P_u$ stochastically dominates $P_0$, when conditioning on $\{g(p)\in A\}$. Since accumulation functions $h$ are non-decreasing, they are larger on $[x,1]$ than on $[0,1]$, and hence we have
\begin{equation}\label{eq:density1}
\E_0[h(p) \mid g(p)\in A] \ge \E_u[h(p) \mid g(p)\in A].
\end{equation} 
Note that equation~\eqref{eq:density1} holds for all sets $A$ with nonzero Lebesgue measure. This immediately yields the theorem. To see this more formally, fix any $\eps > 0$, and let 
\[B_{\eps} = \{x: \E_u[h(p) \mid g(p)=x] > \E_0[h(p) \mid g(p)=x] + \eps\}.\]
Then, $B_{\epsilon}$ must be a Lebesgue null set, in order to not contradict equation \eqref{eq:density1}. 
Therefore, 
\[P_u(B_{0}) = P_u(\cup_{n\ge 1}B_{1/n})= 0.\]
As a result,
\[\E_0[h(p) \mid g(p)]\stackrel{a.s.}{\ge} \E_u[h(p) \mid g(p)]\ge 1,\]
as desired.
\end{proof}

\subsection{Proof of Theorem \ref{thm:main}}\label{subapp:proof_FDR}
We start from a lemma that shows $\cF_{t}$ defined in \eqref{eq:Ft} is a filtration.
\begin{lemma}\label{lem:Ft_filtration}
Let 
\[\cF_{t} = \sigma\lb \{x_{i}, g(p_{i})\}_{i=1}^{n},\;  (p_{i})_{i\notin \cR_{t}},\; \sum_{i\in \cR_{t}}h(p_{i})\rb.\]
Then $\cF_{t}$ is a filtration in the sense that for all $t\ge 0$,
\[\cF_{t}\subset \cF_{t+1}.\]
\end{lemma}
\begin{proof}
 The proof is completed by observing that (1) $\cF_{0} \subset \cF_{t}$ for any $t$; (2) $\cR_{u+1}\in \cF_{u}\subset \cF_{t}$ implies $\cF_{u+1}\subset \cF_{t}$.
\end{proof}

The proof of Theorem \ref{thm:main} is based on an optional stopping argument, generalizing the one presented in \citet{lei2016power}, which in turn generalized arguments from \citet{li2016accumulation} and \citet{barber2016knockoff}. 

\begin{lemma}\label{lem:bernoulli}[Lemma 1 of \citet{lei2018adapt}]
  Suppose that, conditionally on the $\sigma$-field $\cG_{-1}$, $b_1,\ldots,b_n$ are independent Bernoulli random variables with 
\[\P(b_i = 1 \mid \cG_{-1}) = \rho_i \geq \rho > 0, \mbox{almost surely}.\]
Let $(\cG_{t})_{t=0}^{\infty}$ be a filtration with $\cG_{0}\subseteq \cG_{1}\subseteq \cdots$ and suppose that $[n] \supseteq \cC_0 \supseteq \cC_1 \supseteq \cdots$, with each subset $\cC_{t+1}$ measurable with respect to $\cG_{t}$. If we have
  \[
  \cG_t \supset \sigma\left(\cG_{-1}, \cC_t, (b_i)_{i \notin \cC_t}, \sum_{i \in \cC_t} b_i\right),
  \]
  and $\htt$ is an almost-surely finite stopping time with respect to the filtration $(\cG_t)_{t \geq 0}$, then
  \[
  \E\left[\frac{1 + |\cC_{\htt}|}{1 + \sum_{i\in \cC_{\htt}} b_i} \right]  \leq \rho^{-1}.
  \]
\end{lemma}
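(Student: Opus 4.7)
The plan is an optional stopping argument on a supermartingale obtained via a coupling to iid Bernoulli$(\rho)$ variables. The bound $\rho^{-1}$ hints at this, since for $V \sim \text{Binomial}(n,\rho)$ the elementary identity
\[
(n+1)\,\E\!\left[\tfrac{1}{1+V}\right] \;=\; \tfrac{1-(1-\rho)^{n+1}}{\rho} \;\leq\; \rho^{-1}
\]
is immediate; the work is to reduce the heterogeneous case $\rho_i \geq \rho$ to this homogeneous benchmark.

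First I would introduce the coupling: take iid $U_i \sim U[0,1]$ and set $b_i = \mathbf{1}\{U_i \leq \rho_i\}$ alongside $b_i^{\star} := \mathbf{1}\{U_i \leq \rho\}$, so that $(b_i^{\star})_{i=1}^n$ are iid Bernoulli$(\rho)$ with $b_i^{\star} \leq b_i$ pointwise. Setting $V_t^{\star} := \sum_{i \in \cC_t} b_i^{\star} \leq V_t$,
\[
\frac{1 + |\cC_t|}{1 + V_t} \;\leq\; M_t^{\star} \;:=\; \frac{1 + |\cC_t|}{1 + V_t^{\star}}.
\]
To keep $\htt$ compatible with the working filtration, I would pass to the averaged process $\widetilde M_t := \E[M_t^{\star} \mid \cG_t]$. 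Since $|\cC_t|$ and $V_t$ are $\cG_t$-measurable and $V_t^{\star} \leq V_t$, we have $\widetilde M_t \geq (1+|\cC_t|)/(1+V_t)$.

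The crux is to show that $\widetilde M_t$ is a $\cG_t$-supermartingale. Using the tower property this reduces to verifying $\E[M_{t+1}^{\star} \mid \cG_t] \leq \E[M_t^{\star} \mid \cG_t]$. Writing $A = \cC_t \setminus \cC_{t+1}$ (a $\cG_t$-measurable set) and $\Delta = \sum_{i \in A} b_i^{\star}$, so that $V_{t+1}^{\star} = V_t^{\star} - \Delta$, the claim becomes the reciprocal-moment inequality
\[
(1+|\cC_{t+1}|)\,\E\!\left[\tfrac{1}{1+V_t^{\star}-\Delta}\,\Big|\,\cG_t\right] \;\leq\; (1+|\cC_t|)\,\E\!\left[\tfrac{1}{1+V_t^{\star}}\,\Big|\,\cG_t\right],
\]
which I would establish by unpacking the conditional joint law of the coupled pairs $(b_i^{\star}, b_i)_{i\in\cC_t}$ given $\cG_t$ under the constraint $\sum_{i\in\cC_t}b_i = V_t$. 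With the supermartingale in hand, optional stopping at the a.s.\ finite, bounded-by-$(n+1)$ process $\widetilde M$ gives $\E[\widetilde M_\htt] \leq \E[\widetilde M_0] = \E[M_0^{\star}]$, and since $V_0^{\star} \sim \text{Binomial}(n,\rho)$ independent of $\cG_{-1}$, the Binomial identity above yields $\E[M_0^{\star}] = (n+1)\,\E[1/(1+V_0^{\star})] \leq \rho^{-1}$, closing the argument.

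The main obstacle is the reciprocal-moment inequality in Step 2. A naive attempt to work with $M_t^{\star}$ itself under an enlarged filtration such as $\cG_t \vee \sigma(V_t^{\star})$ does \emph{not} produce a pointwise supermartingale, because an analyst can adversarially peel off indices with high $\rho_i$ (which are disproportionately likely to have $b_i = 1$) and inflate the one-step conditional expectation. The remedy, and the delicate part of the argument, is that averaging over the coupling randomness by conditioning only on $\cG_t$ smooths out the adversarial selection; making this rigorous requires a careful comparison of the two conditional reciprocal moments above using the joint law of $(b_i^{\star}, b_i)_{i\in\cC_t}$ induced by the coupling together with the constraint that $\sum_{i\in\cC_t}b_i = V_t$.
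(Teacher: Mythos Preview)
The paper does not supply a proof of this lemma; it is quoted as Lemma~1 of \citet{lei2016adapt} and used as a black box in the proof of Theorem~\ref{thm:main}. There is thus no in-paper argument to compare against, only the cited reference.

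Turning to your proposal: the coupling $b_i^{\star}\le b_i$ with $b_i^{\star}$ iid $\mathrm{Bernoulli}(\rho)$ and the time-zero binomial reciprocal-moment identity are the right opening moves, and you have correctly located the crux. But the proposal stops exactly at the hard part. You reduce everything to the supermartingale property of $\widetilde M_t=\E[M_t^{\star}\mid\cG_t]$, then say that establishing it ``requires a careful comparison'' of two conditional reciprocal moments --- and do not carry out that comparison. Since that step is the entire substance of the lemma, what you have written is a plan rather than a proof.

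The gap is not cosmetic. The unaveraged process $M_t=(1+|\cC_t|)/(1+V_t)$ is genuinely \emph{not} a $(\cG_t)$-supermartingale when the $\rho_i$ differ: with $n=2$, $\rho_1$ close to $1$, $\rho_2=\rho=1/2$, conditioning on $V_0=1$ and removing index~$1$ gives $\E[M_1\mid\cG_0]$ arbitrarily close to $2$, while $M_0=3/2$. So the averaging over the coupling randomness in $\widetilde M_t$ must be doing essential work, and your proposal does not explain the mechanism; invoking ``the joint law of $(b_i^{\star},b_i)_{i\in\cC_t}$ under the sum constraint'' names the object one would have to analyze but supplies no inequality. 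Until that one-step comparison is actually proved, the optional-stopping conclusion does not follow.
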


\begin{proof}[\textbf{of Theorem~\ref{thm:main}}]
\label{sec:proof-thm1}
By Proposition \ref{prop:density}, 
\begin{equation}\label{eq:conditional_expectation_1}
\E[h(p_i)\mid g(p_i)]\stackrel{a.s.}{\ge} 1,\quad \forall i\in \cH_{0}.
\end{equation}
Since $h$ is non-decreasing, \eqref{eq:conditional_expectation_1} implies that $h(1)\ge 1$. Generate $(V_{i})_{i\in \cH_{0}}\stackrel{i.i.d.}{\sim} U([0, 1])$, which are also independent of $(x_{i}, p_{i})_{i=1}^{n}$ and all operational randomness involved in the procedure. Let $b_i = I(V_i\le \frac{h(p_{i})}{h(1)})$ and recall that $\tau$ is the smallest $t$ such that $\hFDP_{t}\le \alpha$, then
\begin{align*}
\FDR & = \E [\FDP_\tau ] = \E\left[ \frac{|\cR_\tau \cap \cH_{0}|}{1\vee |\cR_\tau |}\right]\le \E\left[ \frac{1 + |\cR_\tau \cap \cH_{0}|}{1 + |\cR_\tau |}\right]\\
& \stackrel{(\mathrm{i})}{\le} \frac{\alpha}{h(1)}\cdot \E\left[ \frac{1 + |\cR_\tau \cap \cH_{0}|}{1 + \sum_{i\in \cR_{\tau}}\frac{h(p_{i})}{h(1)}}\right]\\
& \stackrel{(\mathrm{ii})}{\le} \frac{\alpha}{h(1)}\cdot \E\left[ \frac{1 + |\cR_{\tau}\cap \cH_{0}|}{1 + \sum_{i\in \cR_{\tau}\cap\cH_{0}}\E[b_{i} \mid p_{i}]}\right]\\
& \stackrel{(\mathrm{iii})}{\le} \frac{\alpha}{h(1)}\cdot \E\left[ \frac{1 + |\cR_{\tau}\cap \cH_{0}|}{1 + \sum_{i\in \cR_{\tau}\cap\cH_{0}}\E[b_{i} \mid (p_{i})_{i\in \cH_{0}}]}\right]\\
& \stackrel{(\mathrm{iv})}{\le} \frac{\alpha}{h(1)} \cdot \E\left[\E\left[ \frac{1 + |\cR_{\tau}\cap \cH_{0}|}{1 + \sum_{i\in \cR_{\tau}\cap\cH_{0}}b_{i}} \mid (p_{i})_{i\in \cH_{0}}\right]\right]\\
& = \frac{\alpha}{h(1)} \cdot \E\left[ \frac{1 + |\cR_{\tau}\cap \cH_{0}|}{1 + \sum_{i\in \cR_{\tau}\cap\cH_{0}}b_{i}} \right],
\end{align*}
where (i) follows because at time $\tau$, we have $\frac{h(1) + \sum_{i \in \cR_{\tau}} h(p_i)}{1 + |\cR_\tau|} \leq \alpha$, (ii) follows by substituting the definition of $b_i$ and restricting the indices of the denominator summation to just the rejected nulls, (iii) follows because of the independence of null p-values, while (iv) uses Jensen's inequality and the convexity of the mapping $y\mapsto \frac{1}{1 + y}$. Define the initial $\sigma$-field as
\[\cG_{-1} = \sigma\bigg( \{x_{i}, g(p_{i})\}_{i=1}^{n}, (p_{i})_{i\not\in \cH_{0}}\bigg).\]
Then 
\begin{equation}\label{eq:bi_cond}
\E [b_{i} | \cG_{-1}] = \E \left[\frac{h(p_{i})}{h(1)}\mid \cG_{-1}\right] = \E \left[\frac{h(p_{i})}{h(1)}\mid g(p_{i})\right]\ge \frac{1}{h(1)} \quad \text{ for all } i \in \cH_{0}.
\end{equation}
Recall that $\cR_{0} = [n]$ and define the filtration $(\cG_{t})_{t\ge 0}$ as
\begin{align*}
\cG_{t} &= \sigma\bigg( \cG_{-1}, (p_{i}, V_{i})_{i\notin \cR_{t}\cap \cH_{0}}, \{(p_{i}, V_{i}) : {i\in \cR_{t}\cap \cH_{0}}\}\bigg),
\end{align*}
where $\{\cdot\}$ denotes the unordered set. Then we have the following observations:
\begin{enumerate}[(a)]
\item Since $\cR_{0}\supset \cR_{1}\supset\cdots$, we necessarily have $\cG_{0}\subset \cG_{1}\subset\cdots$.
    
\item By definition \eqref{eq:Ft}, note that we have
\[\cF_{t} = \sigma\bigg(\cG_{-1}, (p_{i})_{i\notin \cR_{t}}, \sum_{i\in\cR_{t}}h(p_{i})\bigg)\subseteq \sigma\bigg(\cG_{-1}, (p_{i})_{i\notin \cR_{t}}, \{p_{i}: i\in \cR_{t}\}\bigg) \subseteq \cG_t.\]
As a consequence, $\tau \leq n$ is also a finite stopping time with respect to filtration $(\cG_{t})_{t\ge 0}$. 
\item Since $b_{i}$ is a function of $(p_{i}, V_{i})$ and $(p_{i}, V_{i})_{i\not\in \cR_{t}\cap \cH_{0}} \in \cG_{t}$, we have
\begin{equation}\label{eq:accept_bi}
(b_{i})_{i\not\in \cR_{t}\cap \cH_{0}}\in \cG_{t}.
\end{equation}
\item Lastly, observe that
\[
\sum_{i\in \cR_{t}\cap \cH_{0}}b_{i} \in \sigma\bigg(\{b_{i}: i\in \cR_{t}\cap \cH_{0}\}\bigg)\subseteq \sigma\bigg(\{(p_{i}, V_{i}): i\in \cR_{t}\cap \cH_{0}\}\bigg) \subseteq \cG_t.
\]
\end{enumerate}
Putting the pieces together  and applying Lemma \ref{lem:bernoulli} with $\cC_{t} = \cR_{t}\cap \cH_{0}$, we conclude that
\begin{equation}\label{eq:FDR}
\E\left[ \frac{1 + |\cR_{\tau}\cap \cH_{0}|}{1 + \sum_{i\in \cR_{\tau}\cap\cH_{0}}b_{i}}\right]\le h(1).
\end{equation}
As a result, we may conclude that 
\[\FDR ~\le~ \frac{\alpha}{h(1)} \cdot \E\left[ \frac{1 + |\cR_{\tau}\cap \cH_{0}|}{1 + \sum_{i\in \cR_{\tau}\cap\cH_{0}}b_{i}}\right] ~\le~ \alpha,\]
as claimed by the theorem.
\end{proof}

\subsection{Proof of Theorem \ref{thm:gp}}\label{subapp:proof_theorem_gp}
\label{sec:proof-thm2}
\begin{proof}
  We first prove statement (i). We start by assuming that $\{p: h(p) = 1\}$ is a Lebesgue null set. Since $h$ is non-decreasing and $\int_{0}^{1}h(p)\, dp = 1$, we must have $h(0) < 1 < h(1)$. Let $\pth = \sup\{p: h(p)\le 1\}$, then
\[\lim_{p\uparrow \pth} h(p)\le 1 \le \lim_{p\downarrow \pth}h(p).\]
As a consequence, $H(x)$ is strictly decreasing on $[0, \pth]$ and strictly increasing on $[\pth, 1]$ and $H(0) = H(1) = 0$. First we define the function $s(p)$ on $[\pth, 1]$: for any $p\in [\pth, 1]$, let $s(p)$ be the unique solution on $[0, \pth]$ such that $H(s(p)) = H(p)$. Then, it is easy to see that $s(\cdot)$ is strictly decreasing on $[\pth, 1]$ with $s(1) = 0$ and $s(\pth) = \pth$. Since the function $H$ is continuous and strictly decreasing on $[0, \pth]$, we know that $s(\cdot)$ is continuous on $[\pth, 1]$. Similarly we can define $s(p)$ on $[0, \pth]$. The continuity is guaranteed at $\pth$ since $s(\pth) = \pth$. 

~\\
Next we prove that $s(\cdot)$ is differentiable except on a Lebesgue null set. Let 
\[\mathcal{D} = \{p: h(\cdot)\mbox{ is continuous at both }p\mbox{ and }s(p)\}.\]
Since $h(\cdot)$ is increasing on $[0, 1]$, the standard argument in real analysis (e.g. \cite{rudin}) implies that $\mathcal{D}$ is countable and hence a Lebesgue null set. It is left to prove that $s(\cdot)$ is differentiable on $\mathcal{D}^{c}$. By the definition of $s(\cdot)$, for any $0\le p_{1}\le p_{2}\le 1$,
\begin{equation}\label{eq:differentiability}
\frac{H(s(p_{2})) - H(s(p_{1}))}{s(p_{2}) - s(p_{1})} = \frac{H(p_{2}) - H(p_{1})}{s(p_{2}) - s(p_{1})} = \frac{H(p_{2}) - H(p_{1})}{p_{2} - p_{1}}\cdot \frac{p_{2} - p_{1}}{s(p_{2}) - s(p_{1})}.
\end{equation}
Take any $p\in \mathcal{D}^{c}$ and by definition we know that $h(\cdot)$ is continuous on both $p$ and $s(p)$. By Newton-Leibniz theorem \citep{rudin}, 
\[H'(p) = h(p) - 1, \quad \text{ and ~}~ H'(s(p)) = h(s(p)) - 1.\]
Now, letting $p_{1} = p$ and $p_{2}\rightarrow p$ in \eqref{eq:differentiability}, the continuity of $s(\cdot)$ implies that $s(p_{2})\rightarrow s(p)$ and the differentiability of $H(\cdot)$ at $p$ implies that 
\[h(s(p)) - 1 = H'(s(p)) = H'(p)\cdot \lim_{p_{2}\rightarrow p}\frac{p_{2} - p}{s(p_{2}) - s(p)} = (h(p) - 1)\cdot \lim_{p_{2}\rightarrow p}\frac{p_{2} - p}{s(p_{2}) - s(p)}.\]
This entails that the derivative of $s(p)$ can be written as
\begin{equation}\label{eq:sderiv}
s'(p)\triangleq \lim_{p_{2}\rightarrow p}\frac{s(p_{2}) - s(p)}{p_{2} - p} = \frac{h(p) - 1}{h(s(p)) - 1}.
\end{equation}
Now suppose $\{p: h(p) = 1\}$ is not a Lebesgue null set. Since $h$ is non-decreasing, it must be an interval. Let $[p_{1}, p_{2}]$ be the closure of $\{p: h(p) = 1\}$. Then $H$ is strictly decreasing on $[0, p_{1}]$, strictly increasing on $[p_{2}, 1]$ and is flat on $[p_{1}, p_{2}]$. For $p\in [0, p_{1})\cup (p_{2}, 1]$, we can define $s(p)$ is the same way as above. By construction, $s(p_{2}) = p_{1}, s(p_{1}) = p_{2}$ On $[p_{1}, p_{2}]$, we simply define $s(p)$ as the linear interpolation between $p_{1}$ and $p_{2}$, i.e. $s(p) = p_{1} + p_{2} - p$. It is easy to see that $s(p)$ is continuous, strictly decreasing and differentiable almost everywhere. 

Now we prove theorem statement (ii). Take any $q\not\in s(\mathcal{D})$ and write $s^{-1}(q)$ as $\td{q}$ for short. Note that $\{p: g(p) = q\}$ only contains two points $\{q, \td{q}\}$. If $g^{-1}(q)\subseteq \{p: h(p) = 1\}$, then 
\[\E_{p\sim U([0, 1])} [ h(p) \mid  g(p) = q] = 1.\]
Otherwise, by equation~\eqref{eq:sderiv} and the fact that $q = s(\td{q})$, we infer that 
\begin{align}
\E_{p\sim U([0, 1])} [ h(p) \mid  g(p) = q] &= \frac{h(q) - h(\td{q}) / \s'(\td{q})}{1 - 1 / \s'(\td{q})}.\nonumber\\
& = \frac{h(q) - h(\td{q}) (h(q) - 1) / (h(\td{q}) - 1)}{1 - (h(q) - 1) / (h(\td{q}) - 1)} ~=~ 1.\label{eq:ode_pre}
\end{align}
On the other hand, since $s(\cdot)$ is strictly decreasing, $\{p: s(p) = q\}$ contains at most two points for any $q$. As a result,
\begin{align*}
P_{p\sim U([0,1])}\lb \E_{p\sim U([0, 1])} [ h(p) \mid  g(p) = q] = 1\rb &\ge 1 - P_{p\sim U([0, 1])}\lb s^{-1}(\mathcal{D})\rb\\ &\ge 1 - 2P_{p\sim U([0, 1])}(\mathcal{D}) ~=~ 1.
\end{align*}
Hence, we have proved that our choice of $g$ satisfies condition~\eqref{eq:masking-condition}, and this concludes the proof of the theorem.
\end{proof}

\section{More Details About Selectively Traversed Accumulation Rules}\label{sec:star}
\subsection{Flowchart of the framework}\label{subapp:flowchart}
The scheme in Section \ref{subsec:guidance} is presented explicitly in Figure~\ref{fig:flowchart}, and the three steps of a generic update rule are highlighted in red.

\begin{figure}[h]
  \centering
  \includegraphics[width = 0.85\textwidth]{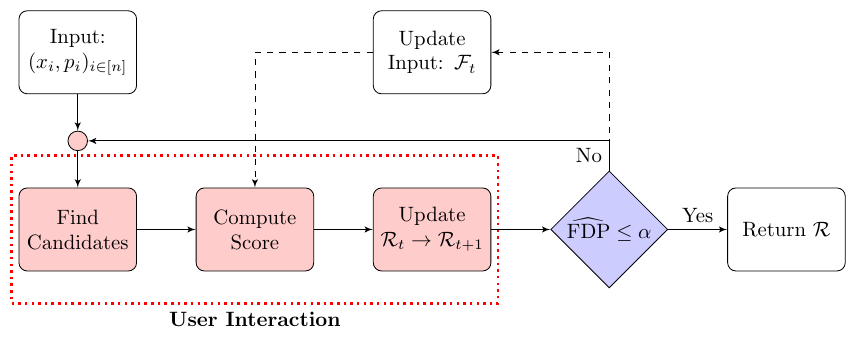}
  \caption{Flowchart of the framework}\label{fig:flowchart}
\end{figure}

\subsection{Data adaptive structural constraint}\label{subapp:Kt}
Let $\cK_{0}, \cK_{1}, \ldots$ be a sequence of structural constraints with $\cK_{t}\in \cF_{t}$. We can then generalizes Algorithm \ref{algo:STAR} by incorporating time-varying structural constraints and allowing the rejection set to temporarily leave the constraint. For example, if the analyst had started by wanting to find a convex set, but the masked p-values very clearly reveal a banana shape, or two circles in opposite corners of the grid, then she can change her mind and update $\cK$.

\begin{algo}\label{algo:STARKt}
  {\em \Star}
  \begin{tabbing}
  \quad \textbf{Input: }Predictors and $p$-values $(x_{i}, p_{i})_{i=1}^{n}$, constraint set $\cK$, target \textFDR ~level $\alpha$.\\
  \quad $\cR_{0} = [n]$\\
  \quad While $\cR_{t}\not= \emptyset$\\
      \quad \qquad\enspace $\hFDP_{t}\gets \frac{1}{1 + |\cR_{t}|}\lb h(1) + \sum_{i\in \cR_{t}}h(p_{i})\rb$\\
      \quad \qquad \enspace If ($\hFDP_{t} \le \alpha$ and $\cR_t \in \cK_t$) or $\cR_t = \emptyset$\\
      \quad \qquad \qquad\enspace Stop and return $\cR_t$, and reject $\{H_{i}: i\in \cR_{t}\}$\\
      \quad \qquad\enspace Select $\cR_{t+1}\subseteq\cR_{t}$ with $\cR_{t+1}\in \cK_{t}\cap \cF_{t}$\\
      \quad \qquad\enspace Select $\cK_{t+1}\in\cF_{t+1}$\\
  \quad Output $\cR_{t}$ as the rejection set\\
  \end{tabbing}
\end{algo}

\subsection{Examples of masking functions}\label{subapp:masking_functions}
We show masking functions of several accumulation functions that are used in literature.
\begin{enumerate}
\item (SeqStep, \cite{barber15}) When $h(p) = \frac{1}{1-\pth} 1\{ p > \pth\}$, one may derive
\[s(p) = \frac{\pth}{1-\pth}(1 - p).\]
\item (ForwardStop, \cite{gsell2016sequential}) For the unbounded accumulation function $h(p) = -\log(1 - p)$, we can obtain a bounded function $h^C(p)$ by truncating at $C>0$ and renormalizing as in Remark~\ref{rem:bounded-h}; in order to avoid a large renormalization (corresponding to a large correction of the \textFDR ~level), we fix $C = -\log(0.01) = 4.605$, in which case $\int_0^1(h(p)\wedge C)\,dp = 0.99$. For any $C>0$, one can derive
\[H(p; C) = \left\{
    \begin{array}{ll}
      e^{-C}p + (1 - p)\log (1 - p) & \text{ if } p\le 1 - e^{-C},\\
      (1 - p)(1 - C - e^{-C}) & \text{ if } p > 1 - e^{-C}.
    \end{array}
\right.\]
  and solve for $s(p)$ numerically as shown in Figure~\ref{fig:masking_fun}.

\item (HingeExp, \cite{li2016accumulation}) ForwardStop may be generalized to obtain the unbounded accumulation function $h(p) = \frac{1}{1 - \pth}\log \frac{1 - \pth}{1 - p}1\{p\ge \pth\}$ for some $\pth\in (0, 1)$ ($\pth=0$ gives ForwardStop after reparametrization).
Using a similar reasoning to ForwardStop, for each $\pth$ we recommend truncating $h(p)$ at $C = \frac{-\log (0.01)}{1 - \pth}$ so that $\int_0^1(h(p)\wedge C)\,dp = 0.99$.
After truncating and renormalizing using any $C>0$, we have
\[
H(p; C) = \left\{
    \begin{array}{ll}
      -p\int_0^1(h(p)\wedge C)\,dp & \text{ if } p < \pth,\\
      e^{-C(1-\pth)}p + \frac{1-p}{1-\pth}\log\frac{1-p}{1-\pth} - \frac{\pth}{1-\pth}(1-p) & \text{ if } \pth\le p\le 1 - (1 - \pth)e^{-C(1 - \pth)},\\
      (1 - p)(1 - C - e^{-C(1-\pth)}) & \text{ if } p > 1 - (1 - \pth)e^{-C(1 - \pth)}.
    \end{array}
\right.
\]
Once more we can calculate $s(p)$ numerically, as shown in Figure~\ref{fig:masking_fun} below.
\end{enumerate}

\begin{figure}[h]
  \centering
  \includegraphics[width = 0.95\textwidth]{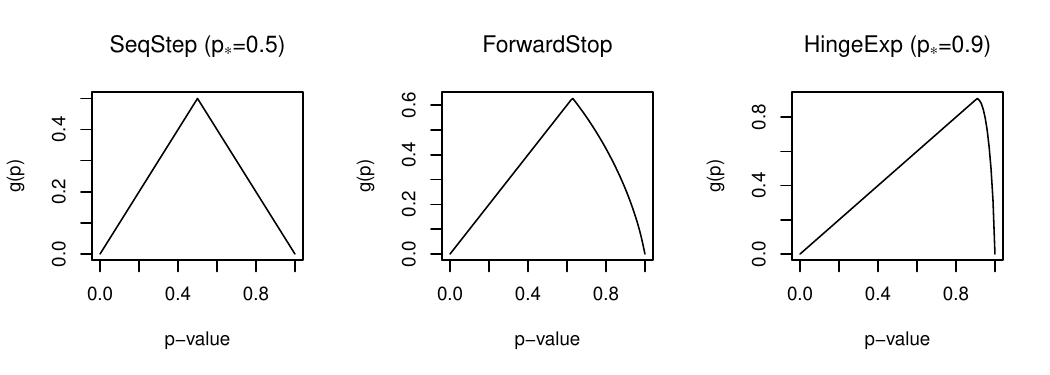}
  \caption{Masking functions for different accumulation functions.}\label{fig:masking_fun}
\end{figure}

\subsection{Details of \EM ~algorithm}\label{subapp:EM}
Consider the working model \eqref{eq:onegroup_GLM}. At step $t$, let 
\[\td{p}_{t, i} = p_{i}I(i\not\in \cR_{t}) + g(p_{i})I(i\in \cR_{t}),\]
where $g(p) = \min\{p, s(p)\}$. For simplicity, we assume that $f(p; \theta)$ is the model of the original $p$-values. Note that the following derivation directly carries over to the transformed $p$-values.

Define a sequence of hypothetical labels $w_{t, i} = I(\td{p}_{t, i} = p_{i})$. Note that for unmasked p-values, $w_{t, i} = 1$. Then the joint log-likelihood of $\{\td{p}_{t, i}\}$ and $\{w_{t, i}\}$ is 
\begin{align}
&\ell(\{\td{p}_{t, i}, w_{t, i}\}) = \sum_{i\not\in \cR_{t}}\log f(\td{p}_{t, i}; \mu(x_{i}))\nonumber\\ 
& \quad + \sum_{i\in \cR_{t}}w_{t, i}\log f(\td{p}_{t, i}; \mu(x_{i})) + \sum_{i\in \cR_{t}} (1 - w_{t, i}) \log f(s^{-1}(\td{p}_{t, i}); \mu(x_{i})).\nonumber
\end{align}

The standard \EM ~algorithm replaces $w_{t, i}$ by its conditional mean $\E (w_{t, i} \mid \td{p}_{t, i})$ in the E-step. Using a similar argument as equation~\eqref{eq:ode_pre}, we have
\begin{align}
\td{w}_{t, i}\triangleq \E (w_{t, i} \mid \td{p}_{t, i}, \theta_{\mathrm{old}}(x_{i}))& = \P (p_{i} = \td{p}_{t, i} \mid g(p_{i}) = \td{p}_{t, i})\nonumber\\ 
& = \frac{f(\td{p}_{t, i}; \theta_{\mathrm{old}}(x_{i}))}{f(\td{p}_{t, i}; \theta_{\mathrm{old}}(x_{i})) - (s^{-1})'(\td{p}_{t, i})\cdot f(s^{-1}(\td{p}_{t, i}); \theta_{\mathrm{old}}(x_{i}))}.\label{eq:Estep}
\end{align}
where $\theta_{\mathrm{old}}(\cdot)$ is from the last iteration. Here $(s^{-1})'(\cdot) = 1 / s'(s^{-1}(\cdot))$ is known to exist almost everywhere by Theorem \ref{thm:gp}. Then in the M-step, we replace $\theta_{\mathrm{old}}(\cdot)$ by 
\begin{align}
& \theta_{\mathrm{new}}(\cdot) = \argmax_{\theta(\cdot)\in \Theta}  \sum_{i\not\in \cR_{t}}\log f(\td{p}_{t, i}; \theta(x_{i}))\nonumber\\ 
& \quad + \sum_{i\in \cR_{t}}\td{w}_{t, i}\log f(\td{p}_{t, i}; \theta(x_{i})) + \sum_{i\in \cR_{t}} (1 - \td{w}_{t, i}) \log f(s^{-1}(\td{p}_{t, i}); \theta(x_{i})).\label{eq:Mstep}
\end{align}
The above optimization problem is equivalent to solving a weighted MLE on an artificial dataset $\{(\td{p}_{t, i})_{i=1}^{n}, (s^{-1}(\td{p}_{t, i}))_{i\in \cR_{t}}\}$. Therefore any algorithm that solves the weighted MLE can be embedded into this framework. 

We provide two instantiations below, which will be used in later sections. For illustration, we only consider the accumulation function $h(p) = 2 I(p\ge 0.5)$ with $s(p) = 1 - p$. 

\vspace{0.3em}
\noindent \textbf{Example 1: Beta family for p-values.} Consider the model \eqref{eq:onegroup_GLM} with $h(p; \mu) = \frac{1}{\mu}p^{\frac{1}{\mu} - 1}$. The E-step \eqref{eq:Estep} simplifies to\begin{equation*}
  \td{w}_{t, i} = \frac{\td{p}_{t,i}^{\frac{1}{\beta_{\mathrm{old}}'\phi(x_{i})} - 1}}{\td{p}_{t,i}^{\frac{1}{\beta_{\mathrm{old}}'\phi(x_{i})} - 1} + (1 - \td{p}_{t, i})^{\frac{1}{\beta_{\mathrm{old}}'\phi(x_{i})} - 1}},
\end{equation*}
and the M-step \eqref{eq:Mstep} can be calculated as 
\begin{align}
   \beta_{\mathrm{new}} &= \argmax_{\beta\in \R^{m}} \sum_{i\not\in \cR_{t}}\{(\log \td{p}_{t, i})\beta_{\mathrm{old}}'\phi(x_{i}) + \log (\beta_{\mathrm{old}}'\phi(x_{i}))\}\nonumber\\ 
& \quad + \sum_{i\in \cR_{t}}\left\{\lb\td{w}_{t, i}\log \td{p}_{t, i} + (1 - \td{w}_{t,i})\log \lb 1 - \td{p}_{t,i}\rb\rb\beta_{\mathrm{old}}'\phi(x_{i}) + \log (\beta_{\mathrm{old}}'\phi(x_{i}))\right\}.\nonumber
\end{align}
Define $y_{t, i}$ as
\[y_{t, i} = -\log(\td{p}_{t, i})I(i\not\in \cR_{t}) - \lb\td{w}_{t, i}\log \td{p}_{t, i} + (1 - \td{w}_{t,i})\log \lb 1 - \td{p}_{t,i}\rb\rb I(i\in \cR_{t}).\]
Then, we have
\begin{equation*}
  \beta_{\mathrm{new}} = \argmax_{\beta\in \R^{m}} \sum_{i=1}^{n}\{-y_{t, i}\beta_{\mathrm{old}}'\phi(x_{i}) + \log (\beta_{\mathrm{old}}'\phi(x_{i}))\},
\end{equation*}
which is equivalent to the solution of an \emph{unweighted} Gamma generalized linear model with a inverse link function on data $\{y_{t, i}\}$ with covariate $\phi(x_{i})$.

\vspace{0.3em}
\noindent \textbf{Example 2: Gaussian family for z-values.} Consider the model $p_{i} = 1 - \Phi(z_{i})$ with $z_{i}\sim N(\mu_{i}, 1)$. Define the partially-masked z-values as
\[\td{z}_{t, i} = \max\{z_{i}, \Phi^{-1}(1 - (1 - p_{i}))\} = \max\{z_{i}, -z_{i}\} = |z_{i}|.\]
Thus the E-step \eqref{eq:Estep}, replacing the $p$-values by z-values, can be simplified as
\[\td{w}_{t, i} = \frac{\exps{-\frac{(|z_i| - \mu_{i,\mathrm{old}})^2}{2}}}{\exps{-\frac{(z_i - \mu_{i,\mathrm{old}})^2}{2}} + \exps{-\frac{(-z_i - \mu_{i,\mathrm{old}})^2}{2}}} = \frac{1}{1 + \exps{-2\mu_{i, \mathrm{old}}\cdot |z_i|}}.\]
In the M-step, $\mu_{i}$'s are updated by
\begin{align}
(\mu_{i, \mathrm{new}}) &= \argmin ~\sum_{i\not\in \cR_{t}}(z_{i} - \mu_{i})^2 + \sum_{i\in \cR_{t}}\td{w}_{t, i}(z_{i} - \mu_{i})^{2} + (1 - \td{w}_{t, i})(-z_{i} - \mu_{i})^2\nonumber\\
& = \argmin ~ \sum_{i\not\in \cR_{t}}(z_{i} - \mu_{i})^2 + \sum_{i\in \cR_{t}}((2\td{w}_{t, i} - 1)z_{i} - \mu_{i})^{2},\nonumber
\end{align}
which reduces to an \emph{unweighted} least-squares problem on a pseudo-dataset $\{\td{z}_{t, i}: i = 1, \ldots, n\}$ where $\td{z}_{t, i} = z_{i}$ for unmasked hypotheses and $\td{z}_{t, i} = (2\td{w}_{t, i} - 1)z_{i}$ for masked hypotheses. Note that we can solve it as a non-parametric least-squares problem if $\phi(x)$ corresponds to some basis functions, or as a constrained problem with $\mu_{i}$'s lying in an isotonic cone.

\section{Example 3: bump hunting}\label{sec:bump}

Bump hunting is widely applied in areas such as astronomy~\citep{good80}, risk management~\citep{becker01}, bioinformatics~\citep{jiang06}, and epidemiology~\citep{jaffe12}. In these areas, one collects a response $y$ together with a possibly high dimensional vector $x$ of predictors and aims at obtaining knowledge of $f(x) = \E [y | x]$. In many applications, it is not necessary to estimate $f(x)$ uniformly over the domain but simply detect a scientifically interesting subregion of the predictor space instead. In bump hunting, we usually aim to detect a subregion  within which the average of $y$ is larger than that on the entire space. However, most existing procedures lack formal statistical guarantees.

We can cast the problem as a nonparametric multiple testing problem by defining the null hypothesis $H_i$ that the conditional response distribution at the $i$th data point is 
\[
H_i:\; \mathcal{L}(y_i | x_{i})\preceq \mathcal{L}(y_i + B),
\]
where $\preceq$ denotes stochastic dominance and $\mathcal{L}$ denotes the marginal or conditional distribution of $y$. Informally, we wish to find a clustered set of non-nulls, corresponding to a rectangular region of the feature space where the response is unusually large, by some fixed location offset $B\geq 0$.

Let $F_{0}$ denote the marginal distribution function of $y$. If $F_{0}$ is known, one can define the $p$-value as $p_{i} = 1 - F_{0}(y_{i}-B)$ (if $F_0$ is not continuous, we may use a randomized version instead). To discover a rectangular region, we can apply the convex region detection algorithm of Section~\ref{subsubsec:convex_proc} with the restriction that we always peel off an axis-parallel rectangle in the form of $\{i: x_{ij}\ge v_{j}\}$ or $\{i: x_{ij}\le v_{j}\}$. More precisely, given a patience parameter $\delta\in (0, 1)$, the candidate sets are given by $\{C(j, b; \delta): j \in \{1,\ldots, p\}, b \in \{-1, 1\}\}$, where 
\[C(j, -1; \delta) = \{i: x_{ij}\le v_{j}\}, \quad C(j, 1; \delta) = \{i: x_{ij}\ge v_{j}\},\]
and $v_{j}$ is set to be the minimal value such that $|C(j, b; \delta)| \ge \lceil n \delta\rceil$. 

\begin{figure}[htp]
  \centering
  \includegraphics[width = 0.7\textwidth]{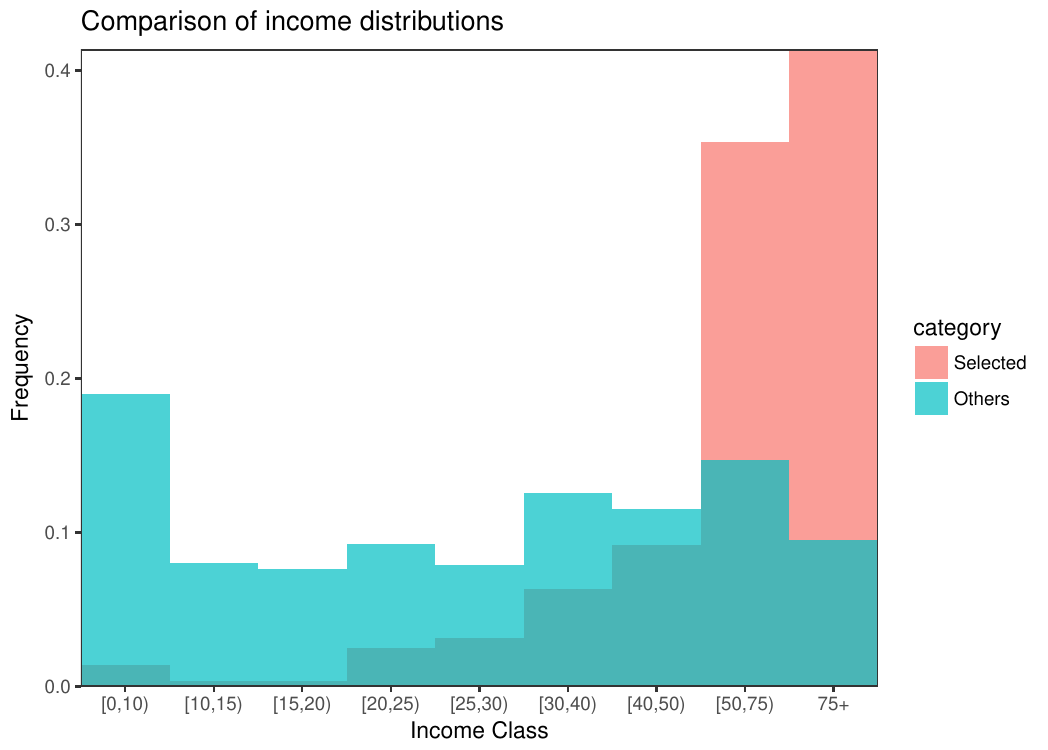}
  \caption{Comparison of income distributions before and after selection by \star.}\label{fig:income_dist}
\end{figure}

For illustration, we consider a moderately sized demographics dataset, that contains questions from  $n = 9409$ questionnaires filled out by shopping mall customers; see Section 14.2.3 of \cite{ESL} for details. The goal is to predict the income using the first 13 questions, listed in the first column of Table~\ref{tab:bump_hunting}, that provide the basic demographics. All variables are either binary or ordinal. We use the empirical distribution of $y_{i}$, the income, as a proxy for $F_{0}$, and use $B=0$ for the location offset. Since the $y_{i}$'s are discrete, the $p$-values are made continuous by randomization; to account for the effect of randomization, we repeat the entire experiment 100 times. 

We find that the box produced by \star ~is quite stable across experiments and the target \textFDR ~level $\alpha$. The results are reported in Table~\ref{tab:bump_hunting}. The last column details the interval for each variable of the most frequent box among 100 repetitions for $\alpha \in \{0.05, 0.1, 0.2\}$. The middle three columns contain the frequency of this particular box among 100 repetitions. Because the box is quite stable for most predictors, we conclude the randomization of the $p$-values does not substantially destabilize the discovered region. We also plot the income distribution of this sub-population and that of the overall population in Figure \ref{fig:income_dist}; thus, we see that our method has detected a subpopulation with significantly higher income than the overall population. Compared to other bump hunting algorithms, \star ~has statistical guarantees (\textFDR ~control). 

\begin{table}
\small
\centering
\begin{tabular}{lllll}
  \hline
Attributes  & box20 freq. & box10 freq. & box5 freq. & box \\ 
  \hline
sex & 1.00 & 1.00 & 1.00 & male/female \\ 
  marital status & 1.00 & 1.00 & 1.00 & married/single \\ 
  age & 0.92 & 0.92 & 0.54 & [18, 54] \\ 
  education & 0.99 & 0.99 & 0.99 & $>=$ high school \\ 
  occupation & 0.92 & 0.75 & 0.53 & professional/manager/student \\ 
  years in bay area & 0.63 & 0.63 & 0.63 & $>$10 \\ 
  dual incomes & 0.92 & 0.92 & 0.92 & not married/yes \\ 
  number in household & 1.00 & 1.00 & 1.00 & [2,4] \\ 
  number of children & 0.59 & 0.59 & 0.59 & $<=$2 \\ 
  householder status & 1.00 & 1.00 & 1.00 & own \\ 
  type of home & 1.00 & 1.00 & 1.00 & house \\ 
  ethnic classification & 1.00 & 1.00 & 1.00 & white \\ 
  language in home & 1.00 & 1.00 & 1.00 & english \\ 
   \hline
\end{tabular}
\caption{Results of bump hunting on the income dataset: the first column reports the variable names; the last column reports the selected interval of each variable in the detected box; the second to the fourth colums report the frequency of the box listed in the last column among 100 randomizations}\label{tab:bump_hunting}
\end{table}

\section{Example 4: hierarchical testing}\label{sec:tree}

\subsection{Problem Setup}
A well-studied case of structured multiple testing is that of hierarchical testing where the hypotheses have an intrinsic rooted tree structure and the non-null hypotheses form a rooted subtree. Most earlier works focus on \textFWER ~control~\citep[e.g.,][]{dmitrienko06, meinshausen08, huque08, brechenmacher11, goeman12}. However \textFWER ~controlling procedures are often quite conservative, having low power. 
In contrast, \cite{yekutieli06,yekutieli08} proposed a novel procedure in microarray analysis that guarantees \textFDR ~control under independence. 
\textFDR ~controlling methods  have since been applied to other areas including genomics~\citep[e.g.,][]{heller09, guo10, benjamini14, li14, lynch16} and neural image analysis~\citep[e.g.,][]{benjamini07, singh10, schildknecht16}. New procedures have also been recently introduced for multi-layer or multi-resolution \textFDR ~guarantees~\citep{barber16,peterson16, katsevich17, bogomolov17}.

Note that in many hierarchical testing problems, the $p$-value for a given node is derived from the $p$-values of the nodes descending from it (using, for example, the Simes test); in such problems, the $p$-values would be dependent and \star ~would not be applicable. However, when it is applicable, it is quite straightforward to be applied to hierarchical testing problems. Similar to Section \ref{sec:convex}, we consider an artificial example to describe the procedure and compare the performance of \star ~with other existing methods. 

\subsection{Procedure}\label{subsec:tree_proc}
 In order to maintain a subtree structure of the rejection set, at any step of the algorithm, we can simply set the candidates to be observed as all leaf nodes of the subtree of still masked $p$-values. Equivalently, \star ~will peel off the leaf nodes that have least favorable scores at each step. 

In this article, we only consider the canonical score. However, it is worth mentioning that there are various reasonable model-assisted scores that can be applied in hierarchical testing. For a certain class of problems such as wavelet-based image-denoising, it is common to assume that the signal strength has an isotonic ordering on the tree under which the signal strength of the parent node is higher than that of the child node. With this prior knowledge, we can combine the \EM ~algorithm and isotonic regression \citep[e.g.,][]{best90, mair09, stout13}, with a tree ordering, to compute the model-assisted score; See Example 2 in Section \ref{subapp:EM} for implementation details.

\subsection{Simulations}

\begin{figure}[h]
  \centering
  \includegraphics[width = \textwidth]{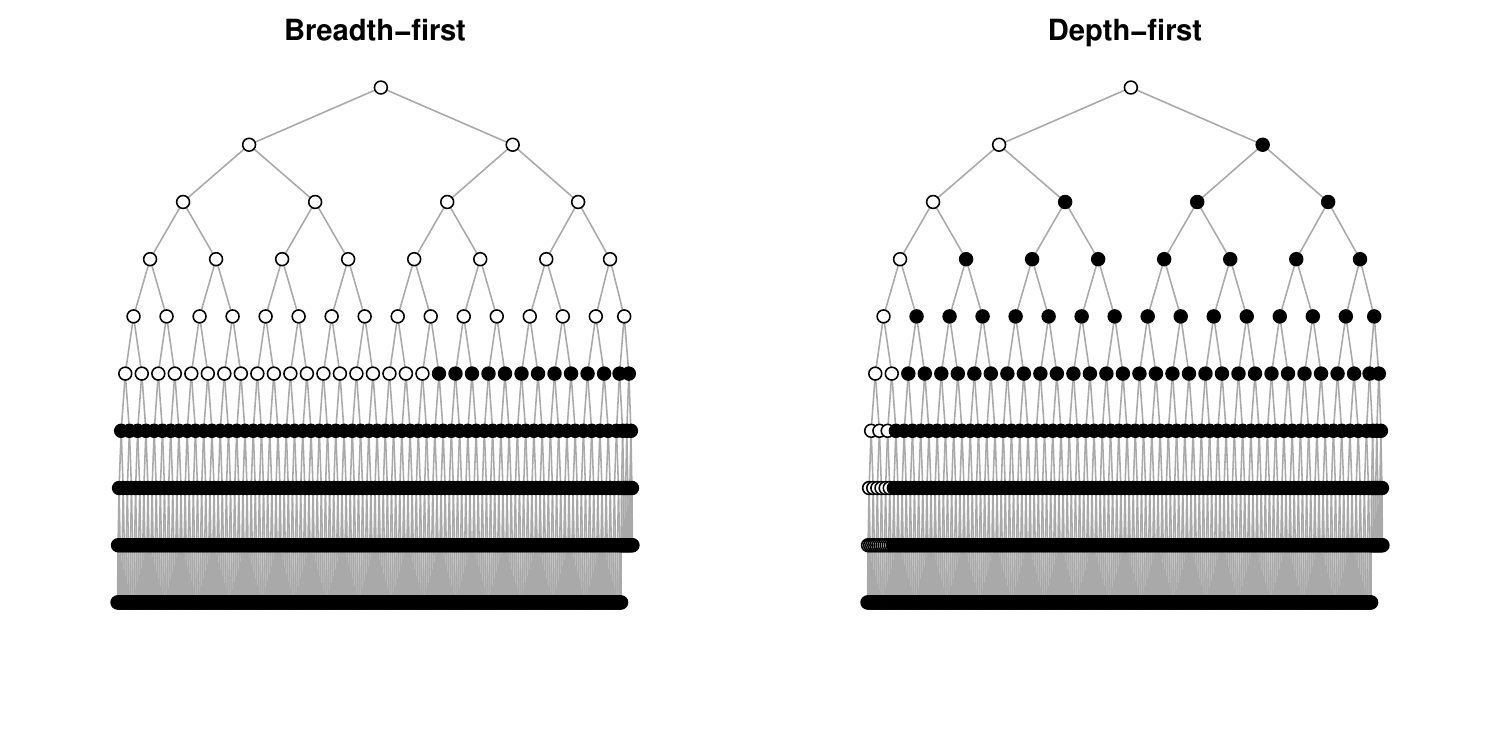}
  \caption{Hierarchical testing problems with non-nulls arranged with breadth-first-search ordering and depth-first-search ordering. Each node represents a hypothesis (1000 in total) with black ones being the nulls and white ones being non-nulls.}\label{fig:tree_expr_truth}
\end{figure}

To illustrate, we construct a balanced binary tree with $n = 1000$ nodes and set 50 nodes as non-nulls. We place the non-nulls as the first 50 nodes either in breath-first-search ordering or in depth-first-search ordering. The settings are plotted in Figure~\ref{fig:tree_expr_truth}. Heuristically, the methods by \cite{yekutieli08} and \cite{lynch16} may prefer the breath-first-search ordering since they are top-down algorithms that proceed layer-by-layer, and only proceed to child nodes when the parent node is rejected. When the non-nulls are placed in the DFS ordering, as shown in the right panel of Figure \ref{fig:tree_expr_truth}, those methods run the risk of stopping early in the long chain of $p$-values, and may therefore be less powerful. By contrast, \star ~proceeds adaptively in a bottom-up manner from leaves to the root, and we may expect it to be more robust to the layout of non-nulls. 

Another important factor that affects the power is the pattern of signal strength along the tree. The top-down procedures should be more favorable if the signal strength is in an isotonic ordering on the tree where the root node has the strongest signal. However, when the signals in top nodes are weak, these procedures risk being powerless. To account for this effect, we generate $p$-values by 
\begin{equation}\label{eq:simul_pvals}
p_{i} = 1 - \Phi(z_{i}), \quad \mbox{and} \quad z_{i}\sim N(\mu_{i}, 1).
\end{equation}
where the null $\mu_{i}$'s  equal  0 and the non-null $\mu_{i}$'s are set in one of the following three ways:
\begin{enumerate}[{Case} 1:]
\item $\mu_{i}\equiv 2, \,\, \forall i\in \cH_{0}^{c}$;
\item  $\displaystyle \mu_{i} = \left\{\begin{array}{ll}
2.5 & i\in \{25 \mbox{ nodes with smallest indices in }\cH_{0}^{c}\},\\
1.5 & \mbox{otherwise;}
\end{array}\right.$
\item $\displaystyle \mu_{i} = \left\{\begin{array}{ll}
1.5 & i\in \{25 \mbox{ nodes with smallest indices in }\cH_{0}^{c}\},\\
2.5 & \mbox{otherwise.}
\end{array}\right.$
\end{enumerate}

In summary, we consider six cases: the non-nulls are placed in breath-first-search ordering or depth-first-search ordering and the $p$-values are set in one of the above three cases. For each setting, we apply \star, with $h(p) = 2I(p\ge 0.5)$ and canonical scores, as well as \cite{yekutieli08}'s procedure and \cite{lynch16}'s procedures in their sections 4.1 and 4.3. We plot the results in Figures~\ref{fig:rej_tree_BFS_methods} and~\ref{fig:rej_tree_DFS_methods}. 

\begin{figure}[h!]
  \centering
  \includegraphics[width = 0.85\textwidth]{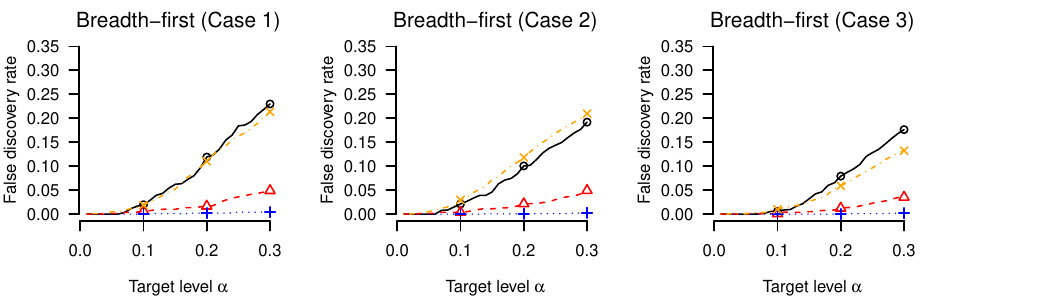}
  \includegraphics[width = 0.85\textwidth]{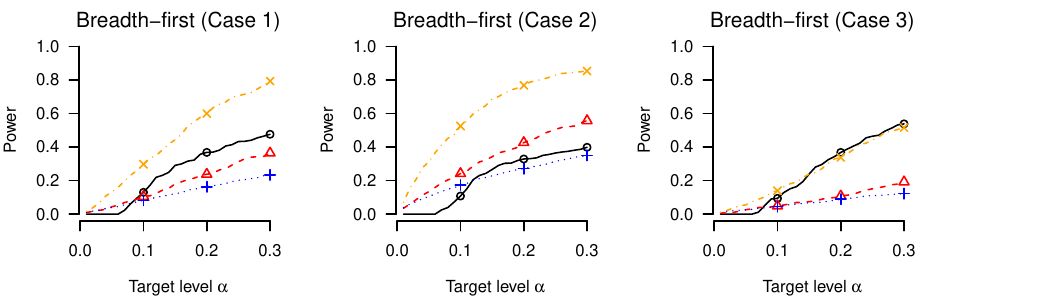}
  \caption{Comparison of \star ~with canonical score (black solid line), \cite{yekutieli08}'s procedure (red dashed line) and \cite{lynch16}'s procedures in their sections 4.1 (blue dotted line) and 4.3 (yellow dot-dashed line). The non-nulls are arranged in the breadth-first-search ordering.}\label{fig:rej_tree_BFS_methods}
\end{figure}

From Figure~\ref{fig:rej_tree_BFS_methods}, we see that all methods control \textFDR ~exactly. In cases 1 and 2, \star ~has lower power than \cite{lynch16}'s second procedure, but is competitive with other procedures. When the top non-nulls are weak as in Case 3, the forward procedures lose power remarkably while \star ~gains power as expected. It is clearly shown that the power of our method is quite stable across the different layouts as opposed to top-down procedures.

\begin{figure}[h!]
  \centering
  \includegraphics[width = 0.85\textwidth]{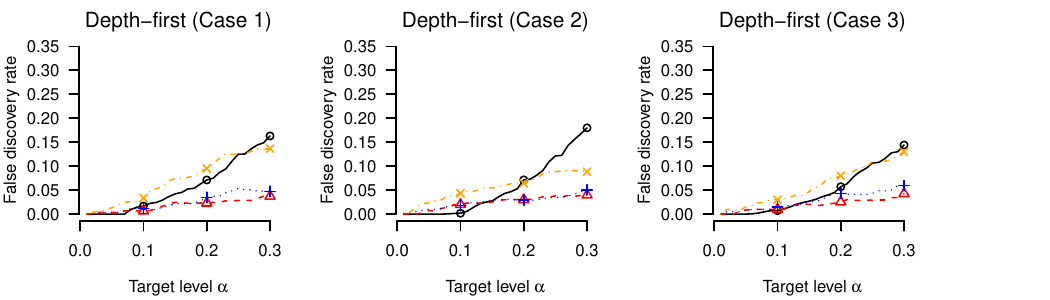}
  \includegraphics[width = 0.85\textwidth]{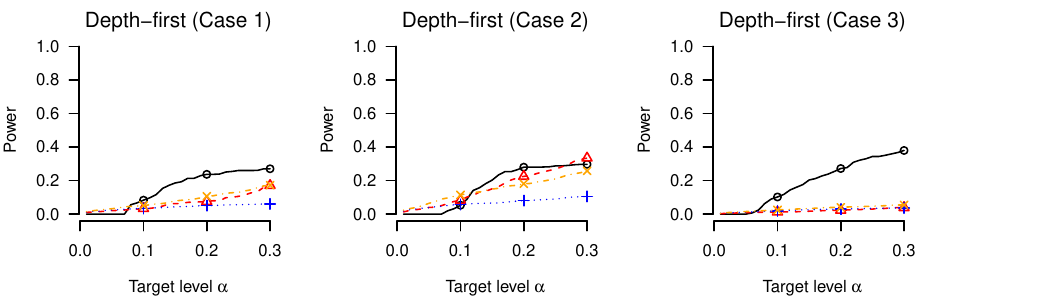}
  \caption{Comparison of \star ~with canonical score (black solid line), \cite{yekutieli08}'s procedure (red dashed line) and \cite{lynch16}'s procedures in their sections 4.1 (blue dotted line) and 4.3 (yellow dot-dashed line). The non-nulls are arranged in the depth-first-search ordering.}\label{fig:rej_tree_DFS_methods}
\end{figure}

From Figure~\ref{fig:rej_tree_DFS_methods}, we see that \star ~is most powerful even when the non-nulls are placed in a DFS ordering. Comparing to Figure~\ref{fig:rej_tree_BFS_methods}, the performance of our method does not degrade much. However, the top-down procedures lose power considerably and even become powerless in Case 1 and Case 3. 

\section{Example 5: wavelet thresholding}\label{sec:wavelet}

\begin{figure}[h]
  \centering
  \includegraphics[width = 0.65\textwidth]{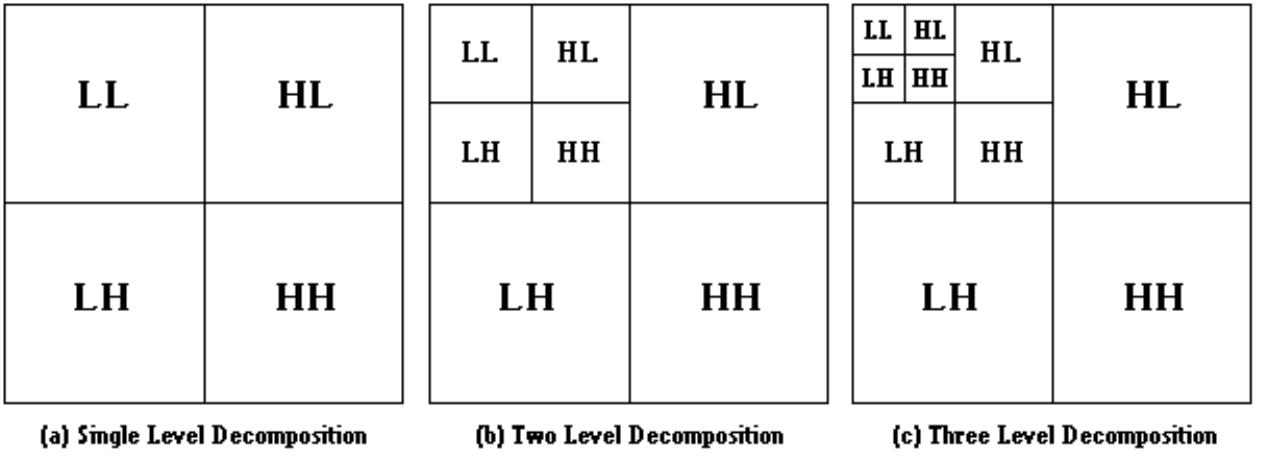}
  \includegraphics[width = 0.3\textwidth]{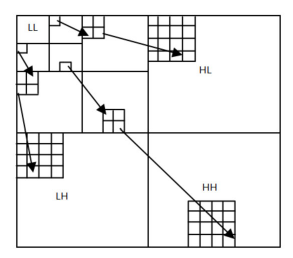}
  \caption{Schematic representation of the 2-dimensional discrete wavelet decomposition: panels (a) - (c) gives the schematic representation of first three levels of decomposition and the rightmost panel gives the description of the hierarchical structure. The figure is copied from \texttt{http://www.debugmode.com/imagecmp/classify.htm}}\label{fig:pyramid}
\end{figure}

Wavelet decomposition has been an efficient tool in signal processing for decades \citep[e.g.,][and references therein]{mallat99}.
 It provides an efficient and elegant methodology that represents signals at different scales, ranging from "backgrounds/trends'' to "edges/anomalies".

Due to the hierarchical nature of wavelet decomposition, the wavelet coefficients can be described by a balanced tree. Figure~\ref{fig:pyramid} gives a schematic description of the 2-dimensional discrete wavelet decomposition which is widely used in image processing. Given an image with size $2^{k}\times 2^{k}$, a high-pass filter and a low-pass filter are applied to the rows and columns to decompose the image into four sub-bands, LL, LH, HL and HH, where LL contains all information in lower frequencies and the last three contain the high-frequency information in different orientations. The procedure then proceeds recursively on LL to decompose the low-frequency sub-band as illustrated in panels (a) - (c) of Figure \ref{fig:pyramid} until LL only contains one pixel. The wavelet coefficients can be arranged in a quadtree; See the rightmost panel of Figure \ref{fig:pyramid} for illustration. We refer the readers to \cite{mallat99} for details.

Often, natural signals can be represented by a small subset of wavelet coefficients; equivalently, the wavelet coefficient vector is sparse. Under the standard assumption that the signal is multivariate normal and homoscedastic, the wavelet coefficients are independent normal variables since the transformation is unitary. Denote by $\hat{d}_{jk}$ the $j$-th wavelet coefficient in the $k$-th level, then $\hat{d}_{jk}\sim N(\mu_{jk}, \sigma^{2})$ for some common variance $\sigma^{2} > 0$. The problem of detecting ``large'' wavelet coefficients can be formalized as a selection problem that aims at detecting nonzero $\mu_{jk}$'s. The classic procedures, such as hard thresholding \citep{donoho94} and soft thresholding \citep{donoho95}, are proved to be minimax optimal from the estimation viewpoint. However, for most images it is reasonable to assume that the large coefficients form a subtree \citep[e.g.,][]{shapiro93, hegde15}. This tree structure has been exploited since \cite{shapiro93}'s Embedded Zerotrees of Wavelet transforms algorithm for efficient encoding of images. 

On the other hand, \citep{abramovich96} formalized the problem in terms of multiple hypothesis testing with $H_{jk}: \mu_{jk} = 0$ and applied the \bh ~procedure on the $p$-values calculated as $p_{jk} = 1 - \Phi(\hat{d}_{jk} / \hat{\sigma})$, where $\hat{\sigma}$ is estimated from the coefficients at the finest scale. This idea is exploited further using Bayesian \textFDR ~control methods \citep[e.g.,][]{tadesse05, lavrik08}.

However, these methods also do not take the structured sparsity into consideration. This motivates us to apply \star ~with a tree constraint that is discussed in Section~\ref{subsec:tree_proc}. To illustrate we compare our method with other methods on 48 standard gray-scale images of size $512\times 512$, available at \texttt{http://decsai.ugr.es/cvg/CG/images/base/$X$.gif}, where $X$ is an integer $\in \{1,\dots,48\}$. For each figure we add Gaussian white noise with SNR = 0.5dB, where SNR (signal-to-noise ratio) is defined as $10\log_{10}\lb s_{\mathrm{image}}^{2} / s_{\mathrm{noise}}^{2}\rb$ with unit dB (decibel). The two panels in the left column of Figure \ref{fig:image_cat} show one original image and its contaminated version. 

\begin{figure}[h]
  \centering
  \includegraphics[width = 0.8\textwidth]{{{8_la8_SNR_0.5_STAR_0.2_BH_0.05}}}
  \caption{A sample image of cat. The left column show the original image and the contaminated version. The other four panels display the recovered images of \star, the \bh ~procedure, hard thresholding (H.T.) and soft thresholding (S.T.), respectively. The signal-to-noise ratio and the compression ratio are reported in the title.}\label{fig:image_cat}
\end{figure}

We compare \star ~with the \bh ~procedure \citep{abramovich96}, hard thresholding \citep{donoho94} and soft thresholding \citep{donoho95}. We estimate the variance $\hat{\sigma}^{2}$ separately for LH, HL and HH sub-bands using the normalized median of the coefficients at the finest scale; See Chapter 11 of \cite{mallat99} for details. For \star ~and the \bh ~procedure, we calculate $p$-values by $p_{jk} = 1 - \Phi(\hat{d}_{jk} / \hat{\sigma}_{w})$ where $w\in \{LH, HL, HH\}$ depending on the location of $\hat{d}_{jk}$; for hard/soft thresholding, the threshold is chosen as $\sqrt{2\hat{\sigma}_{w}^{2}\log N}$ where $N$ denotes the total number of coefficients. For each method, we record the signal-to-noise ratio (SNR) and compression ratio (CR), defined as the ratio of the total number of wavelet coefficients and number of selected coefficients. To illustrate, we report the SNRs and CRs on the top of four panels in Figure \ref{fig:image_cat}. We observe that \star ~has the largest SNR and a more compact representation than the \bh ~procedure. Note that although the \bh ~procedure produces a visually clearer image, the compression ratio is much smaller than other algorithms. For thorough comparison, we compute the ratio of SNR and CR between \star ~and other methods and provide the boxplot in Figure \ref{fig:SNR_CR}. It is clearly shown that \star ~has larger SNR than other methods and provide a more parsimonious representation than the \bh ~procedure for most figures. We conclude that \star ~has reasonable performance in wavelet-based image denoising.

\begin{figure}[h]
  \centering
  \includegraphics[width = 0.35\textwidth]{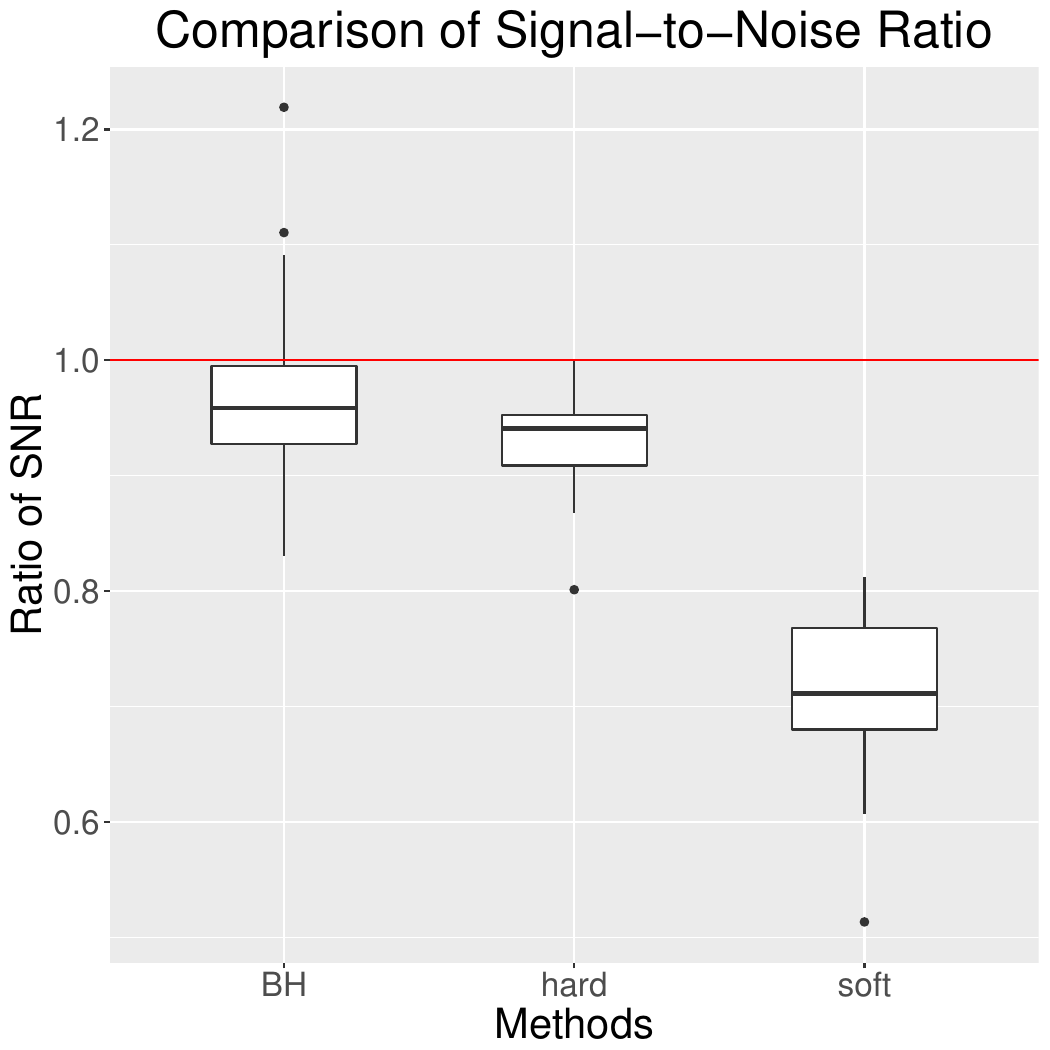}\hspace{1.5cm}
  \includegraphics[width = 0.35\textwidth]{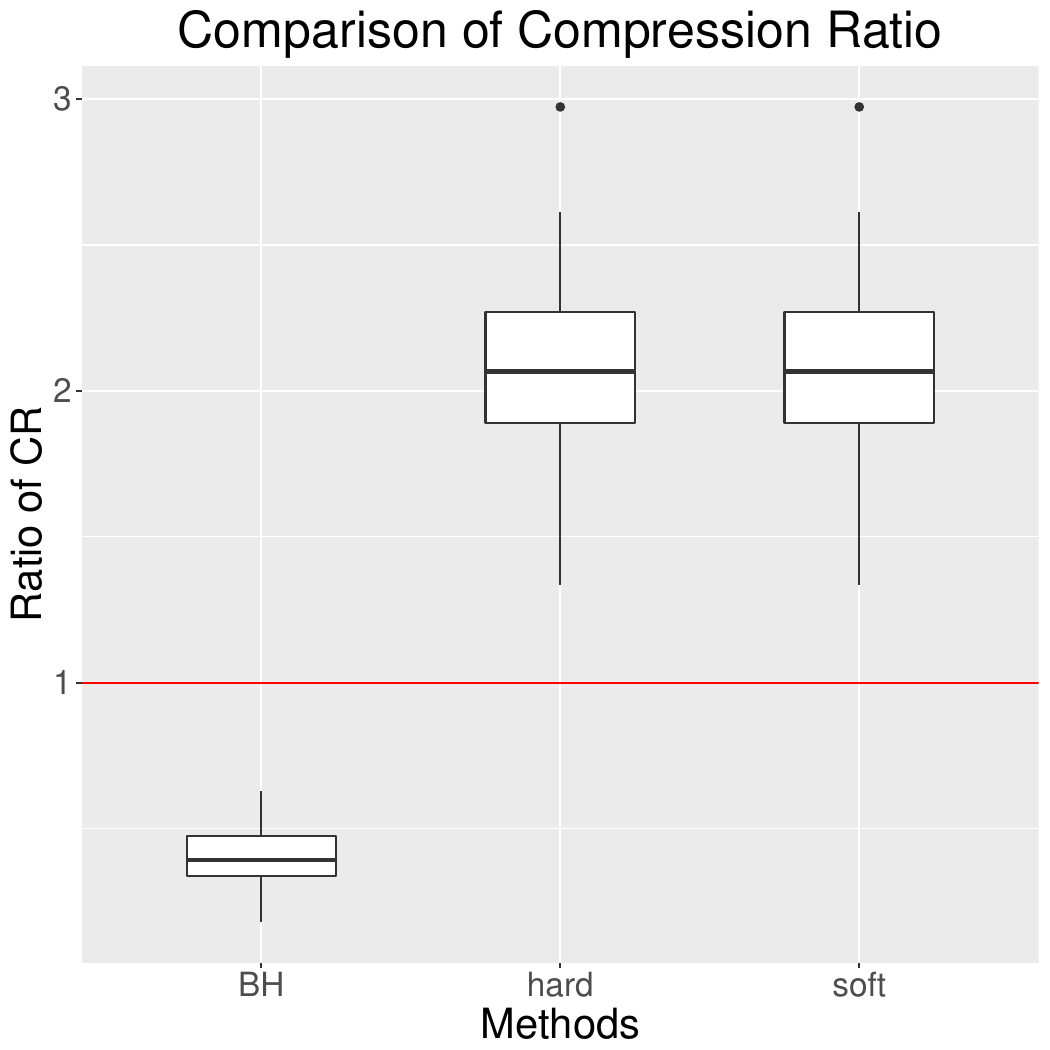}
  \caption{Comparison of SNR and CR between \star ~(normalized to the red line at unity) and other methods (the \bh ~procedure, hard thresholding, soft thresholding) on 48 gray-scale images.}\label{fig:SNR_CR}
\end{figure}

\section{Example 6: interaction selection in factorial experiments}\label{sec:factorial}
The heredity principle dates back to early work of \cite{yates37} on factorial experiments. The term ``heredity'' was coined by \cite{hamada92} in the context of experimental design and originally used to ensure the compatibility of the selected model in the presence of complex aliasing. On the other hand, \cite{nelder77} introduced the marginality principle, an equivalent version of the strong heredity principle, driven by interpretability. In recent years, this topic has been revisited under the high dimensional settings \citep[e.g.,][]{yuan09, choi10, bien13}. However, none of these works provides error measures, either \textFWER ~or \textFDR, regarding the selected variables. 

All aforementioned works consider the linear model with all main effects and second-order interaction effects:
\[
y = \beta_{0} + \sum_{j=1}^{p}\beta_{j}X_{j} + \sum_{j,k=1}^{p}\beta_{jk}X_{j}X_{k} + \eps,
\]
where $y\in \R^{n}$ is the response variable and $(X_{1}, \ldots, X_{p})\in \R^{n\times p}$ are $p$ factors. This induces a two-layer \DAG ~with the main effects $\{X_{j}\}$ in the first layer and the interaction effects $\{X_{j}X_{k}\}$ in the second layer. Write $Z$ for the design matrix including the intercept term, all main effects and interaction effects, i.e. $Z = (\textbf{1},\, X_{1}, X_{2}, \ldots, X_{p}, X_{1}X_{2}, \ldots, X_{p-1}X_{p})\in \R^{n\times \lb \frac{p^2 + p + 2}{2}\rb}$ and $\beta$ for the coefficients, i.e. $\beta = (\beta_{0}, \beta_{1}, \ldots, \beta_{p}, \beta_{12}, \ldots, \beta_{p-1,p})$. Then the model may be succinctly represented as
\[
y = Z\beta + \eps.
\]
If one can construct independent $p$-values for each entry of $\beta$, \star ~can be applied to guarantee the heredity principle and the \textFDR ~control simultaneously.

For illustration we consider a pharmaceutical dataset from \cite{jaynes13}. Their work aims at investigating the effect of six anti-viral drugs, namely Interferon-alpha (A), Interferon-beta (B), Interferon-gamma (C), Ribavirin (D), Acyclovir (E), and TNF-alpha (F), to Herpes simplex virus type 1 (HSV-1). They applied a $2^{6-1}$ fractional factorial design with $32$ runs and encode all factors by $+1$ and $-1$ (in fact, they have 35 runs with the last 3 runs being the replicated center points to evaluate the lack-of-fit).  The minimal word of the half-fraction design is $ABCDEF$ and hence it has resolution VI; see \cite{wu00} for the terminology and details. In other words, the main effects and the second-order interaction effects are not aliased with each other. This means that we can estimate all main effects and all two-factor interactions assuming that fourth-order and higher interactions are negligible, which is quite a reasonable assumption in practice \citep{jaynes13}. The response variable is set to be the logarithm of the viral infection load. 

Under the standard assumption that $\eps\sim N(0, \sigma^{2}I_{n\times n})$, the least-squares estimator is
\[\hat{\beta}\triangleq (\hat{\beta}_{0}, \hat{\beta}_{1}, \ldots, \hat{\beta}_{p}, \hat{\beta}_{1,2}, \ldots, \hat{\beta}_{p-1,p}) = (Z^{T}Z)^{-1}Z^{T}y\sim N(\beta, \sigma^{2}(Z^{T}Z)^{-1}).\]
Due to the nature of fractional factorial designs, $Z$ is an orthogonal matrix with 
\[Z^{T}Z = K\cdot I_{K\times K}, \quad K = \frac{p^2 + p + 2}{2} = 22.\]
Thus the entries of $\hat{\beta}$ are independent. Here we simply replace $\sigma$ by $\hat{\sigma}$, obtained from the regression residuals, i.e.
\[\hat{\sigma}^{2} = \frac{1}{n - \frac{p^{2} + p + 2}{2}}\|y- Z\hat{\beta}\|^{2} = \frac{1}{10}\|y- Z\hat{\beta}\|^{2}.\]
Then we can construct the $p$-values by
\begin{equation}\label{eq:DAG_pvals}
p_{j} = 1 - \Phi\lb\frac{\hat{\beta}_{j}}{\hat{\sigma}}\rb, \quad p_{jk} = 1 - \Phi\lb\frac{\hat{\beta}_{jk}}{\hat{\sigma}}\rb.
\end{equation}
Note that the constructed $p$-values may have some dependence due to sharing $\hat\sigma$. However as demonstrated experimentally in Section \ref{sec:sensitivity}, \star ~still seems to control the \textFDR ~when correlations are not too large. 

Finally, we apply \star ~on the $p$-values defined in \eqref{eq:DAG_pvals} with $\alpha = 0.2$ using the accumulation function $h(p) = 2I(p\ge 0.5)$ and the canonical scores. The selected variables include all main effects and three interaction effects: $A\times B, A\times D, C\times D$. This model identifies more effects than those in \cite{jaynes13}. To illustrate the performance of the selection procedure, we refit a linear model using these variables and find that all selected variables are marginally significant except $C$. The estimate and the $p$-values for both full model and refitted model are reported in Table~\ref{tab:DAG_pvals}. This suggests that \star ~may have successfully identified the important effects with the guarantee that the \textFDR ~is controlled at level $0.2$. 

\begin{table}[h]
  \centering
  \begin{tabular}{c|ccccccccc}
    \toprule
    & $A$ & $B$ & $C$ & $D$ & $E$ & $F$ & $A\times B$ & $A\times D$ & $C\times D$\\
    \midrule
  Estimate  & 0.04 & 0.07 & 0.02 & -0.32 & 0.11 & 0.05 & -0.05 & 0.04 & 0.05\\
  Orig. $p$-value  & 0.118 & 0.012 & 0.458 & 0.000 & 0.001 & 0.037 & 0.053 & 0.086 & 0.038\\
  Refit. $p$-value  & 0.060 & 0.002 & 0.381 & 0.000 & 0.000 & 0.011 & 0.018 & 0.038 & 0.011\\
    \bottomrule
  \end{tabular}
  \caption{Regression results for the pharmaceutical data using variables selected by \star.}\label{tab:DAG_pvals}
\end{table}

\section{More experimental results}\label{sec:more}
In this section, we provide more simulation results on the comparison of \star ~with SeqStep accumulation function with different cutoffs and the comparison of \star ~with different accumulation functions. The settings are the same as their counterpart in Sections \ref{sec:convex}, \ref{sec:tree} and \ref{sec:DAG}.

\subsection{Convex region detection}\label{app:experiment_convex}
\begin{figure}[h]
  \centering
  \includegraphics[width= 0.8\textwidth]{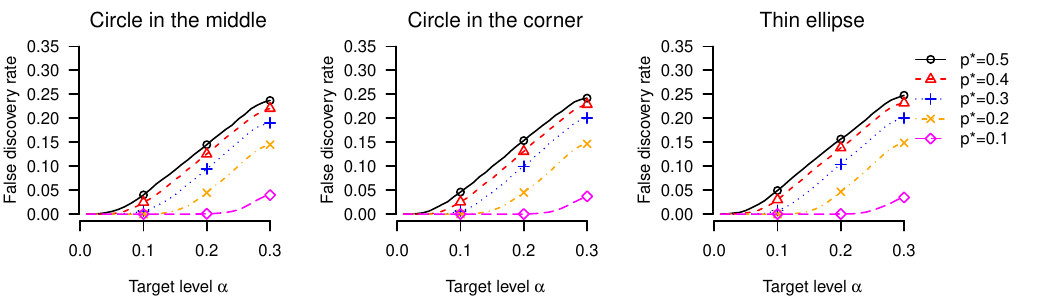}
  \includegraphics[width= 0.8\textwidth]{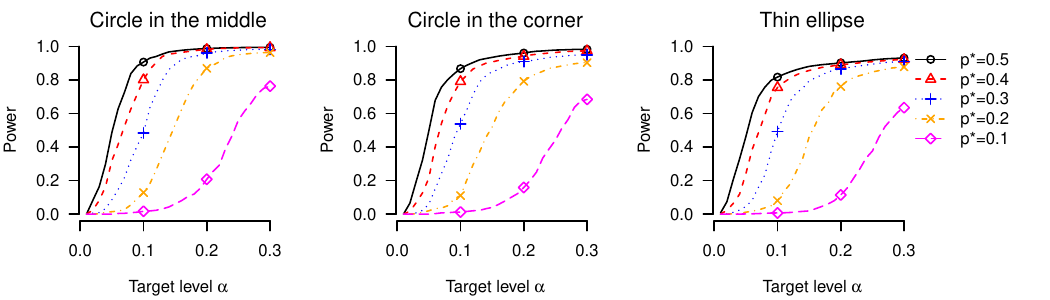}
  \caption{Comparison of \star ~with the SeqStep accumulation function with $\pth\in \{0.5, 0.4, 0.3, 0.2, 0.1\}$. }\label{fig:rej_convex_STARSS}
\end{figure}

\begin{figure}[h]
  \centering
  \includegraphics[width= 0.8\textwidth]{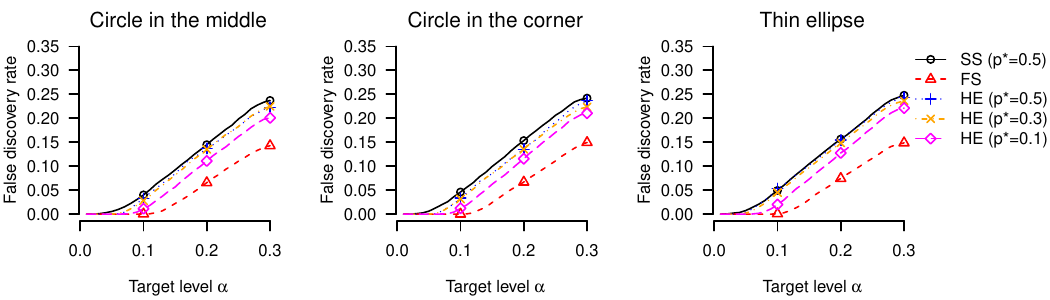}
  \includegraphics[width= 0.8\textwidth]{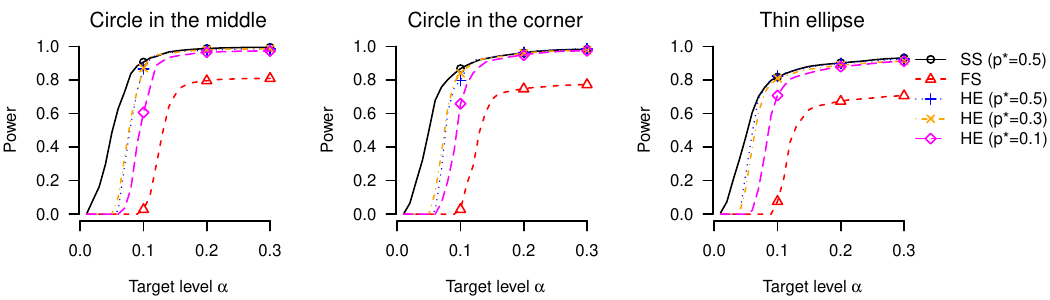}
  \caption{Comparison of \star ~with the SeqStep accumulation function (written as SS) with $\pth = 0.5$, that with the ForwardStop accumulation function (written as FS), and that with the HingeExp accumulation function (written as HE) with $\pth\in \{0.5, 0.3, 0.1\}$.}\label{fig:rej_convex_h}
\end{figure}

\newpage
\subsection{Hierarchical testing}\label{app:experiment_tree}
\begin{figure}[h]
  \centering
  \includegraphics[width= 0.8\textwidth]{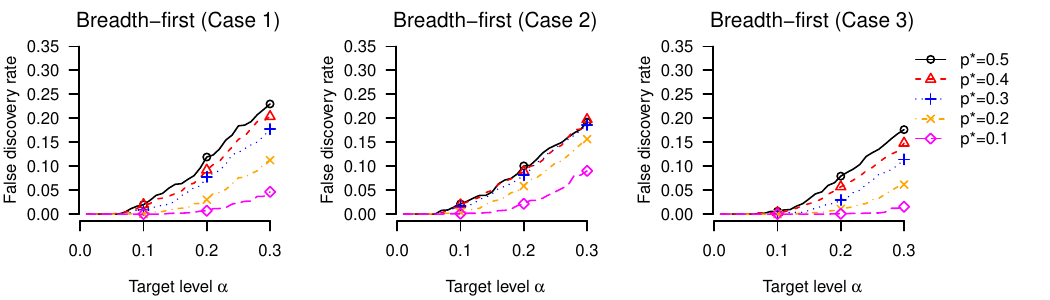}
  \includegraphics[width= 0.8\textwidth]{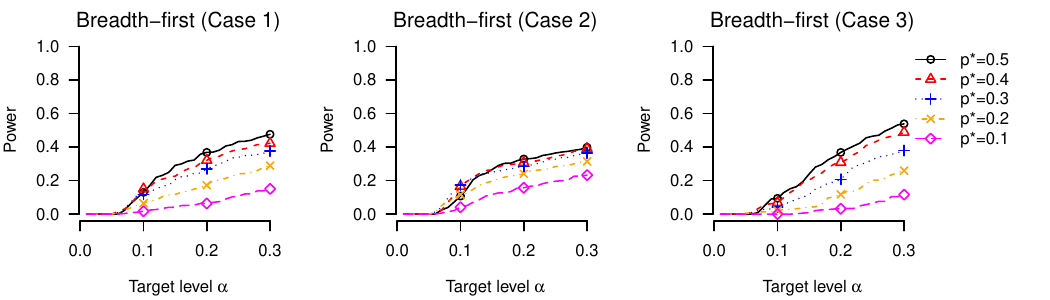}
  \caption{Comparison of \star ~with the SeqStep accumulation function with $\pth\in \{0.5, 0.4, 0.3, 0.2, 0.1\}$. The non-nulls placed in the breadth-first-search ordering. }\label{fig:rej_tree_BFS_STARSS}
\end{figure}

\begin{figure}[h]
  \centering
  \includegraphics[width= 0.8\textwidth]{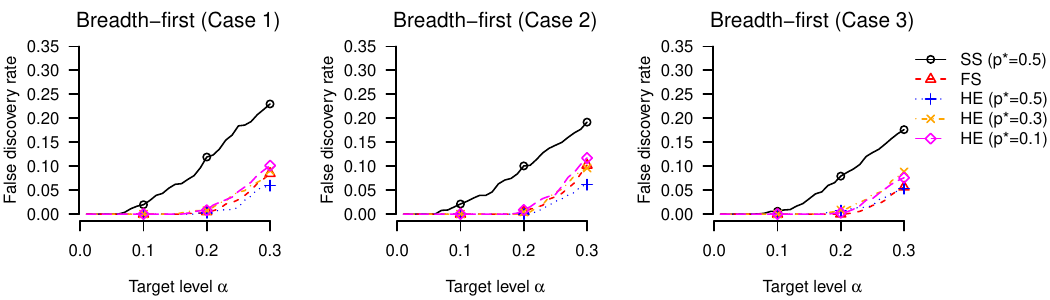}
  \includegraphics[width= 0.8\textwidth]{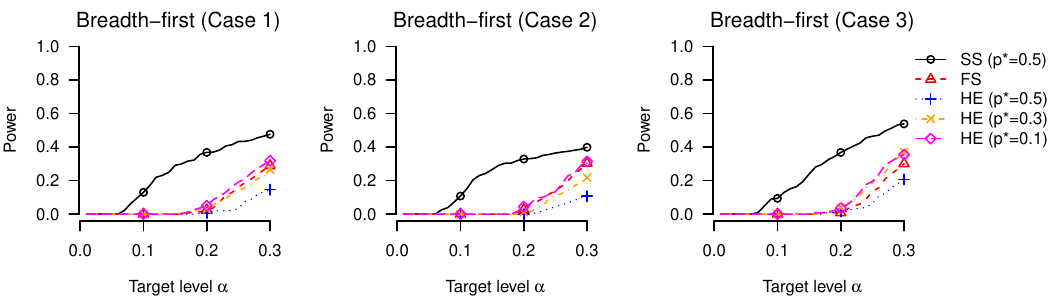}
  \caption{Comparison of \star ~with the SeqStep accumulation function (written as SS) with $\pth = 0.5$, that with the ForwardStop accumulation function (written as FS), and that with the HingeExp accumulation function (written as HE) with $\pth\in \{0.5, 0.3, 0.1\}$. The non-nulls placed in the breadth-first-search ordering. }\label{fig:rej_tree_h}
\end{figure}

\begin{figure}[h]
  \centering
  \includegraphics[width= 0.8\textwidth]{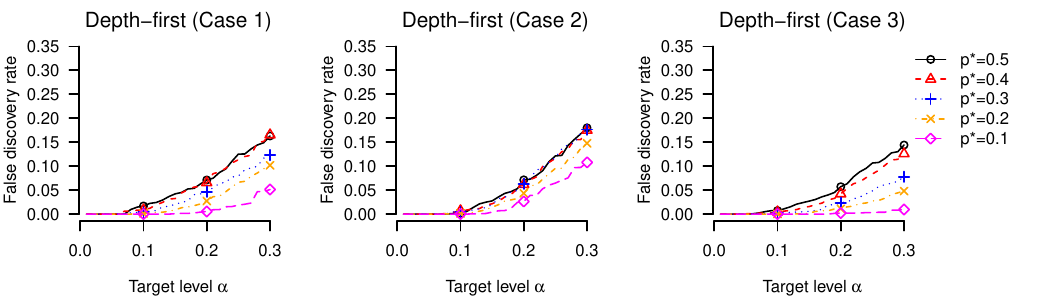}
  \includegraphics[width= 0.8\textwidth]{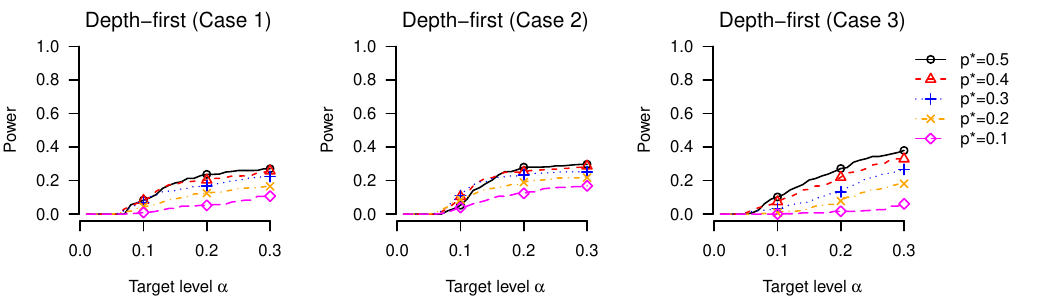}
  \caption{Comparison of \star ~with the SeqStep accumulation function with $\pth\in \{0.5, 0.4, 0.3, 0.2, 0.1\}$. The non-nulls placed in the depth-first-search ordering. }\label{fig:rej_tree_depth-first-search_STARSS}
\end{figure}

\begin{figure}[h]
  \centering
  \includegraphics[width= 0.8\textwidth]{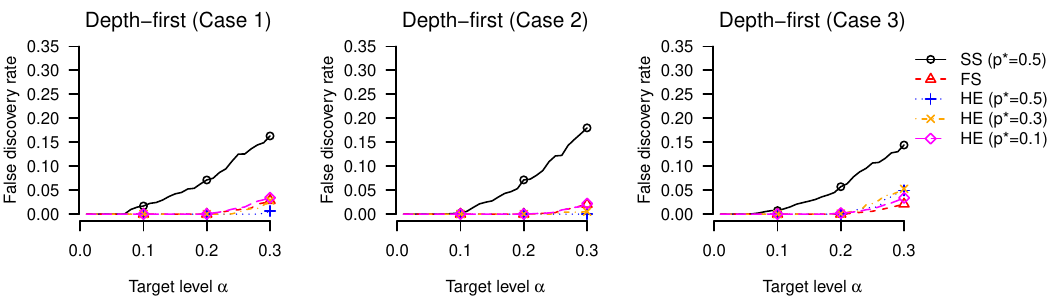}
  \includegraphics[width= 0.8\textwidth]{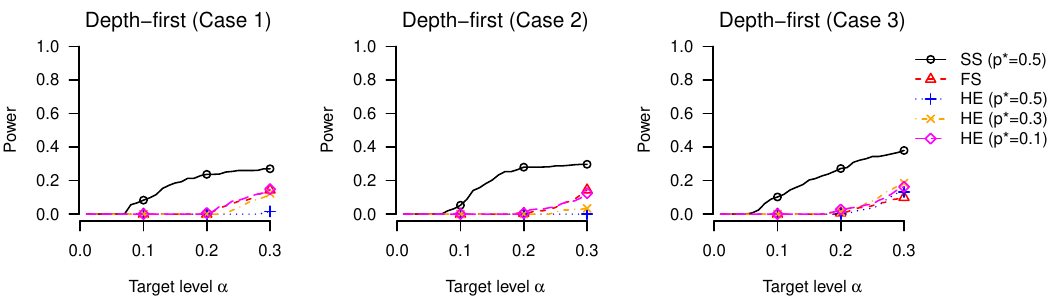}
  \caption{Comparison of \star ~with the SeqStep accumulation function (written as SS) with $\pth = 0.5$, that with the ForwardStop accumulation function (written as FS), and that with the HingeExp accumulation function (written as HE) with $\pth\in \{0.5, 0.3, 0.1\}$. The non-nulls placed in the depth-first-search ordering. }\label{fig:rej_tree_h}
\end{figure}

\newpage
\subsection{Selection under heredity principle}\label{app:experiment_DAG}

\begin{figure}[h]
  \centering
  \includegraphics[width= 0.8\textwidth]{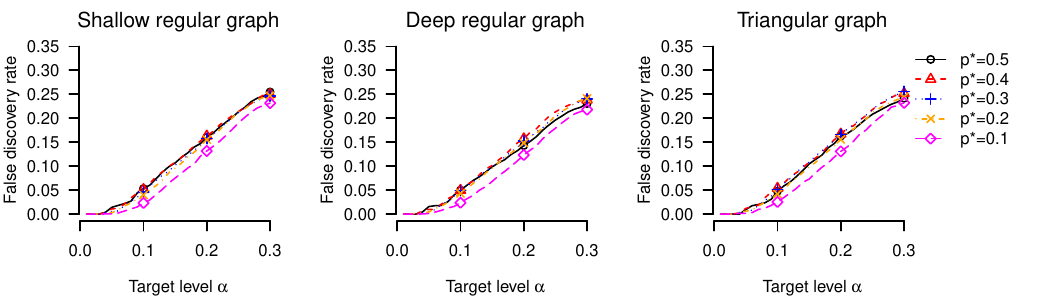}
  \includegraphics[width= 0.8\textwidth]{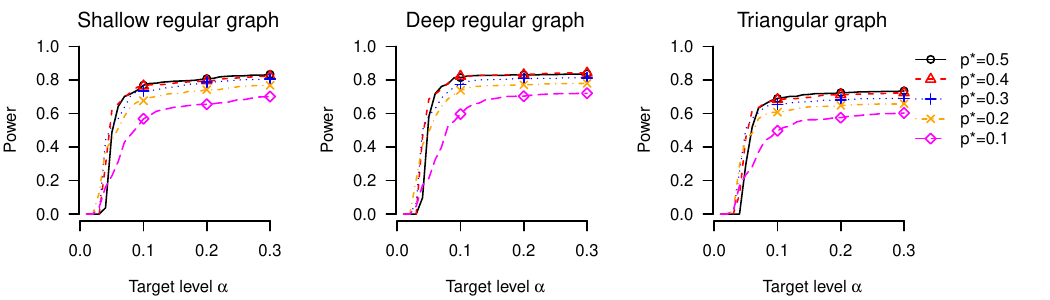}
  \caption{Comparison of \star ~with the SeqStep accumulation function with $\pth\in \{0.5, 0.4, 0.3, 0.2, 0.1\}$. }\label{fig:rej_DAG_STARSS}
\end{figure}

\begin{figure}[h]
  \centering
  \includegraphics[width= 0.8\textwidth]{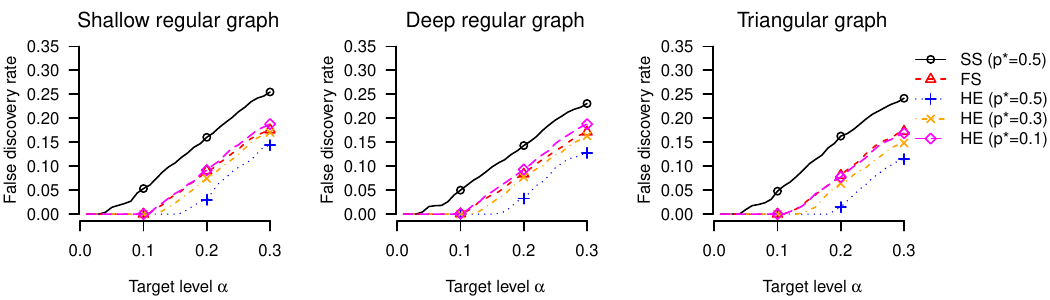}
  \includegraphics[width= 0.8\textwidth]{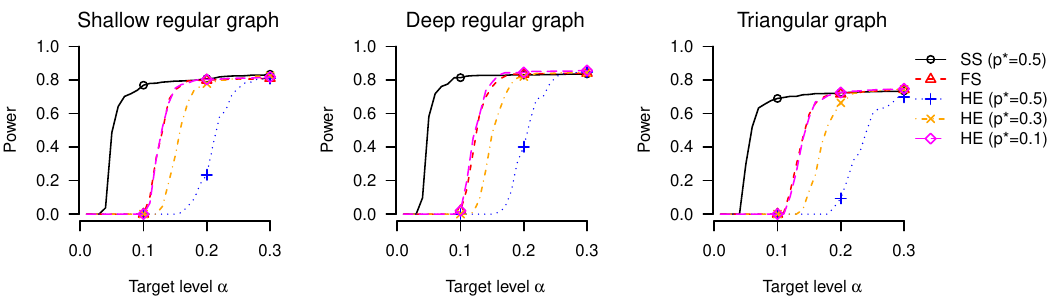}
  \caption{Comparison of \star ~with the SeqStep accumulation function (written as SS) with $\pth = 0.5$, that with the ForwardStop accumulation function (written as FS), and that with the HingeExp accumulation function (written as HE) with $\pth\in \{0.5, 0.3, 0.1\}$.}\label{fig:rej_DAG_h}
\end{figure}

\newpage

\section{Sensitivity analysis}\label{sec:sensitivity}

In this Section, we examine the performance of \star ~in the presense of correlated $p$-values in all three cases considered in Section \ref{sec:convex}, \ref{sec:tree} and \ref{sec:DAG}. We generate $p$-values from one-sided normal test with 
\[p_{j} = 1 - \Phi(z_{i}), \quad \mbox{where }z = (z_{1}, \ldots, z_{n})\sim N(\mu, \Sigma).\]
where $\mu = (\mu_{1}, \ldots, \mu_{n})$ is set to the same as in each section. Instead of letting $\Sigma = I_{n\times n}$ in the main text, we set $\Sigma$ as an equi-correlated matrix, i.e.
\[\Sigma = \lb
  \begin{array}{cccc}
    1 & \rho & \cdots & \rho\\
    \rho & 1 & \cdots & \rho\\
    \vdots & \vdots & \vdots & \vdots\\
    \rho & \rho & \cdots & 1
  \end{array}
\rb.\]
In the following analysis, we consider both the positive correlated case where $\rho = 0.5$ and the negative correlated case where $\rho = -0.5 / n$; in the latter case, we set the coefficient proportional to $1/n$ in order to guarantee that $\Sigma$ is positive semi-definite. 

It turns out that in all cases, the \textFDR ~is still controlled at the target level and the power remains high compared to other competitors. The results are plotted in the following subsections. Therefore, we conclude that \star ~is not sensitive to the correlation of $p$-values and can be used safely when the correlation between the $p$-values is not high.

\subsection{Convex region detection}

\begin{figure}[h]
  \centering
  \includegraphics[width= 0.8\textwidth]{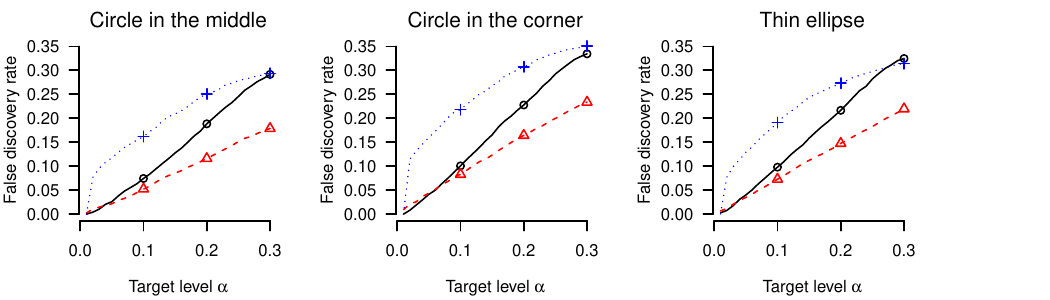}
  \includegraphics[width= 0.8\textwidth]{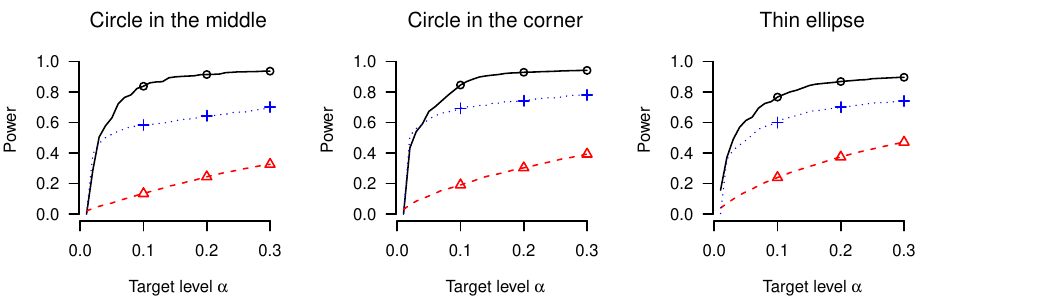}

  \caption{Comparison of \star ~(black solid), the \bh ~procedure (red dashed) and \adapt ~(blue dotted) in the positive correlated case $\rho = 0.5$. This is a counterpart of Figure \ref{fig:rej_convex_methods}.}\label{fig:rej_convex_methods_5}
\end{figure}

\begin{figure}[h]
  \centering
  \includegraphics[width= 0.8\textwidth]{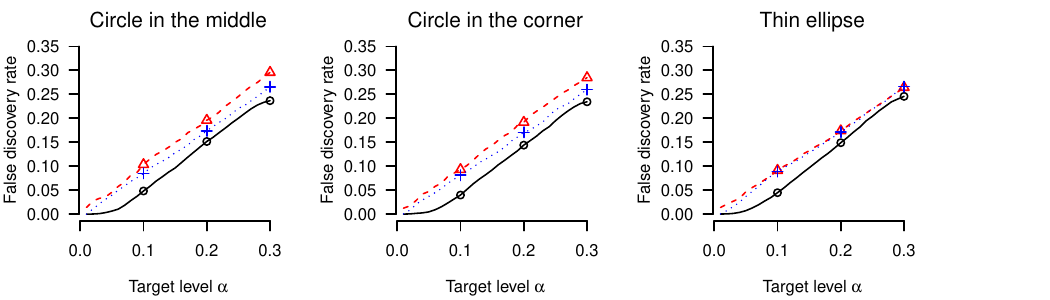}
  \includegraphics[width= 0.8\textwidth]{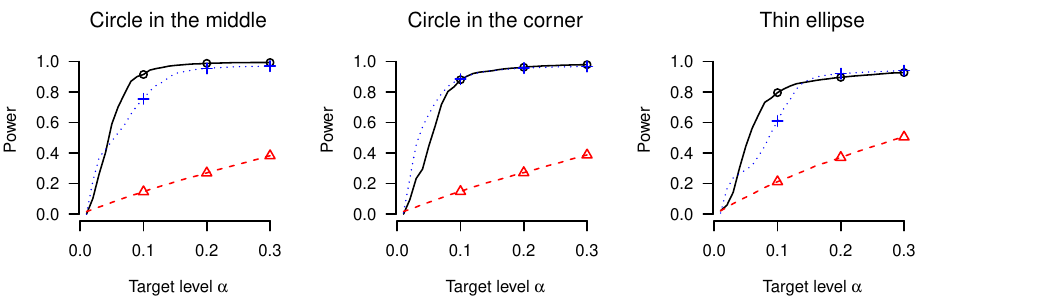}
  \caption{Comparison of \star ~(black solid), the \bh ~procedure (red dashed) and \adapt ~(blue dotted) in the negative correlated case $\rho = -0.5/n$. This is a counterpart of Figure \ref{fig:rej_convex_methods} in Section \ref{sec:convex}.}\label{fig:rej_convex_methods_-5}
\end{figure}

\newpage
\subsection{Hierarchical testing}
\begin{figure}[h]
  \centering
  \includegraphics[width= 0.8\textwidth]{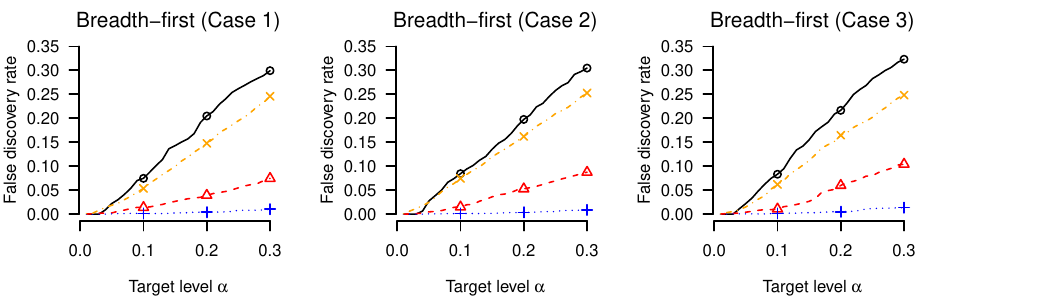}
  \includegraphics[width= 0.8\textwidth]{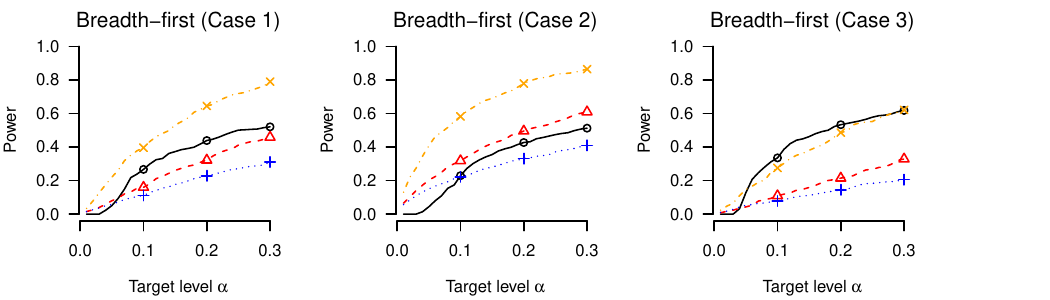}
  \caption{Comparison of \star ~(black solid), \cite{yekutieli08}'s procedure (red dashed) and \cite{lynch16}'s procedures in  sections 4.1 (blue dotted) and 4.3 (yellow dot dashed) in the positive correlated case $\rho = 0.5$. The non-nulls are arranged in the breadth-first ordering and this is a counterpart of Figure \ref{fig:rej_tree_BFS_methods} in Section \ref{sec:tree}.}\label{fig:rej_tree_BFS_methods_5}
\end{figure}

\begin{figure}[h]
  \centering
  \includegraphics[width= 0.8\textwidth]{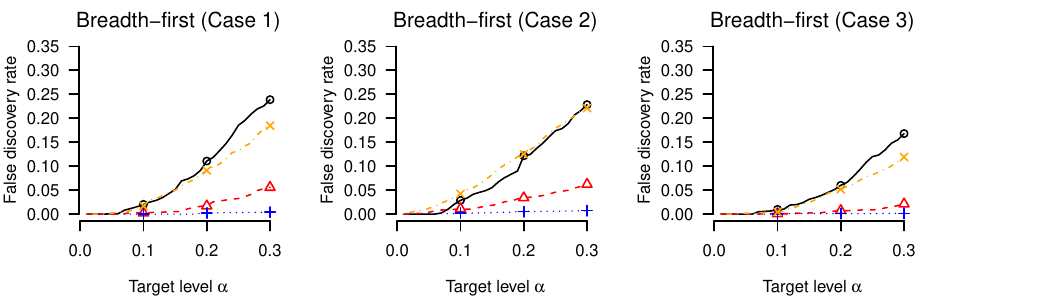}
  \includegraphics[width= 0.8\textwidth]{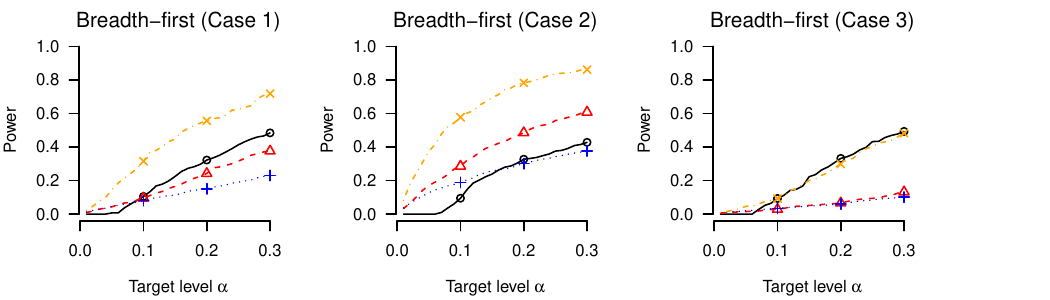}
  \caption{Comparison of \star ~(black solid), \cite{yekutieli08}'s procedure (red dashed) and \cite{lynch16}'s procedures in  sections 4.1 (blue dotted) and 4.3 (yellow dot dashed) in the negative correlated case $\rho = -0.5/n$. The non-nulls are arranged in the breadth-first ordering and this is a counterpart of Figure \ref{fig:rej_tree_BFS_methods} in Section \ref{sec:tree}.}\label{fig:rej_tree_BFS_methods_-5}
\end{figure}

\begin{figure}[h]
  \centering
  \includegraphics[width= 0.8\textwidth]{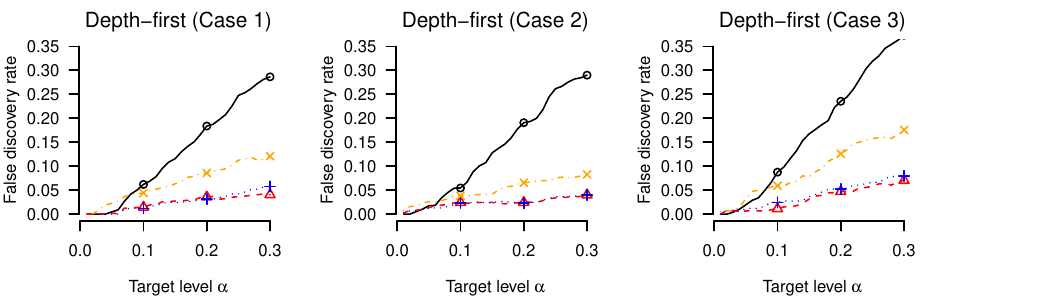}
  \includegraphics[width= 0.8\textwidth]{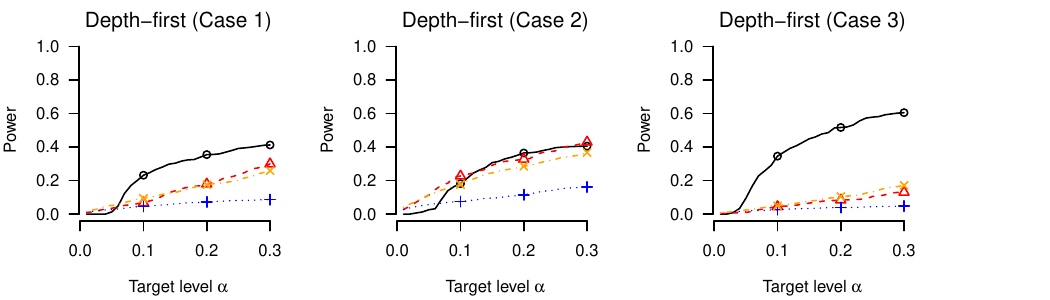}
  \caption{Comparison of \star ~(black solid), \cite{yekutieli08}'s procedure (red dashed) and \cite{lynch16}'s procedures in  sections 4.1 (blue dotted) and 4.3 (yellow dot dashed) in the positive correlated case $\rho = 0.5$. The non-nulls are arranged in the depth-first ordering and this is a counterpart of Figure \ref{fig:rej_tree_DFS_methods} in Section \ref{sec:tree}.}\label{fig:rej_tree_DFS_methods_5}
\end{figure}

\begin{figure}[h]
  \centering
  \includegraphics[width= 0.8\textwidth]{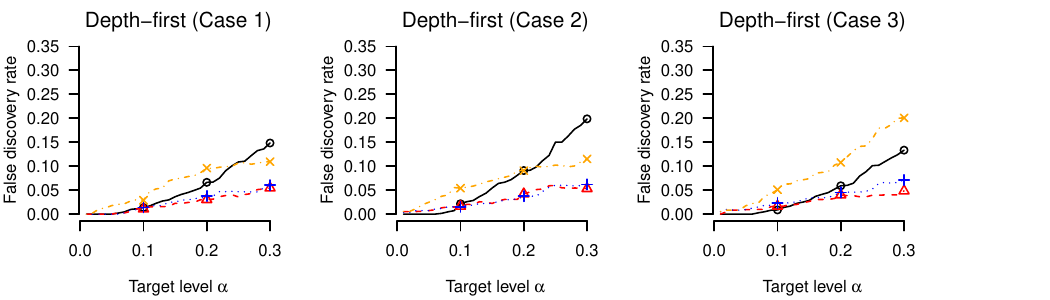}
  \includegraphics[width= 0.8\textwidth]{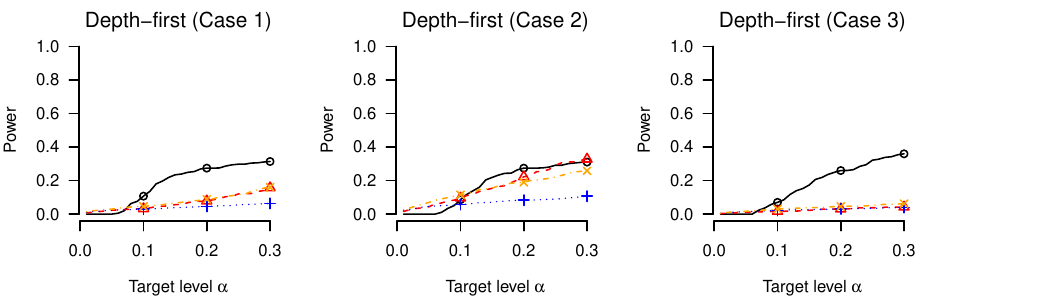}
  \caption{Comparison of \star ~(black solid), \cite{yekutieli08}'s procedure (red dashed) and \cite{lynch16}'s procedures in  sections 4.1 (blue dotted) and 4.3 (yellow dot dashed) in the negative correlated case $\rho = -0.5/n$. The non-nulls are arranged in the depth-first ordering and this is a counterpart of Figure \ref{fig:rej_tree_DFS_methods} in Section \ref{sec:tree}.}\label{fig:rej_tree_DFS_methods_-5}
\end{figure}

\clearpage
\newpage
\subsection{Selection under heredity principle}

\begin{figure}[h]
  \centering
  \includegraphics[width = 0.8\textwidth]{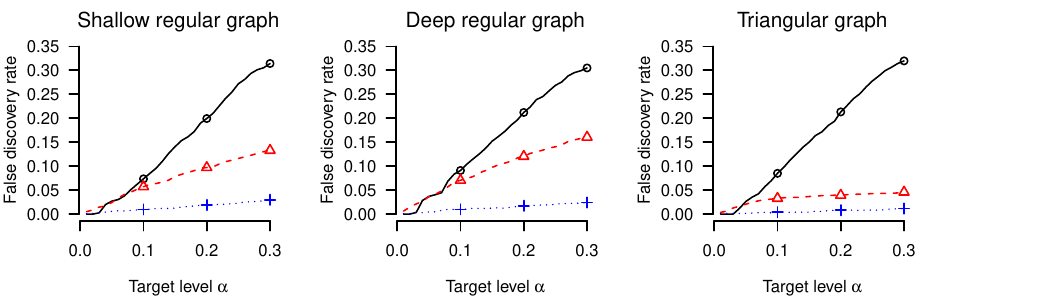}
  \includegraphics[width = 0.8\textwidth]{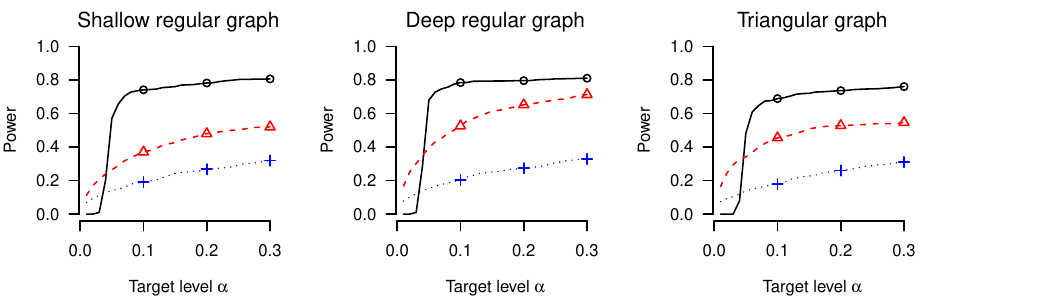}
  \caption{Comparison of \star ~(black solid), \cite{lynch16}'s method (red dashed) and \cite{ramdas2017dagger}'s method (blue dotted) in the positive correlated case $\rho = 0.5$. This is a counterpart of Figure \ref{fig:rej_DAG_methods} in Section \ref{sec:DAG}.}\label{fig:rej_DAG_methods_5}
\end{figure}

\begin{figure}[h]
  \centering
  \includegraphics[width = 0.8\textwidth]{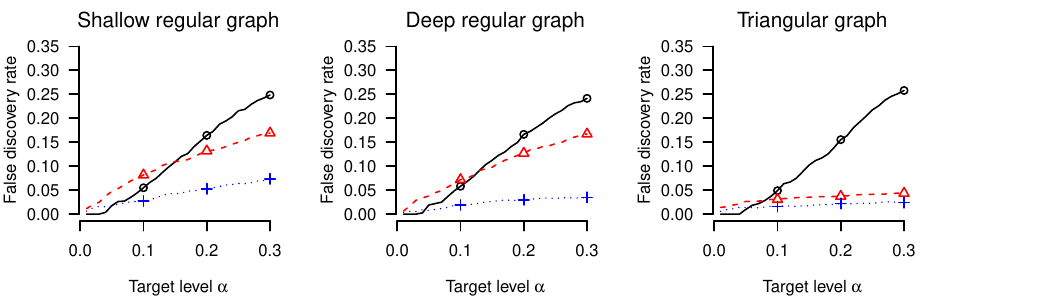}
  \includegraphics[width = 0.8\textwidth]{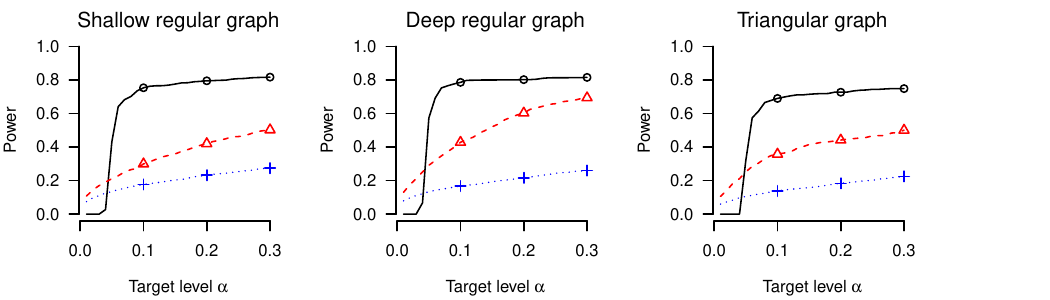}
  \caption{Comparison of \star ~(black solid), \cite{lynch16}'s method (red dashed) and \cite{ramdas2017dagger}'s method (blue dotted) in the negative correlated case $\rho = -0.5/n$. This is a counterpart of Figure \ref{fig:rej_DAG_methods} in Section \ref{sec:DAG}.}\label{fig:rej_DAG_methods_-5}
\end{figure}

\clearpage
\newpage

\section{Benefit of Using Masking Functions}\label{app:asymptotics}
\subsection{Asymptotic false discovery rate and power}
In this section, we investigate the performance of our method under certain asymptotic regimes. The goal of this section is to characterize the benefit of using the masking function through the comparison between our method and the plain accumulation test. For illustration, we focus on the cases without structural constraints, namely $\cK = 2^{[n]}$. Additionally, we restrict the attention into the non-interactive version of our method that computes a score $T_{i}$ for each p-value, described in Section \ref{sec:implementation}, only in the initial stage using $\{x_{i}, g(p_{i})\}_{i=1}^{n}$ and never updates it. This is equivalent to the accumulation test with p-values sorted by $T_{i}$. It is worth emphasizing that even in this basic setting where our method certainly loses many advantageous features, we can still observe the gain of using masking functions. The power analysis for interactive versions under general structural constraints is left to the future research. 

The aforementioned non-interactive version can be equivalently formulated as rejecting all p-values less than or equal to $\hat{t}_{n}$, where
\begin{equation}
  \label{eq:hatt}
  \hat{t}_{n} = \sup\left\{t: \frac{h(1) + \sum_{i=1}^{n}h(p_{i})I(T_{i}\le t)}{1 + \sum_{i=1}^{n}I(T_{i}\le t)}\le \alpha\right\}.
\end{equation}
Since the rejection set only depends on the ordering of $T_{i}$'s, we can assume without loss of generality that $T_{i}\in
 (0, 1]$; otherwise we can transform $T_{i}$ by $\mathrm{arctan}(1 / T_{i})/\pi + 1 / 2$. The accumulation test is a special case with $T_{i} = i / n$. As in previous works \citep{li2016accumulation, lei2016power}, assume that for each $t\in [0, 1]$,
\begin{equation}
  \label{eq:Fn}
  F_{n}(t)\triangleq \frac{1}{n}\sum_{i=1}^{n}I(T_{i} \le t)\stackrel{p}{\rightarrow} F(t)
\end{equation}
and 
\begin{equation}
  \label{eq:Fn}
  H_{n}(t)\triangleq \frac{1}{n}\sum_{i=1}^{n}h(p_{i})I(T_{i} \le t)\stackrel{p}{\rightarrow} H(t),
\end{equation}
for some functions $F(t)$ and $H(t)$, which are not necessarily continuous. Note that both $F(t)$ and $H(t)$ are non-decreasing with $F(0) = H(0) = 0$ and we can assume $F(t) > 0$ for all $t\in (0, 1]$ without loss of generality; otherwise if $t_{0} = \sup\{t: F(t) = 0\}$ we can tranform $T_{i}$ to $(T_{i} - t_{0}) / (1 - t_{0})$. Intuitively, 
\[\frac{h(1) + \sum_{i=1}^{n}h(p_{i})I(T_{i}\le t)}{1 + \sum_{i=1}^{n}I(T_{i}\le t)} = \frac{h(1) + nH_{n}(t)}{1 + nF_{n}(t)}\approx \frac{H(t)}{F(t)}.\]
This motivates us to define $t^{*}$ as
\begin{equation}
  \label{eq:tstar}
  t^{*} = \sup\{t\in [0, 1]: H(t)\le \alpha F(t)\}.
\end{equation}
Note that $t^{*}$ is always well-defined because $H(0) = 0 \le \alpha F(0)$. The following lemma justifies the above heuristic argument that $t^{*}$ is the limit of $\hat{t}_{n}$. The proof is relegated to Section \ref{subapp:power_proof}.
\begin{lemma}\label{lem:thattstar}
  Let $H_{\alpha}(t) = H(t) - \alpha F(t)$. Assume that 
  \begin{equation}
    \label{eq:uniform_convergence}
    \sup_{t\in [0, 1]}|F_{n}(t) - F(t)|\stackrel{p}{\rightarrow} 0, \quad \sup_{t\in [0, 1]}|H_{n}(t) - H(t)|\stackrel{p}{\rightarrow} 0.
  \end{equation}
Then $\hat{t}_{n}\stackrel{p}{\rightarrow} t^{*}$ if either of the following conditions hold:
  \begin{enumerate}
  \item $t^{*} = 0$ and 
    \begin{equation}
      \label{eq:tstar0}
      \inf_{t' \ge t}H_{\alpha}(t') > 0, \quad \mbox{for any }t > t^{*};
    \end{equation}
  \item $t^{*} = 1$ and there exists a sequence $t_{m}\uparrow t^{*}$ such that
    \begin{equation}
      \label{eq:tstar1}
      H_{\alpha}(t_{m}) < 0, \quad \mbox{for any }m;
    \end{equation}
  \item $t^{*}\in (0, 1)$ and both \eqref{eq:tstar0} and \eqref{eq:tstar1} hold.
  \end{enumerate}
\end{lemma}
\begin{remark}
  If $H_{\alpha}(t)$ is continuous, then the condition \eqref{eq:tstar0} can be removed from part 1 and part 3. This is because if there exists $t > t^{*}$ such that
\[\inf_{t' \ge t}H_{\alpha}(t')\le 0,\]
then the continuity of $H_{\alpha}$ implies the existence of $\td{t}\ge t > t^{*}$ with $H_{\alpha}(\td{t})\le 0$. This contradicts the definition of $t^{*}$. 
\end{remark}
Lemma \ref{lem:thattstar} enables us to compute the asymptotic false discovery rate and power. To be precise, the \emph{false discovery proportion} and the \emph{true positive rate} are defined as
\[\FDP_{n} \triangleq \frac{\sum_{i=1}^{n}I(h_{i} = 0, T_{i}\le \hat{t}_{n})}{\sum_{i=1}^{n}I(T_{i}\le \hat{t}_{n})}, \quad \TPR_{n} = \frac{\sum_{i=1}^{n}I(h_{i} = 1, T_{i}\le \hat{t}_{n})}{\sum_{i=1}^{n}I(h_{i} = 1)},\]
where $h_{i} = 1$ iff $H_{i}$ is false. Then by definition the \emph{false discovery rate} and the \emph{power} can be written as 
\[\FDR_{n} = \E [\FDP_{n}], \quad \Pow_{n} = \E [\TPR_{n}].\]
Assume that for all $t \in [0, 1]$, 
\begin{equation}
  \label{eq:Fn01}
  F_{n0}(t)\triangleq \frac{1}{n}\sum_{i=1}^{n}I(h_{i} = 0, T_{i}\le t)\stackrel{p}{\rightarrow} F_{0}(t), \quad F_{n1}(t)\triangleq \frac{1}{n}\sum_{i=1}^{n}I(h_{i} = 1, T_{i}\le t)\stackrel{p}{\rightarrow} F_{1}(t),
\end{equation}
for some functions $F_{0}(t)$ and $F_{1}(t)$. Note that 
\[F_{n}(t) = F_{n0}(t) + F_{n1}(t), \quad F(t) = F_{0}(t) + F_{1}(t).\]
The following theorem establishes the asymptotic false discovery rate and power for procedures in the form of \eqref{eq:hatt}.
\begin{theorem}\label{thm:FDR_power}
  Assume that $F_{n0}(t)$ and $F_{n1}(t)$ are continuous at $t = t^{*}$. Then under the assumptions of Lemma \ref{lem:thattstar}, 
\[\FDR_{n}\rightarrow \frac{F_{0}(t^{*})}{F(t^{*})}, \quad \mbox{if }t^{*} > 0,\]
and
\[\Pow_{n}\rightarrow \frac{F_{1}(t^{*})}{F_{1}(1)},\]
where $t^{*}$ is defined in \eqref{eq:tstar}. Note that the asymptotic power does not require $t^{*} > 0$.
\end{theorem}

As in \cite{li2016accumulation}, we assume there exists a continuous function $f(t)$ such that
\begin{equation}
  \label{eq:ft}
  \sup_{k\in [n]}\bigg|\frac{1}{k}\sum_{i=1}^{k}I(h_{i} = 1) - f\lb\frac{k}{n}\rb\bigg|\rightarrow 0,
\end{equation}
where $h_{i}$'s are treated as fixed and 
\[p_{i}\mid h_{i} = 0\sim \P_{0}, \quad p_{i}\mid h_{i} = 1\sim \P_{1},\]
where $\P_{0}$ has a non-decreasing density. Recall Proposition \ref{prop:density} that 
\[\E_{0} [h(p_{i})\mid g(p_{i})]\le 1\,\, \mbox{almost surely}.\]
Denote by $E_{0}$ (resp. $G_{0}$) and $\E_{1}$ (resp. $G_{1}$) the expectation (resp. distribution) given $h_{i} = 0$ and $h_{i} = 1$ respectively. The following lemma yields the form of $F(t), F_{0}(t), F_{1}(t), H(t)$. 
\begin{lemma}\label{lem:uniform}
  Assume that $p_{i}$'s are independent. Then 
\[F_{0}(t) = \lim_{n\rightarrow \infty}\frac{1}{n}\sum_{i=1}^{n}I(h_{i} = 0)\P_{0}(T_{i}\le t), \quad F_{1}(t) = \lim_{n\rightarrow \infty}\frac{1}{n}\sum_{i=1}^{n}I(h_{i} = 1)\P_{1}(T_{i}\le t)\]
and 
\[F(t) = F_{0}(t) + F_{1}(t), \quad H(t) = F_{0}(t) + \lim_{n\rightarrow \infty}\frac{1}{n}\sum_{i=1}^{n}I(h_{i} = 1)\E_{1}h(p_{i})I(T_{i}\le t).\]
\end{lemma}

\subsection{Re-analysis of accumulation tests}
The accumulation test of \citet{li2016accumulation} corresponds to the choice $T_{i} = i / n$. In this case,
\[F_{1}(t) = \lim_{n\rightarrow\infty}\frac{1}{n}\sum_{i=1}^{\lfloor nt\rfloor}I(h_{i} = 1) = \lim_{n\rightarrow\infty}\frac{\lfloor nt\rfloor}{n}\frac{1}{\lfloor nt\rfloor}\sum_{i=1}^{\lfloor nt\rfloor}I(h_{i} = 1) = tf(t).\]
Similarly,
\[F_{0}(t) = t(1 - f(t)), \quad F(t) = t.\]
Let $\mu = \E_{1}h(p_{i})$,
\[H(t) = F_{0}(t) + \lim_{n\rightarrow\infty}\frac{1}{n}\sum_{i=1}^{\lfloor nt\rfloor}I(h_{i} = 1)\mu = t(1 - f(t)) + \mu tf(t) = t(1 - (1 - \mu)f(t)).\]
By definition \eqref{eq:tstar},
\begin{align}
  t_{\mathrm{AT}}^{*} &= \sup\{t\in [0, 1]: H(t)\le \alpha F(t)\} = \sup\{t\in [0, 1]: t(1 - (1 - \mu)f(t))\le \alpha t\}\nonumber\\
& = \sup\left\{t\in [0, 1]: f(t) \ge \frac{1 - \alpha}{1 - \mu}\right\}\label{eq:tstarAT}.
\end{align}
Note that $t_{\mathrm{AT}}^{*} > 0$ only if $\mu\ge \alpha$ since $f(t)\le 1$ for any $t\in [0,1]$. If $f(t)$ is non-increasing and $tf(t)$ is non-decreasing, as considered in \cite{li2016accumulation}, it is easy to verify the conditions \eqref{eq:tstar0} and \eqref{eq:tstar1} in Lemma \ref{lem:thattstar}. Thus, by Theorem \ref{thm:FDR_power}, if $t_{\mathrm{AT}}^{*} > 0$,
\begin{align}
  \FDR_{n}&\rightarrow \frac{F_{0}(t_{\mathrm{AT}}^{*})}{F(t_{\mathrm{AT}}^{*})} = \alpha \frac{F_{0}(t_{\mathrm{AT}}^{*})}{H(t_{\mathrm{AT}}^{*})} = \alpha \frac{t_{\mathrm{AT}}^{*}(1 - f(t_{\mathrm{AT}}^{*}))}{t_{\mathrm{AT}}^{*}(1 - f(t_{\mathrm{AT}}^{*})) + \mu t_{\mathrm{AT}}^{*}f(t_{\mathrm{AT}}^{*})}\nonumber\\
& = \alpha \frac{1 - f(t_{\mathrm{AT}}^{*})}{1 - (1 - \mu)f(t_{\mathrm{AT}}^{*})} = \frac{\alpha - \mu}{1 - \mu} = \alpha - \frac{1 - \alpha}{1 - \mu}\mu.  \label{eq:FDR_AT}
\end{align}
Thus the conservatism of false discovery rate is $\frac{1 - \alpha}{1 - \mu}\mu$, which is decreasing in $\mu$. Similarly,
\begin{equation}
  \label{eq:power_AT}
  \Pow_{n}\rightarrow \frac{F_{1}(t_{\mathrm{AT}}^{*})}{F_{1}(1)} = \frac{t_{\mathrm{AT}}^{*}f(t_{\mathrm{AT}}^{*})}{f(1)}.
\end{equation}
This recovers Theorem 3 of \cite{li2016accumulation}. Since $tf(t)$ is non-decreasing, the asymptotic power is non-decreasing in $t_{\mathrm{AT}}^{*}$ and is thus non-increasing in $\mu$. Therefore, when $\mu = \E_{1}[h(p_{i})]$ becomes smaller, the conservatism of false discovery rate is reduced while the power is enhanced. By Lemma 2 of \cite{li2016accumulation}, $h(t) = I(p > \pth) / (1 - \pth)$ yields the smallest $\mu$ among all functions bounded by $1 / (1 - \pth)$. 

\subsection{Analysis of our method without informative pre-ordering}
In most applications, an informative pre-ordering is not available. A typical two-group model assumes that $h_{i}\stackrel{i.i.d.}{\sim}\mathrm{Ber}(\pi_{1})$. In this case,
\[f(t)\equiv \pi_{1}.\]
Therefore, as observed by \cite{lei2016power}, the plain accumulation test either has zero power or full power since
\[t_{\mathrm{AT}}^{*} = I\lb\pi_{1}\ge \frac{1 - \alpha}{1 - \mu}\rb.\]
Now we investigate our method with canonical score $T_{i} = g(p_{i})$. Without loss of generality we assume $g([0,1]) \subseteq [0,1]$; otherwise we can compose it with a strictly monotone transformation to map it to the unit interval. It is not hard to see that 
\[F_{1}(t) = \pi_{1}\tau_{g1}(t), \quad F_{0}(t) = (1 - \pi_{1})\tau_{g0}(t), \quad F(t) = F_{0}(t) + F_{1}(t)\]
where
\[\tau_{g1}(t) = \P_{1}(g(p_{i})\le t), \quad \tau_{g0}(t) = \P_{0}(g(p_{i})\le t),\]
and 
\[H(t) = F_{0}(t) + \pi_{1}\mu_{g}(t)\]
where
\[\mu_{g}(t) = \E_{1}h(p_{i})I(g(p_{i})\le t).\]
By definition \eqref{eq:tstar},
\begin{align*}
  t^{*} &= \sup\{t\in [0, 1]: (1 - \pi_{1})\tau_{g0}(t) + \pi_{1}\mu_{g}(t)\le \alpha ((1 - \pi_{1})\tau_{g0}(t) + \pi_{1}\tau_{g1}(t))\}\\
& = \sup\{t\in [0, 1]: (1 - \alpha)(1 - \pi_{1})\tau_{g0}(t)+ \pi_{1}\mu_{g}(t) \le \alpha\pi_{1}\tau_{g1}(t)\}\\
& = \sup\left\{t\in [0, 1]: \pi_{1}\ge \frac{(1 - \alpha)\tau_{g0}(t)}{(1 - \alpha)\tau_{g0}(t) + \alpha\tau_{g1}(t) - \mu_{g}(t)}\right\}
\end{align*}
In the regime where the plain accumulation test has full power, i.e. $\pi_{1}\ge (1 - \alpha) / (1 - \mu)$ and $t_{\mathrm{AT}}^{*} = 1$, it is easy to see that our method has full power as well because $\tau_{g0}(1) = \tau_{g1}(1) = 1$ and $\mu_{g}(1) = \E_{1}h(p_{i}) = \mu$. In the regime where the plain accumulation test has zero power, i.e. $\pi_{1} < (1 - \alpha) / (1 - \mu)$ and $t_{\mathrm{AT}}^{*} = 0$, our method may still have non-zero power if $\tau_{g1}(t) >\!\!> \tau_{g0}(t), \mu_{g}(t)$ for some $t$. For instance, when $h(p) = I(p > \pth) / (1 - \pth)$ and $g(p) = \min\{p, \pth (1 - p) / (1 - \pth)\}$ as described in Subsection \ref{subapp:masking_functions}, 
\[\tau_{g0}(t) = \P_{0}(p_{i}\le t) + \P_{0}(p_{i}\ge 1 - (1 - \pth) t / \pth ) = t + \frac{1 - \pth}{\pth}t = \frac{t}\pth\]
and
\[\tau_{g1}(t) = \P_{1}(p_{i}\le t) + \P_{1}(p_{i}\ge 1 - (1 - \pth) t / \pth ) = G_{1}(t) + 1 - G_{1}\lb 1 - \frac{1 - \pth}{\pth}t\rb.\]
On the other hand, since $g(p)\le \pth$ for all $p\in [0, 1]$, we can assume $t\le \pth$ as well. Then
\begin{align*}
  \mu_{g}(t) &= \E_{1}h(p_{i})I(p_{i}\le t) + \E_{1}h(p_{i})I\lb p_{i}\ge 1 - \frac{1 - \pth}{\pth}t\rb\\
 & = \E_{1}h(p_{i})I\lb p_{i}\ge 1 - \frac{1 - \pth}{\pth}t\rb = \frac{1 - G_{1}\lb 1 - \frac{1 - \pth}{\pth}t\rb}{1 - \pth}.
\end{align*}
Assume that $G_{1}$ has density $g_{1}$. Then 
\begin{equation}
  \label{eq:preorderinglimit}
  \lim_{t\rightarrow 0}\frac{\tau_{g0}(t)}{t} = \frac{1}{\pth}, \quad \lim_{t\rightarrow 0}\frac{\tau_{g1}(t)}{t} = g_{1}(0) + \frac{1 - \pth}{\pth}g_{1}(1), \quad \lim_{t\rightarrow 0}\frac{\mu_{g}(t)}{t} = \frac{g_{1}(1)}{\pth}.
\end{equation}
Therefore, 
\begin{align*}
  \lim_{t\rightarrow 0}\frac{(1 - \alpha)\tau_{g0}(t)}{(1 - \alpha)\tau_{g0}(t) + \alpha\tau_{g1}(t) - \mu_{g}(t)} &= \frac{1 - \alpha}{1 - \alpha + \alpha (\pth g_{1}(0) + (1 - \pth)g_{1}(1)) - g_{1}(1)}.
\end{align*}
Thus if $g_{1}(0)$ is sufficiently large, this limit is below $\pi_{1}$, implying that $t^{*} > 0$ and hence a non-zero asymptotic power by Theorem \ref{thm:FDR_power}. In fact, $g_{1}(0) = \infty$ for both examples in Section \ref{subapp:EM}. 

\subsection{Analysis of our method with informative pre-ordering}
When the pre-ordering is indeed informative in the sense that $f(t)$ is decreasing with $f(1) < (1 - \alpha)/ (1 - \mu)$ so that $t_{\mathrm{AT}}^{*} > 0$. We show that using the masking function $g(p_{i})$ can still improve the power. For illustration, we consider the score 
\begin{equation}
  \label{eq:AT_dominate}
  T_{i} = \min\left\{\frac{i}{n}, \frac{g(p_{i})}{b}\right\}
\end{equation}
for some $b \ge 0$. The plain accumulation test is a special case with $b = 0$. Intuitively, this method not only rejects the first $\lfloor n t^{*}\rfloor$ hypotheses as in accumulation tests but also the remaining ones with tiny p-values. 

Before analyzing this procedure rigorously in theory, we illustrate it using a simple simulation. Suppose the p-values are computed from one-sided z-tests, i.e.
\[p_{i} = 1 - \Phi(z_{i})\]
where $\Phi$ is the cumulative distribution function of a standard normal distribution, and $z_{i}$'s are independently generated from normal mixture models with unit variance, i.e.
\[z_{i}\sim \pi_{i1}N(0, 1) + (1 - \pi_{i1})N(3, 1).\]
We consider the case where 
\[\pi_{i1} = Ci^{-1/2},\]
where $C$ is a constant governing the proportion of non-nulls. In this case, the ordering is fully informative as $\pi_{i1}$ is strictly decreasing. We simulate the \textFDP ~and the power for both accumulation tests with $h(p) = 2I(p\ge 0.5)$ and our method with the same accumulation function and scores \eqref{eq:AT_dominate} with $b = 0.01$. The number of hypotheses is chosen as $20000$ and the constant $C$ is chosen from $\{0.5, 1, 2, 5\}$. For each setting the \textFDP ~and the power are recorded for $10000$ independent replicates. The box-plots are displayed in Figure \ref{fig:AT_dominate}. The advantage of using masking functions is clear: it reduces the variability of \textFDP ~while enhances the power significantly.

\begin{figure}[htp]
  \centering
  \includegraphics[width = \textwidth]{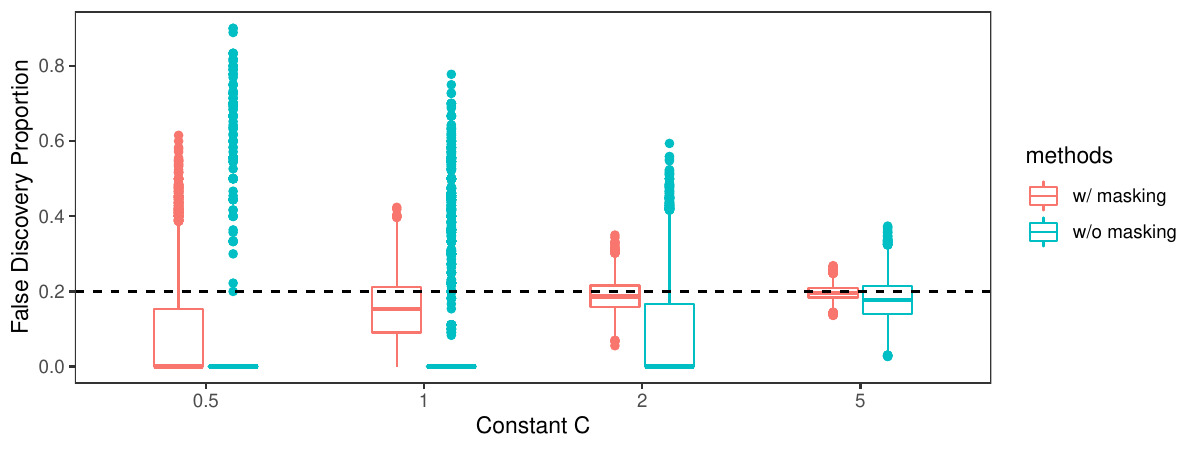}
  \includegraphics[width = \textwidth]{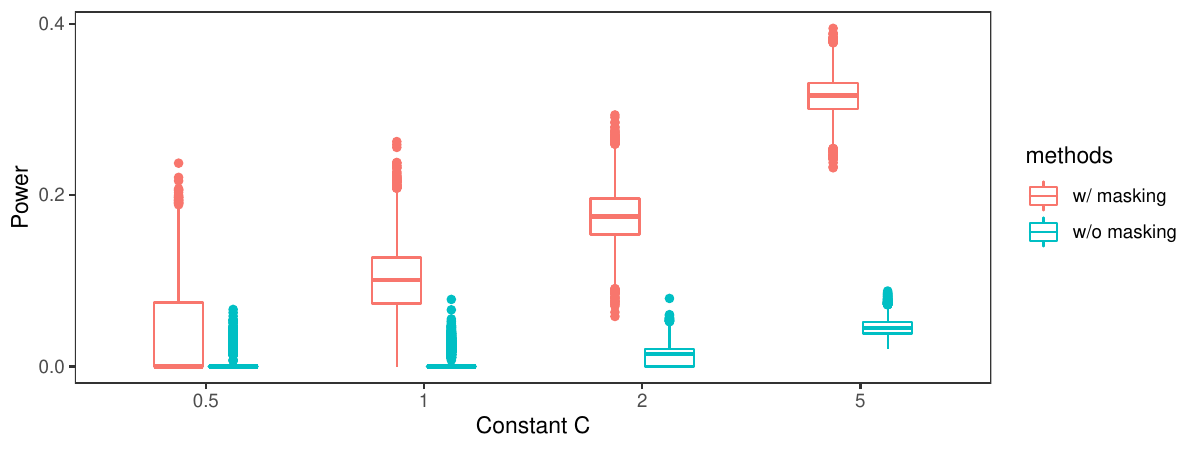}
  \caption{Simulation study comparing accumulation tests and our method using the score \eqref{eq:AT_dominate}.}\label{fig:AT_dominate}
\end{figure}

Now we show that for appropriately chosen positive $b$, the asymptotic power is higher. By Lemma \ref{lem:uniform}, 
\begin{align*}
  F_{1}(t) &= \lim_{n\rightarrow \infty}\frac{1}{n}\sum_{i=1}^{\lfloor nt\rfloor}I(h_{i} = 1) + \frac{1}{n}\sum_{i > \lfloor nt \rfloor}I(h_{i} = 1)\P_{1}(g(p_{i})\le bt)\\
& = tf(t) + (f(1) - tf(t))\tau_{g1}(bt),
\end{align*}
\begin{align*}
  F_{0}(t) &= \lim_{n\rightarrow \infty}\frac{1}{n}\sum_{i=1}^{\lfloor nt\rfloor}I(h_{i} = 0) + \frac{1}{n}\sum_{i > \lfloor nt \rfloor}I(h_{i} = 0)\P_{0}(g(p_{i})\le bt)\\
& = t(1 - f(t)) + (1 - f(1) - t(1 - f(t)))\tau_{g0}(bt),
\end{align*}
and 
\begin{align*}
  H(t)& = \lim_{n\rightarrow \infty}\frac{1}{n}\sum_{i=1}^{\lfloor nt\rfloor}I(h_{i} = 0)\E_{0}h(p_{i}) + \frac{1}{n}\sum_{i=1}^{\lfloor nt\rfloor}I(h_{i} = 1)\E_{1}h(p_{i})\\
&\quad  + \frac{1}{n}\sum_{i > \lfloor nt \rfloor}I(h_{i} = 0)\E_{0}h(p_{i})I(g(p_{i})\le bt)  + \frac{1}{n}\sum_{i > \lfloor nt \rfloor}I(h_{i} = 1)\E_{1}h(p_{i})I(g(p_{i})\le bt)\\
& = \lim_{n\rightarrow \infty}\frac{1}{n}\sum_{i=1}^{\lfloor nt\rfloor}I(h_{i} = 0) + \frac{1}{n}\sum_{i=1}^{\lfloor nt\rfloor}I(h_{i} = 1)\mu\\
&\quad  + \frac{1}{n}\sum_{i > \lfloor nt \rfloor}I(h_{i} = 0)\P_{0}(g(p_{i})\le bt)  + \frac{1}{n}\sum_{i > \lfloor nt \rfloor}I(h_{i} = 1)\E_{1}h(p_{i})I(g(p_{i})\le bt)\\
& = F_{0}(t) + tf(t)\mu + (f(1) - tf(t))\mu_{g}(bt),
\end{align*}
where the second equality uses the fact that $\E_{0} h(p_{i}) = 1, \E_{1} h(p_{i}) = \mu$ and 
\[\E_{0}h(p_{i})I(g(p_{i})\le bt) = \E_{0}[I(g(p_{i})\le bt)\E_{0}[h(p_{i})\mid g(p_{i})]] = \E_{0}[I(g(p_{i})\le bt)] \P_{0}(g(p_{i})\le bt).\]
By definition \eqref{eq:tstar},
\[t^{*} = \sup\left\{t: H(t) \le \alpha (F_{0}(t) + F_{1}(t))\right\}.\]
Note that for any given $t$, by definition this method rejects no less than the plain accumulation test with same $t$. For this reason, this method is more powerful asymptotically if $t^{*} > t_{\mathrm{AT}}^{*}$. If $\tau_{g1}, \tau_{g0}, \mu_{g}$ are all continuous, then it is left to show that
\[H(t_{\mathrm{AT}}^{*}) < \alpha (F_{0}(t_{\mathrm{AT}}^{*}) + F_{1}(t_{\mathrm{AT}}^{*})).\]
By some algebra and the fact that $f(t_{\mathrm{AT}}^{*}) = (1 - \alpha) / (1 - \mu)$, this is equivalent to
\[\frac{(1 - t_{\mathrm{AT}}^{*})\tau_{g0}(bt_{\mathrm{AT}}^{*}) + (f(1) - t_{\mathrm{AT}}^{*}f(t_{\mathrm{AT}}^{*}))(\mu_{g}(bt_{\mathrm{AT}}^{*}) - \tau_{g0}(bt_{\mathrm{AT}}^{*}))}{(1 - t_{\mathrm{AT}}^{*})\tau_{g0}(bt_{\mathrm{AT}}^{*}) + (f(1) - t_{\mathrm{AT}}^{*}f(t_{\mathrm{AT}}^{*}))(\tau_{g1}(bt_{\mathrm{AT}}^{*}) - \tau_{g0}(bt_{\mathrm{AT}}^{*}))} < \alpha.\]
For instance, when $h(p) = I(p > \pth) / (1 - \pth)$ and $g(p) = \min\{p, \pth (1 - p) / (1 - \pth)\}$ as in last subsection, by \eqref{eq:preorderinglimit},
\begin{align*}
  &\lim_{b\rightarrow 0}\frac{(1 - t_{\mathrm{AT}}^{*})\tau_{g0}(bt_{\mathrm{AT}}^{*}) + (f(1) - t_{\mathrm{AT}}^{*}f(t_{\mathrm{AT}}^{*}))(\mu_{g}(bt_{\mathrm{AT}}^{*}) - \tau_{g0}(bt_{\mathrm{AT}}^{*}))}{(1 - t_{\mathrm{AT}}^{*})\tau_{g0}(bt_{\mathrm{AT}}^{*}) + (f(1) - t_{\mathrm{AT}}^{*}f(t_{\mathrm{AT}}^{*}))(\tau_{g1}(bt_{\mathrm{AT}}^{*}) - \tau_{g0}(bt_{\mathrm{AT}}^{*}))} \\
& =  \frac{(1 - t_{\mathrm{AT}}^{*}) + (f(1) - t_{\mathrm{AT}}^{*}f(t_{\mathrm{AT}}^{*}))(g_{1}(1) - 1)}{(1 - t_{\mathrm{AT}}^{*}) + (f(1) - t_{\mathrm{AT}}^{*}f(t_{\mathrm{AT}}^{*}))(\pth g_{1}(0) + (1 - \pth)g_{1}(1) - 1)}.
\end{align*}
Thus the limit is strictly below $\alpha$ if $g_{1}(0)$ is sufficiently large. As commented at the end of last subsection, $g_{1}(0) = \infty$ in many applications. 

Therefore, even with an informative pre-ordering, using the masking function may further improve the power. This inspires an interesting question on how to combine the pre-ordering and the masked p-values in an optimal way to enhance power. However, this is beyond the main focus of this paper and we leave it to future research.

\subsection{Proofs}\label{subapp:power_proof}
\begin{proof}[of Lemma \ref{lem:thattstar}]
Let $H_{\alpha n}(t) = H_{n}(t) - \alpha F_{n}(t)$ and 
\[\delta_{n} = \sup_{t\in [0, 1]}|H_{\alpha n}(t) - H_{\alpha}(t)|.\]
By definition, 
\[\hat{t}_{n} = \sup\{t: H_{\alpha n}(t)\le 0\}, \quad t^{*} = \sup\{t: H_{\alpha}(t)\le 0\}.\]
Then \eqref{eq:uniform_convergence} implies that $\delta_{n}\stackrel{p}{\rightarrow}0$. We prove for each case separately.
  \begin{enumerate}
  \item For any $\eps > 0$, 
    \[\inf_{t > \eps}H_{\alpha n}(t) \ge\inf_{t > \eps} H_{\alpha}(t) - \delta_{n}.\]
By \eqref{eq:tstar0}, 
\[\P\lb \inf_{t > \eps}H_{\alpha n}(t) > 0\rb\rightarrow 1.\]
As a result,
\[\P(\hat{t}_{n}\le \eps)\rightarrow 1.\]
Since this holds for any $\eps$, we conclude that $\hat{t}_{n}\stackrel{p}{\rightarrow} 0$. 
  \item By \eqref{eq:uniform_convergence}, for any $m$
\[H_{\alpha n}(t_{m}) \le H_{\alpha}(t_{m}) + \delta_{n}.\]
By \eqref{eq:tstar1}, 
\[\P(H_{\alpha n}(t_{m})\le 0)\rightarrow 1.\]
This entails that
\[\P(\hat{t}_{n}\ge t_{m})\rightarrow 1.\]
Since this holds for all $m$ and $t_{m}\uparrow 1$, we arrive at $\hat{t}_{n}\stackrel{p}{\rightarrow} 1$.
\item Let $t_{m}'$ be any sequence that $t_{m}'\downarrow t^{*}$. For any $m$, 
\[H_{\alpha n}(t_{m})\le H_{\alpha}(t_{m}) + \delta_{n}\]
and 
\[\inf_{t > t_{m}'}H_{\alpha n}(t) > \inf_{t > t_{m}'}H_{\alpha}(t) - \delta_{n}.\]
By \eqref{eq:tstar0} and \eqref{eq:tstar1}, we have
\[\P\lb H_{\alpha n}(t_{m}) < 0, \,\, \inf_{t > t_{m}'}H_{\alpha n}(t) > 0\rb\rightarrow 1.\]
This implies that
\[\P(\hat{t}_{n}\in [t_{m}, t_{m}'])\rightarrow 1.\]
Since $t_{m}\uparrow t^{*}$ and $t_{m}'\downarrow t^{*}$, we conclude that $\hat{t}_{n}\stackrel{p}{\rightarrow}t^{*}$.
  \end{enumerate}
\end{proof}

\begin{proof}[of Theorem \ref{thm:FDR_power}]
Note that $F_{n0}(t), F_{n1}(t), F_{n}(t), F_{0}(t), F_{1}(t), F(t)$ are all non-decreasing functions. Let $t_{m}$ and $t_{m}'$ be any two sequences such that $t_{m}\uparrow t^{*}, t_{m}'\downarrow t^{*}$. Let $\cE_{n, m}$ denotes the event that $\hat{t}_{n}\in [t_{m}, t_{m}']$. Then Lemma \ref{lem:thattstar} implies that $\P(\cE_{n, m})\rightarrow 1$ for each $m$. 

First we prove that $\FDP_{n}\stackrel{p}{\rightarrow} F_{0}(t^{*}) / F(t^{*})$ when $t^{*} > 0$. Without loss of generality we assume that $t_{m} > 0$. On event $\cE_{n, m}$,
\[\FDP_{n}\in \left[\frac{F_{n0}(t_{m})}{F_{n}(t_{m}')}, \min\left\{1, \frac{F_{n0}(t_{m}')}{F_{n}(t_{m})}\right\}\right], \]
By \eqref{eq:Fn01}, Slusky's theorem and the fact that $F(t_{m}) > 0$, 
\[\frac{F_{n0}(t_{m})}{F_{n}(t_{m}')}\stackrel{p}{\rightarrow} \frac{F_{0}(t_{m})}{F(t_{m}')}, \quad \frac{F_{n0}(t_{m}')}{F_{n}(t_{m})}\stackrel{p}{\rightarrow} \frac{F_{0}(t_{m}')}{F(t_{m})}.\]
Thus, for any $\eps > 0$, 
\[\P\lb \FDP_{n}I_{\cE_{n, m}}\in \left[\frac{F_{0}(t_{m})}{F(t_{m}')} - \eps, \frac{F_{0}(t_{m}')}{F(t_{m})} + \eps\right]\rb\rightarrow 1.\]
Since $\P(\cE_{n, m})\rightarrow 1$, 
\[\P\lb \FDP_{n}\in \left[\frac{F_{0}(t_{m})}{F(t_{m}')} - \eps, \frac{F_{0}(t_{m}')}{F(t_{m})} + \eps\right]\rb\rightarrow 1.\]
Since $t_{m}\rightarrow t^{*}, t_{m}' \rightarrow t^{*}$ and $F_{0}, F$ are both continuous at $t^{*}$, 
\[\P\lb \bigg|\FDP_{n} - \frac{F_{0}(t^{*})}{F(t^{*})}\bigg|\le \eps\rb\rightarrow 0.\]
This holds for arbitrary $\eps > 0$. Thus,
\begin{equation}
  \label{eq:FDPn}
  \FDP_{n}\stackrel{p}{\rightarrow} \frac{F_{0}(t^{*})}{F(t^{*})}.
\end{equation}
Since $\FDP_{n}\in [0, 1]$, \eqref{eq:FDPn} implies the convergence in $L_{1}$, i.e.
\[\FDR_{n} = \E [\FDP_{n}]\rightarrow \frac{F_{0}(t^{*})}{F(t^{*})}.\]

For the asymptotic power, the monotonicity of $F_{n1}$ implies that 
\[\TPR_{n} \in \left[\frac{F_{n1}(t_{m})}{F_{n1}(1)}, \frac{F_{n1}(t_{m}')}{F_{n1}(1)}\right].\]
Using the same argument as above, we have
\[\TPR_{n}\stackrel{p}{\rightarrow} \frac{F_{1}(t^{*})}{F_{1}(1)}.\]
Since $\TPR_{n}\in [0, 1]$, this implies 
\[\Pow_{n}\rightarrow \frac{F_{1}(t^{*})}{F_{1}(1)}.\]
\end{proof}

\begin{proof}[of Lemma \ref{lem:uniform}]
Since $p_{i}$'s are independent, $T_{i}$'s are independent. Note that $F_{n0}(t), F_{n1}(t), F_{n}(t), H_{n}(t)$ can be in the form of 
\[\frac{1}{n}\sum_{i=1}^{n}m_{i}(p_{i})I(T_{i}\le t).\]
for some deterministic bounded functions $m_{1}, \ldots, m_{n}$. Let $B$ denote the bound of $m_{i}$'s. Then $B = 1$ for $F_{n0}(t), F_{n1}(t), F_{n}(t)$ and $B = h(1)$ for $H_{n}(t)$. Let $f_{i}(p_{i}; t) = m_{i}(p_{i})I(T_{i}\le t)$. Then $F_{i}\equiv B$ is an upper envelop of $f_{i}$. Also, for any given $(p_{1}, \ldots, p_{n})$, by Sauer's lemma,
\[\#\{(f_{1}(p_{1}; t), \ldots, f_{n}(p_{n}; t)): t\in [0, 1]\} = \#\{(I(T_{1}\le t), \ldots, I(T_{n}\le t)): t\in [0, 1]\}\le n + 1.\]
This implies that
\[\log \#\{(f_{1}(p_{1}; t), \ldots, f_{n}(p_{n}; t)): t\in [0, 1]\} = O(\log n) = o(n).\]
By Theorem 8.2 of \cite{pollard1990empirical}, 
\[\sup_{t\in [0, 1]}\bigg|\frac{1}{n}\sum_{i=1}^{n}(f_{i}(p_{i}, t) - \E f_{i}(p_{i}, t))\bigg| = o_{p}(1).\]
It is easy to compute $\E F_{n0}(t), \E F_{n1}(t)$ and $\E F_{n}(t)$. For $\E H_{n}(t)$, recalling that $\E_{0} [h(p_{i})\mid g(p_{i})] \le 1$ almost surely and $T_{i}$ depends on $p_{i}$ through $g(p_{i})$,
\begin{align*}
  \E H_{n}(t) &= \frac{1}{n}\sum_{i=1}^{n}\E [h(p_{i})I(T_{i}\le t)]\\ 
& = \frac{1}{n}\sum_{i=1}^{n}I(h_{i} = 0)\E_{0} [h(p_{i})I(T_{i}\le t)] + \frac{1}{n}\sum_{i=1}^{n}I(h_{i} = 1)\E_{1} [h(p_{i})I(T_{i}\le t)]\\
& = \frac{1}{n}\sum_{i=1}^{n}I(h_{i} = 0)\P_{0}(T_{i}\le t) + \frac{1}{n}\sum_{i=1}^{n}I(h_{i} = 1)\E_{1} [h(p_{i})I(T_{i}\le t)]
\end{align*}
\end{proof}

\end{document}